\documentclass[journal,twoside,web]{ieeecolor}

\pdfminorversion=4
\usepackage{generic}
\usepackage{cite}
\usepackage{amsmath,amssymb,amsfonts}
\usepackage{graphicx}
\usepackage{textcomp}
\usepackage{common_pkgs}
\def\BibTeX{{\rm B\kern-.05em{\sc i\kern-.025em b}\kern-.08em
    T\kern-.1667em\lower.7ex\hbox{E}\kern-.125emX}}
\markboth{\journalname, VOL. XX, NO. XX, XXXX 2021}
{Liu \MakeLowercase{\textit{et al.}}: Quantifying the Value of Preview Information for Safety Control}
\begin{document}
\title{Quantifying the Value of Preview Information for Safety Control}
\author{Zexiang Liu, \IEEEmembership{Student Member, IEEE}, Necmiye Ozay, \IEEEmembership{Senior Member, IEEE}
\thanks{This research was supported by NSF grant CNS-1931982.
Zexiang Liu and Necmiye Ozay are with the Department of Electrical Engineering and Computer Science at University of Michigan, Ann Arbor, MI 48109 USA (e-mail: \{zexiang, necmiye\}@umich.edu).}
}

\maketitle

\begin{abstract}

Safety-critical systems, such as autonomous vehicles, often incorporate perception modules that can anticipate upcoming disturbances to system dynamics, expecting that such preview information can improve the performance and safety of the system in complex and uncertain environments. 
However, there is a lack of formal analysis of the impact of preview information on safety. In this work, we introduce a notion of safety regret, a properly defined difference between the maximal invariant set of a system with finite preview and that of a system with infinite preview, and show that this quantity corresponding to finite-step preview decays exponentially with preview horizon. Furthermore, algorithms are developed to numerically evaluate the safety regret of the system for different preview horizons. Finally, we demonstrate the established theory and algorithms via multiple examples from different application domains. 
\end{abstract}

\begin{IEEEkeywords}
TBD
\end{IEEEkeywords}

\section{Introduction}
\label{sec:introduction}
The idea of incorporating preview information into controller design has been explored extensively in the past\cite{birla2015optimal}, with many applications to real-world systems, such as autonomous vehicles\cite{liu2019safety, xu2019design, liu2021value}, power systems\cite{sinner2021experimental} and humanoid robots\cite{kajita2003biped}. Compared with purely state or output feedback control mechanisms, preview-based control allows feedforward control based on the available information about the upcoming values of uncertainties affecting the system, and thus can substantially improve the control performance\cite{birla2015optimal}. In the literature, there are several types of uncertainty that are considered as ``previewable". The first type is disturbance or uncontrolled input to the dynamical systems\cite{liu2021value, yu2020power}. Examples include the road curvature for vehicles\cite{xu2019design}, or future wind velocity for wind turbines\cite{sinner2021experimental}, which can be obtained by dedicated perception systems\cite{scholbrock2016lidar, li2020lidar}. The second type of uncertainty is the reference signal in a reference tracking task\cite{tomizuka1975optimal,carrasco2011feedforward}. Different from disturbances, since the reference signal is commonly generated by other algorithmic components, such as a path planner, its future values can be naturally obtained by the tracking controller at run time. 
The third type of previewable uncertainty, recently studied in the context of online optimal control, is the unknown parameters in the optimization problem solved by the controller at run time\cite{li2019online, shin2021controllability}. 
In many cases, these different types of uncertainties (with the corresponding preview information) can be converted to one another. For instance, unknown reference signals are modeled as disturbances in \cite{yu2020power} and as unknown parameters in the cost function in \cite{li2019online}.
Due to this reason,  in this work, we focus on the preview on disturbances in dynamical systems, and
 we demonstrate later how our results can be applied to preview on reference signals with an example.

In this work, we are interested in safety control for systems with preview information. The goal of safety control is to synthesize controllers that can guarantee that the closed-loop system satisfies given safety requirements indefinitely, robust to some degree of uncertainty in the dynamics\cite{bertsekas1972infinite, aubin1990survey, liu2019safety, liu2021value, liu2020scalable}.  In the context of model predictive control (MPC), this is related to the notion recursive feasibility of the online optimization problem solved at each time step so that constraints are satisfied indefinitely \cite{marruedo2002stability, mayne2014model}.
For systems with preview, safe controllers are allowed to use feedback on the states and feedforward on the preview information. 
An important question is if a safe controller should utilize all the available preview information. 
In theory, the more preview information a controller utilizes, the safer the system may become.  
This is verified by numerical examples in our previous works\cite{liu2019safety, liu2021value}, where incorporating more preview information enlarges (i) the region of states where the safety requirements can be enforced or (ii) the range of disturbances the system can tolerate under the safety constraints.   
However, in practice, it is often computationally intractable to incorporate all the available preview information since many safety control algorithms (even the most scalable ones) suffer from the curse of dimensionality\cite{wintenberg2020implicit, herbert2021scalable, brunke2022safe, gurriet2019scalable}.
In those cases, one needs to carefully select the amount of preview information fed into the safe controller, for a good balance between the computational cost and the degradation in safety due to omitting part of the preview information. 

Motivated by this need, in this work, we study how the safety of a system is impacted by the amount of preview. 
Since we focus on the preview of disturbances, in our analysis, the amount of preview information is dictated by the number of time steps that the future disturbances can be previewed, which we call the \emph{preview horizon}.
We measure the safety of the same system with different preview horizons by a notion called \emph{safety regret}, defined based on the maximal robust control invariant set (RCIS) of an auxiliary system that augments the original states with the preview information. Given a preview horizon $p$, the corresponding safety regret reflects the room to improve the system safety as we increase the preview horizon from $p$ to infinity. The main contributions of this work include: 
\begin{itemize}
	\item We provide novel outer approximations of the maximal RCIS for nonlinear systems augmented with preview information, by exploring a duality between preview and input delay (Section~\ref{sec:struct}).
	\item For linear systems, we prove that the safety regret of a finite-step preview decays exponentially fast with the preview horizon. For polytopic state-input constraints, we further develop algorithms that compute upper bounds of the safety regret (Section~\ref{sec:convergence}).
	\item  {We extend our analysis to show how the preview horizon affects the feasible domain of preview-based model predictive controllers under different recursive feasibility constraints} (Section~\ref{sec:mpc}).
	\item We demonstrate the usage of our theoretical results and the proposed algorithms with both analytical and numerical examples (Section~\ref{sec:ex}).
\end{itemize}

\subsection{Related Works}

The value of preview information has been studied under different control frameworks. In reactive synthesis, \cite{kupferman2011synthesis, holtmann2010degrees, klein2015much} study the necessary conditions on the preview horizon under which there exists a controller that realizes a Linear Temporal Logic (LTL) specification. However, these results are only for finite-state transition systems, not for systems with continuous state space studied in this work. In the optimization-based control framework,  \cite{tomizuka1975optimal, shin2021controllability, li2019online, yu2020power} show that the sensitivity of the optimal solution with respect to the preview information \cite{tomizuka1975optimal, shin2021controllability} or the dynamic regret of the optimal controllers parameterized by the preview information \cite{li2019online, yu2020power} decay exponentially with the preview horizon. Since the class of cost functions considered by \cite{tomizuka1975optimal, shin2021controllability, li2019online, yu2020power} cannot characterize state constraints, those results cannot be applied to safety control problems.
For safety control, our previous work \cite{liu2021value} studies the structural properties of RCISs for systems augmented with preview information, which forms a foundation for the theory developed in this work. Its connection to this work will be explained in detail in Section \ref{sec:struct}.

\textbf{Notation:} For vectors $x_1\in \R^{n}$ and $x_2\in \R^{m}$, $(x_1,x_2)\in\R^{n+m}$ denotes the concatenation of them.  For a sequence of indexed vectors $(d_{i})_{i=1}^{p}$, we denote the concatenation $(d_1,d_2, \cdots, d_{p})$ by $d_{1:p}$ for short. For a polytope $\mathcal{P}= \{x \mid Hx\leq h\} $, the $H$-representation of $\mathcal{P}$ is the matrix $[H\ h]$. Given a non-negative scalar $ \lambda \geq 0$ and a set $X$, $ \lambda X = \{ \lambda x \mid  x\in X\}$. The projection of a set  $X$ in $ \R^{n+m}$ onto the first $n$ coordinates is denoted by ${\Proj{n}(X) = \{x_1\in \R^{n} \mid  (x_1,x_2)\in X\} }$. {For two subsets $X$ and $Y$ of $\R^n$, their sum is defined by $X+Y = \{x+y \mid x\in X, y\in Y\}$.} For a positive integer $q$, the set $\{1,2, \cdots, q\} $ is denoted by $[q]$. The unit hypercube in $ \R^{n}$ is denoted by $ \mathcal{B}(n) = [-1,1]^{n}.$ The Hausdorff distance between two sets $X$ and $Y$ in $\R^{n}$ is induced from $2$-norm in $\R^{n}$ and is denoted by $d(X,Y)$. Given a compact set $X$ in $\R^{n}$, the radius of $X$ with respect to a point $x_0$ is defined by $\psi(X) = \sup_{x\in X} \Vert x -x_0\Vert_{2}$.

\section{Preliminaries} \label{sec:prelim} 
\subsection{Discrete-time systems with preview} \label{sec:def} 
A discrete-time dynamical system $ \Sigma $ is in form
\begin{align} \label{eqn:sys} 
\Sigma: x(t+1) = f(x(t),u(t),d(t)),
\end{align}
with state $ x\in \R^n $, input $u\in \R^{m}$ and disturbance ${d\in D \subseteq \R^{l}}$. The disturbance set $D$ is assumed to be compact. A system $\Sigma$ is said to have \emph{$p$-step preview} if at each time step  $t$, the controller has access to the measurements of
\begin{itemize}
	\item the current state $x(t)$, and
	\item the current and incoming disturbances $\{d(t+k)\}_{k=0}^{p-1}$ in $p$ steps, denoted by $d_{1:p}(t)$ for short.
\end{itemize}
For a system $\Sigma$ with $p$-step preview, we define an augmented system whose states stack the states $x(t)$ and the previewed disturbances $d_{1:p}(t)$ of $\Sigma$ together, called the $p$-augmented system $\Sigma_{p}$ of $\Sigma$. The $p$-augmented system $\Sigma_{p}$ has the dynamics
\begin{align} \label{eqn:sys_p} 
\Sigma_{p}: \begin{bmatrix}
x(t+1)\\d_1(t+1)\\\vdots \\ d_{p-1}(t+1)	\\d_{p}(t+1)
\end{bmatrix}  = \begin{bmatrix}
f(x(t),u(t),d_1(t))\\
d_2(t)\\ 
\vdots\\
d_{p}(t)\\
d(t) 
\end{bmatrix}
\end{align}
with state $(x,d_1, \cdots, d_{p})\in \R^{n+pl}$, input $u\in \R^{m}$ and disturbance $d\in D \subseteq \R^{l}$. 
Any preview-based controller of $\Sigma$ with feedback on the state $x(t)$  and feedforward on the previewed disturbances $d_{1:p}(t)$ is equivalent to a state-feedback controller of the $p$-augmented system $\Sigma_{p}$. 
Due to this equivalence relation, we study the safety control for the system $\Sigma$ with $p$-step preview by simply studying the state-feedback safety control for its $p$-augmented system $\Sigma_{p}$. 

\subsection{Robust controlled invariant sets} \label{sec:inv} 
In this subsection, we first define RCISs for discrete-time systems in the form \eqref{eqn:sys}, and then focus on the RCISs of the $p$-augmented system $\Sigma_{p}$. 

Let $S_{xu} \subseteq \R^{n+m}$ be the \emph{safe set} of $\Sigma$, namely the set of allowable state-input pairs. 
Given an initial state $x_0$, we call a state-feedback controller ${\cont_{safe}: \R^{n} \rightarrow \R^{m}}$ a \emph{safe controller} if any state-input trajectory of the system  $\Sigma$ under the control of $\cont_{safe}$ with the initial state $x_0$ stays within the safe set $S_{xu}$ indefinitely. 
\begin{definition} \label{def:rcis} 
	A set $C \subseteq \R^{n}$ is a \emph{robust controlled invariant set} (RCIS) of the system $\Sigma$ in safe set $S_{xu} \subseteq \R^{n+m}$ if for all $x\in C$, there exists some $u\in \R^{m}$ such that $(x,u)\in S_{xu}$ and for all $d\in D$, $f(x,u,d)\in C$.  An RCIS $C_{max}$ is the \emph{maximal RCIS} of $\Sigma$ in $S_{xu}$ if $C_{max}$ contains all the RCISs of $\Sigma$ in $S_{xu}$.
\end{definition}

The maximal RCIS $C_{max}$ is {well-defined due to} the fact that the union of RCISs is still an RCIS.
According to Definition \ref{def:rcis}, it can be shown that the maximal RCIS in $S_{xu}$ characterizes all the safe controllers of the system in the following sense:

(i)  the maximal RCIS $C_{max}$ of $\Sigma$ in $S_{xu}$ contains all the initial states $x_0$ where a safe controller exists;  

(ii) for any initial state $x_0\in C_{max}$, a state-feedback controller $\cont:\R^{n} \rightarrow \R^{m}	$ is a safe controller if and only if $(x,\cont(x))\in S_{xu}$ and $f(x,\cont(x),D) \subseteq C_{max}$ for all states $x$ reachable from $x_0$ {with the controller} $\cont$.  

Due to the importance of the maximal RCIS in safety control, in this work, we analyze the impact of preview on the safety of the system $\Sigma$ by studying how the maximal RCIS varies with different preview horizon. More specifically, given the safe set $S_{xu}$ of $\Sigma$, we define the safe set $S_{xu,p}$ of the $p$-augmented system $\Sigma_{p}$ of $\Sigma$ by 
\begin{align} \label{eqn:safe_set_p} 
S_{xu,p}= \{ (x,d_{1:p},u) \mid (x,u)\in S_{xu}, d_{1:p}\in D^{p}\}.
\end{align}
Intuitively, the safe set $S_{xu,p}$ imposes the same constraints on $(x,u)$ as in $S_{xu}$ and does not constrain the previewed disturbances $d_{1:p}$ since we cannot control the disturbance inputs. 

We denote the maximal RCIS of the system $\Sigma_{p}$ in the safe set $S_{xu,p}$ by $C_{max,p}$. The shape and dimension of $C_{max,p}$ vary with the preview horizon $p$.
Since $C_{max,p}$ characterizes all the safe controllers of $\Sigma_{p}$, the variation of $C_{max,p}$ with the preview horizon $p$ reflects the impact of different preview horizons on the safety of $\Sigma$. 
In this work, we analyze how $C_{max,p}$  varies with different preview horizons $p$, and its implication to the safety of the system. 
Note that it is intractable to compute $C_{max,p}$ for all $p$ and then compare their difference. 
Later we introduce a quantity called \emph{safety regret} to evaluate the variation of $C_{max,p}$ with the preview horizon $p$. We show that it is possible to estimate the safety regret without computing $C_{max,p}$, by exploiting the structure of the augmented system $\Sigma_{p}$ and its safe set $S_{xu,p}$. 

\subsection{Necessary definitions}
The following definitions are essential for the proceeding theoretical and algorithmic discussion.
\begin{definition}[{$Pre_{SF}(X)$ in \cite{blanchini2008set}}]  \label{def:pre} 
	Given a set $X \subseteq \R^{n}$, the \emph{one-step backward reachable set} $Pre_{\Sigma}(X, S_{xu})$ of $X$ for the system $\Sigma$ constrained in the safe set $S_{xu}$ is defined as
	\begin{align}
	\begin{split}
	&Pre_{\Sigma}(X,S_{xu}) =\\
	&\{x \mid \exists u,  (x,u)\in S_{xu}, f(x,u,d) \in X, \forall d\in D\}. 
	\end{split}
	\end{align}
\end{definition}
 Given Definition \ref{def:pre}, the \emph{$k$-step backward reachable set} ${Pre_{\Sigma}^{k}(X,S_{xu})}$ is defined recursively by
\begin{align}
Pre_{\Sigma}^{1}(X,S_{xu}) &= Pre_{\Sigma}(X,S_{xu}),\\
Pre_{\Sigma}^{k}(X,S_{xu})&= Pre_{\Sigma}^{k-1}(X,S_{xu}) \label{eqn:pre_k} .
\end{align}

The definitions of RCIS and $k$-step backward reachable set are also valid for systems without disturbances, that is the disturbance set $D=\{0\}$. In that case, an RCIS is called a \emph{controlled invariant set  (CIS)}. In addition, for systems without disturbance, we have the following definition.  
\begin{definition}\label{def:N-step-lambda}
	A set $X$ is an \emph{$N$-step $ \lambda$-contractive set} of the system $\Sigma$ in the safe set $S_{xu}$ if $X$ satisfies
	\begin{align}
X \subseteq	Pre_{\Sigma}^{N} ( \lambda X, S_{xu}).
	\end{align}
\end{definition}
 {For $N=1$, we call $X$ a $\lambda$-contractive set by convention\cite{blanchini1994ultimate,blanchini2008set}.} 
Intuitively, $X$ is $N$-step $ \lambda$-contractive if we can steer the system from any state in $X$ to a state in $ \lambda X$ in $N$ steps while keeping the state-input trajectory in the safe set. Note that an $N$-step $ \lambda$-contractive set is not necessarily a CIS. 

\section{The structure of the maximal RCIS for systems with preview} \label{sec:struct} 
In this section, we present several structural properties of the maximal RCIS $C_{max,p}$ of the  $p$-augmented system, which allow us to approximate $C_{max,p}$ without computing it, and pave the way to our analysis in the following sections.

\subsection{Inner and outer bounds of the maximal RCIS $C_{max,p}$} \label{sec:bnds} 
We briefly summarize the inner and outer bounds of the maximal RCIS $C_{max,p}$ derived in the previous work \cite{liu2021value}.  Those bounds are useful when the actual representation of $C_{max,p}$ is difficult to obtain, and will be crucial in our analysis of the variation of $C_{max,p}$ with different $p$. 
 
To have an outer bound of the maximal RCIS $C_{max,p}$, we introduce an auxiliary system: Given a system $\Sigma$ and a safe set $S_{xu}$, we define the \emph{disturbance collaborative system} $\mathcal{D}(\Sigma)$ of $\Sigma$ by
\begin{align} \label{eqn:sys_dis} 
   \mathcal{D}(\Sigma): x(t+1) = f(x(t),u(t),u_{d}(t)),
\end{align}
with state $x\in \R^{n}$ and two inputs $u\in \R^{m}$, $u_{d}\in \R^{l}$.
The safe set $S_{xu,co}$ of $\mathcal{D}(\Sigma)$ is $S_{xu}\times D$, that is 
\begin{align}
    S_{xu,co} = \{(x,u,u_{d}) |(x,u)\in S_{xu},u_{d}\in D \}. 
\end{align}
We denote the maximal CIS\footnote{The set $C_{max,co}$ is a CIS instead of an RCIS since $\mathcal{D}(\Sigma)$ has no disturbance.} of $\mathcal{D}(\Sigma)$ in $S_{xu,co}$ by $C_{max,co}$.  
The only difference between the original system $\Sigma$ and the corresponding $\mathcal{D}(\Sigma)$ is that we turn the disturbance term $d$ in $\Sigma$ into the second control input $u_{d}$ in $\mathcal{D}(\Sigma)$. Due to the extra control authority introduced by $u_{d}$, the maximal CIS $C_{max,co}$ of $\mathcal{D}(\Sigma)$ provides an outer bound of the maximal RCIS $C_{max,p}$ for all $p$. This outer bound along with an inner bound of $C_{max,p}$ are stated in the following theorem.  

\begin{theorem}[\cite{liu2021value}, Theorems 1 and 2]  \label{thm:bnds} 
For a system $\Sigma$ with $p$-step preview, the maximal RCIS $C_{max,p}$ of $\Sigma_{p}$ in $S_{}$ is inner approximated by $C_{max,p'}\times D^{p-p'}$ for any $p'<p$, and is outer approximated by the Cartesian product $C_{max,co}\times D^{p}$. That is,
\begin{align} \label{eqn:bnds} 
C_{max,p'}\times D^{p-p'} \subseteq C_{max,p} \subseteq C_{max,co}\times D^{p}.
\end{align}
Furthermore,  the inner bound $C_{max,p'}\times D^{p-p'}$ is an RCIS of $\Sigma_{p}$ in $S_{xu,p}$.
\end{theorem}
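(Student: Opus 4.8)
The plan is to establish the three assertions in turn, noting first that the inner inclusion is a consequence of the closing ``furthermore'' claim: once we show that $C_{max,p'}\times D^{p-p'}$ is an RCIS of $\Sigma_p$ in $S_{xu,p}$, the inclusion $C_{max,p'}\times D^{p-p'}\subseteq C_{max,p}$ follows immediately, since $C_{max,p}$ contains every RCIS in $S_{xu,p}$ (the union of RCISs is an RCIS). So there are really two things to prove: that $C_{max,p'}\times D^{p-p'}$ is an RCIS, and the outer inclusion $C_{max,p}\subseteq C_{max,co}\times D^{p}$.

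For the RCIS claim I would verify Definition~\ref{def:rcis} directly through a coordinate-shift argument. Fix $(x,d_{1:p})\in C_{max,p'}\times D^{p-p'}$, so that $(x,d_{1:p'})\in C_{max,p'}$ and $d_{p'+1},\dots,d_p\in D$. Since $C_{max,p'}$ is an RCIS of $\Sigma_{p'}$ in $S_{xu,p'}$, there is an input $u$ with $(x,u)\in S_{xu}$ such that, for every incoming disturbance in $D$, the $\Sigma_{p'}$-successor of $(x,d_{1:p'})$ remains in $C_{max,p'}$. I apply this same $u$ in $\Sigma_p$ and compute the successor of $(x,d_{1:p})$ under an arbitrary incoming $d'\in D$, namely $(f(x,u,d_1),d_2,\dots,d_p,d')$. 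Its first $n+p'l$ coordinates are exactly $(f(x,u,d_1),d_2,\dots,d_{p'+1})$, which is precisely the $\Sigma_{p'}$-successor of $(x,d_{1:p'})$ obtained when the fresh incoming disturbance is taken to be $d_{p'+1}\in D$; hence these coordinates lie in $C_{max,p'}$. The remaining $p-p'$ coordinates $(d_{p'+2},\dots,d_p,d')$ all lie in $D$. Thus the successor lies in $C_{max,p'}\times D^{p-p'}$ for every $d'\in D$, and $(x,u)\in S_{xu}$ gives $(x,d_{1:p},u)\in S_{xu,p}$, so the RCIS condition holds. (One could instead prove the case $p'=p-1$ and iterate; the direct version avoids the induction.)

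For the outer bound the clean route is to show that the projection $\Proj{n}(C_{max,p})$ onto the state coordinates is a CIS of the disturbance-collaborative system $\mathcal{D}(\Sigma)$ in $S_{xu,co}$; maximality of $C_{max,co}$ then gives $\Proj{n}(C_{max,p})\subseteq C_{max,co}$. Given $x\in\Proj{n}(C_{max,p})$, choose $d_{1:p}$ with $(x,d_{1:p})\in C_{max,p}$, and observe that $d_1\in D$ because membership in $S_{xu,p}$ forces $d_{1:p}\in D^{p}$. The RCIS property supplies an input $u$ with $(x,u)\in S_{xu}$ that drives $(x,d_{1:p})$ into $C_{max,p}$ for all incoming disturbances; fixing any $d'\in D$ (possible since $D$ is nonempty), the successor's state component $f(x,u,d_1)$ lies in $\Proj{n}(C_{max,p})$. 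Interpreting the previewed value $d_1$ as the collaborative input $u_d$, the pair $(u,u_d)=(u,d_1)$ satisfies $(x,u,u_d)\in S_{xu,co}$ and $f(x,u,u_d)\in\Proj{n}(C_{max,p})$, which is exactly the CIS condition for $\mathcal{D}(\Sigma)$. Since every point of $C_{max,p}$ already has $d_{1:p}\in D^{p}$, combining the two facts yields $C_{max,p}\subseteq C_{max,co}\times D^{p}$.

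The bookkeeping in the inner-bound step --- matching the shifted disturbance coordinates of $\Sigma_p$ against the fresh incoming-disturbance slot of $\Sigma_{p'}$, and checking that the index $d_{p'+1}$ used to invoke the RCIS property indeed lies in $D$ --- is the step most prone to off-by-one errors and is where I would be most careful; conceptually it is routine. In the outer bound the only genuine idea is to recognize that the one-step-ahead previewed disturbance $d_1$ can serve as the free collaborative input $u_d$ of $\mathcal{D}(\Sigma)$; the extra control authority granted by $u_d$ is exactly what makes $C_{max,co}$ an outer bound, and compactness (hence nonemptiness) of $D$ enters only to select a witness $d'$ when projecting the successor.
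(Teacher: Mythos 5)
Your proof is correct and takes essentially the approach the paper intends (and attributes to \cite{liu2021value}): the inner bound by lifting an RCIS of $\Sigma_{p'}$ to $\Sigma_p$ and invoking maximality of $C_{max,p}$, and the outer bound by exploiting the extra control authority of $\mathcal{D}(\Sigma)$, with the previewed $d_1$ reused as the collaborative input $u_d$. Indeed, your projection argument showing $\Proj{n}(C_{max,p})$ is a CIS of $\mathcal{D}(\Sigma)$ via the choice $(u,u_d)=(u,d_1)$ is exactly the device the paper itself employs in the proof of Lemma \ref{lem:forced_eq}, so nothing needs to be fixed.
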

A formal proof of Theorem \ref{thm:bnds} can be found in \cite{liu2021value}. As discussed, the outer bound is thanks to the extra control authority.
The inner bound is based on the intuition that a longer preview horizon $p$ should make the maximal RCIS $C_{max,p}$ larger. However, this intuition is not fully correct. Since the dimensionality of $\Sigma_{p}$ grows with $p$, the maximal RCIS $C_{max,p}$ lies in a different space for each different $p$ and cannot be compared with each other directly. It turns out that for a shorter preview horizon $p' < p$, we can always lift an RCIS  $C_{p'}$ of $\Sigma_{p'}$ to the RCIS $C_{p'}\times D^{p-p'}$  of $\Sigma_{p}$ and then compare the lifted set with $C_{max,p}$. The difference between the lifted RCIS $C_{max,p'}\times D^{p-p'}$ and the maximal RCIS $C_{max,p}$ tells how much we gain in safety by increasing the preview horizon from $p'$ to $p$.

\subsection{Improved outer bounds for the maximal RCIS $C_{max,p}$}
The outer bound of $C_{max,p}$ in Theorem \ref{thm:bnds} is based on the maximal CIS $C_{max,co}$ of $\mathcal{D}(\Sigma)$ of $\Sigma$. 
Since $C_{max,co}$ is independent of the preview horizon $p$, this outer bound is not necessarily tight. 
 In this subsection, we derive tighter outer bounds of $C_{max,p}$, by exploring a duality between delay and preview.

Apart from the disturbance collaborative system $\mathcal{D}(\Sigma)$, we can also define the disturbance collaborative system $\mathcal{D}(\Sigma_{p})$ of the $p$-augmented system $\Sigma_{p}$.
Formally, the system $\mathcal{D}(\Sigma_{p})$ is in form
\begin{align}
	\mathcal{D}(\Sigma_{p}):	\begin{bmatrix}
		x(t+1)\\
		d_1(t+1)\\
		\vdots\\
		d_{p-1}(t+1)\\
		d_{p}(t+1)
	\end{bmatrix} = 
	\begin{bmatrix}
		f(x(t),u(t),d_{1}(t))\\
		d_2(t)\\
		\vdots\\
		d_{p}(t)\\
		u_{d}(t)
\end{bmatrix} \end{align} 
with control inputs $u\in \R^{n}$ and $u_{d}\in \R^{l}$. 
Let the safe set of the system $\mathcal{D}(\Sigma_{p})$ be ${S_{xu,p}\times D}$. We denote the maximal CIS of the system $\mathcal{D}(\Sigma_{p})$ in safe set $S_{xu,p}\times D$ by $C_{max,p,co}$.  When $p=0$, $C_{max,0,co}= C_{max,co}$.

Then, for any $p'\leq p$, the system $\Sigma_{p}$ can be viewed as the $(p-p')$-augmented system of the system $\Sigma_{p'}$. By applying Theorem \ref{thm:bnds} to $\Sigma_{p'}$, we have the following corollary. 
\begin{corollary} \label{cor:tight_outer_bnd} 
    For a system $\Sigma$ with $p$-step preview and any non-negative integer $p'\leq p$, the maximal RCIS $C_{max,p}$ is outer approximated by the Cartesian product $C_{max,p',co}\times D^{p-p'}$. That is,
	\begin{equation} \label{eqn:tight_outer_bnd} 
	    C_{max,p} \subseteq C_{max,p',co}\times D^{p-p'}.
	\end{equation}
\end{corollary}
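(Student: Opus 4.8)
The plan is to reduce the corollary to the outer bound already established in Theorem \ref{thm:bnds}, by recognizing that $\Sigma_p$ is itself an augmented system built on top of $\Sigma_{p'}$ rather than on top of $\Sigma$. This is exactly the observation flagged in the paragraph preceding the statement, so the whole proof is a matter of making that identification precise and then quoting the theorem.

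First I would verify the claim that, for $p'\le p$, the $p$-augmented system $\Sigma_p$ coincides with the $(p-p')$-augmented system of $\Sigma_{p'}$. I treat $\Sigma_{p'}$ as the base system, whose state is $\xi = (x,d_1,\ldots,d_{p'})$ and whose disturbance is the single incoming value $d\in D$. Augmenting $\Sigma_{p'}$ with $p-p'$ steps of preview stacks fresh previewed disturbances $\tilde d_1,\ldots,\tilde d_{p-p'}$ onto $\xi$, feeds $\tilde d_1$ into the dynamics of $\Sigma_{p'}$ in place of its disturbance, shifts the $\tilde d_i$ forward, and loads the new disturbance $d$ into $\tilde d_{p-p'}$. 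Writing out these dynamics and relabeling $\tilde d_i \mapsto d_{p'+i}$ reproduces exactly the shift-register dynamics of $\Sigma_p$ in \eqref{eqn:sys_p}. The content here is simply that both systems implement the same delay line of length $p$ feeding $f$; only a routine index-chasing check is required.

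Next I would confirm that the corresponding safe sets agree. Applying the augmented-safe-set definition \eqref{eqn:safe_set_p} to the base system $\Sigma_{p'}$ with horizon $p-p'$ produces the tuples $(\xi,\tilde d_{1:p-p'},u)$ with $(\xi,u)\in S_{xu,p'}$ and $\tilde d_{1:p-p'}\in D^{p-p'}$. Unfolding $S_{xu,p'}$ and applying the same relabeling yields precisely $\{(x,d_{1:p},u)\mid (x,u)\in S_{xu},\, d_{1:p}\in D^{p}\}=S_{xu,p}$, so the two safety specifications coincide. Hence $C_{max,p}$ is literally the maximal RCIS of the $(p-p')$-augmented system of $\Sigma_{p'}$ in its safe set. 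With this identification in hand I invoke the outer-bound half of Theorem \ref{thm:bnds}, applied to base system $\Sigma_{p'}$ and preview horizon $p-p'$: the disturbance set of $\Sigma_{p'}$ is still $D$, the maximal CIS of its disturbance collaborative system $\mathcal{D}(\Sigma_{p'})$ is by definition $C_{max,p',co}$, and the theorem then gives $C_{max,p}\subseteq C_{max,p',co}\times D^{p-p'}$, which is \eqref{eqn:tight_outer_bnd}.

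Because the argument is essentially a reindexing, I do not expect a genuine obstacle; the only care needed is the bookkeeping in the first two steps, namely confirming that the relabeling $\tilde d_i\mapsto d_{p'+i}$ converts both the dynamics and the safe set of the $(p-p')$-augmentation of $\Sigma_{p'}$ into exactly those of $\Sigma_p$. I would also record the two degenerate endpoints as sanity checks: $p'=0$ recovers the original outer bound of Theorem \ref{thm:bnds} since $C_{max,0,co}=C_{max,co}$, while $p'=p$ gives the base outer bound $C_{max,p}\subseteq C_{max,p,co}$. Since Theorem \ref{thm:bnds} is already proved in \cite{liu2021value}, no separate invariance computation is needed.
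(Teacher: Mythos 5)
Your proposal is correct and matches the paper's own argument exactly: the paper also observes that $\Sigma_p$ is the $(p-p')$-augmented system of $\Sigma_{p'}$ (with matching safe sets) and then applies the outer-bound half of Theorem~\ref{thm:bnds} to the base system $\Sigma_{p'}$, whose disturbance collaborative system has maximal CIS $C_{max,p',co}$. Your write-up simply makes explicit the reindexing and safe-set bookkeeping that the paper leaves implicit.
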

According to Corollary \ref{cor:tight_outer_bnd}, we have $(p+1)$ outer bounds for the maximal RCIS $C_{max,p}$, that is ${\{C_{max,k,co}\times D^{p-k}\}_{k=0}^{p}}$, including the one in Theorem \ref{thm:bnds}. 

Next, we want to study the relation between those outer bounds, and figure out which one is the tightest bound.   
The key is to realize that  $\mathcal{D}(\Sigma_{p})$ is actually the state-space representation of  $\mathcal{D}(\Sigma)$ with $p$-step input delay in $u_{d}$: 
\begin{align} \label{eqn:sys_delay} 
 x(t+1)= f(x(t),u(t),u_{d}(t-p)).
\end{align}
In \cite{liu2020scalable}, $\mathcal{D}(\Sigma_{p})$ is called the augmented system of the input delay system in  \eqref{eqn:sys_delay}. 
Intuitively, since we turn the disturbance in $\Sigma_{p}$ to the input $u_{d}$ in $\mathcal{D}(\Sigma_{p})$, the $p$-step preview on $d$ becomes a $p$-step delay on the input $u_{d}$. This is what we call the duality between delay and preview. The relations of the four systems $\Sigma$, $\Sigma_{p}$, $\mathcal{D}(\Sigma)$ and $\mathcal{D}(\Sigma_{p})$ are shown by the diagram in Fig. \ref{fig:delay_preview}. Since $\mathcal{D}(\Sigma_{p})$ is the ``delayed" version of  $\mathcal{D}(\Sigma)$, the maximal CIS of $\mathcal{D}(\Sigma_{p})$ is embedded in the maximal CIS of $\mathcal{D}(\Sigma)$, shown by the next theorem.  

\begin{figure}[]
	\centering
	\includegraphics[width=0.4\textwidth]{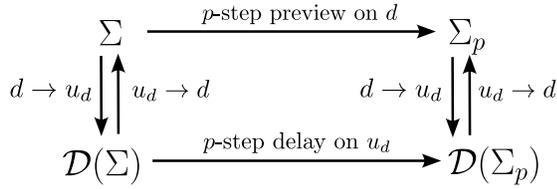}
	\caption{The relation diagram of the four systems $\Sigma$, $\mathcal{D}(\Sigma)$, $\Sigma_{p}$, and $\mathcal{D}(\Sigma_{p})$.}
	\label{fig:delay_preview}
\end{figure}
 
\begin{theorem} \label{thm:CIS_delay} 
For any $p\geq 0$, the maximal CIS $C_{max,p,co}$ of the system $\mathcal{D}(\Sigma_{p})$ in the safe set $S_{xu,p}\times D$ can be obtained from the maximal CIS $C_{max,co}$ of $\mathcal{D}(\Sigma)$ in $S_{xu}\times D$, via the formula
\begin{align}  
     \label{eqn:C_max_p_co} 
		\begin{split}
			C_{max,p,co}=& {Pre_{\mathcal{D}(\Sigma_p)}^p (C_{max,co}\times D^p)}
		\end{split}
\end{align}
where $x(t) = f(x(t-1),u(t-1), d_{t}(0))$ for $0\leq t\leq p$.
Furthermore, we have
	\begin{align} \label{eqn:delay_relation} 
	  C_{max,p,co} \subseteq  C_{max,co}\times D^{p}.
	\end{align}
\end{theorem}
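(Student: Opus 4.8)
The plan is to establish the identity \eqref{eqn:C_max_p_co} by a double inclusion and to get the ``furthermore'' inclusion \eqref{eqn:delay_relation} for free from the easy direction. Throughout I would lean on the interpretation of $\mathcal{D}(\Sigma_{p})$ as a $p$-step delay line feeding $\mathcal{D}(\Sigma)$: the buffer entries $d_1,\dots,d_p$ are exactly the disturbance-inputs that will successively drive the $x$-dynamics over the next $p$ steps, while the fresh input $u_d$ chosen now is loaded into $d_p$ and only reaches the $x$-dynamics after $p$ steps. The one structural fact I would isolate first is the resulting \emph{decoupling}: over the first $p$ steps the $x$-trajectory is determined by the initial buffer $d_{1:p}$ and the inputs $u$ alone, and is completely independent of the fresh inputs $u_d$. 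This is what lets us simultaneously reach a favorable $x$-state and pre-load the buffer with any future disturbance-inputs we wish.

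For \eqref{eqn:delay_relation} and for the inclusion $C_{max,p,co}\subseteq Pre^{p}_{\mathcal{D}(\Sigma_{p})}(C_{max,co}\times D^{p})$, I would take any $z=(x,d_{1:p})\in C_{max,p,co}$ together with a strategy rendering it safe forever in $S_{xu,p}\times D$. Reading off the $x$-component of the resulting trajectory and relabelling the consumed buffer entry $d_1(\tau)$ as the second input $u_d(\tau)$ of $\mathcal{D}(\Sigma)$ produces a trajectory of $\mathcal{D}(\Sigma)$ that stays in $S_{xu}\times D=S_{xu,co}$ forever; hence $x\in C_{max,co}$, and since $z\in S_{xu,p}\times D$ forces $d_{1:p}\in D^{p}$, we get $z\in C_{max,co}\times D^{p}$, which is \eqref{eqn:delay_relation}. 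The first inclusion of \eqref{eqn:C_max_p_co} is then immediate: as $C_{max,p,co}$ is controlled invariant, $z$ can be steered inside $C_{max,p,co}$ for $p$ safe steps, and by \eqref{eqn:delay_relation} the state reached at step $p$ lies in $C_{max,co}\times D^{p}$, so $z\in Pre^{p}_{\mathcal{D}(\Sigma_{p})}(C_{max,co}\times D^{p})$.

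The reverse inclusion $Pre^{p}_{\mathcal{D}(\Sigma_{p})}(C_{max,co}\times D^{p})\subseteq C_{max,p,co}$ is the heart of the argument. Given $z^{0}$ in the left-hand set, there is a safe $p$-step control driving it to a state whose $x$-part $x^{p}$ lies in $C_{max,co}$. Since $\mathcal{D}(\Sigma)$ carries no uncontrolled disturbance, the maximal CIS supplies an infinite admissible input sequence $\{(\tilde u(k),\tilde u_d(k))\}_{k\geq 0}$ keeping the $x$-trajectory inside $C_{max,co}$, with $(\tilde x(k),\tilde u(k))\in S_{xu}$ and $\tilde u_d(k)\in D$. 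I would realize this sequence in the delayed system by \emph{delay compensation}: during the first $p$ steps I retain the $u$-inputs of the reaching path (which alone fix $x^{p}$), but override the fresh inputs to pre-load the buffer as $d_{i}^{p}=\tilde u_d(i-1)$ for $i\in[p]$; by the decoupling above this does not perturb the reaching of $x^{p}$ and is admissible since $\tilde u_d(i-1)\in D$. From step $p$ onward I play $u(p+\tau)=\tilde u(\tau)$ and $u_d(p+\tau)=\tilde u_d(p+\tau)$.

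The verification I would then run is that the disturbance-input actually entering the $x$-dynamics at each step $p+\tau$ equals $\tilde u_d(\tau)$: for $0\leq\tau<p$ it is the pre-loaded buffer entry $d_{\tau+1}^{p}=\tilde u_d(\tau)$, while for $\tau\geq p$ it is the fresh input loaded $p$ steps earlier, $u_d(p+\tau-p)=\tilde u_d(\tau)$. An induction on $\tau$ shows the $x$-trajectory from step $p$ coincides with $\tilde x(\tau)\in C_{max,co}$, so the whole trajectory stays in $S_{xu,p}\times D$ forever and $z^{0}\in C_{max,p,co}$. The main obstacle is exactly this bookkeeping: one must be convinced that the seemingly uncontrollable buffer handed to us at step $p$ is in fact something we have already chosen, and must check that the two time-shifted index ranges ($\tau<p$ served by the buffer, $\tau\geq p$ served by fresh inputs) splice together into precisely the delay-free collaborative trajectory. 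Everything else --- admissibility of the constructed inputs and membership of each intermediate buffer in $D^{p}$ --- is routine.
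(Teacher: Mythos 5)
Your proof is correct and shares the paper's decomposition and its key insight: the fresh input $u_{d}$ loaded at time $t$ is exactly the disturbance that reaches the $x$-dynamics at time $t+p$, so it can be pre-committed to whatever the collaborative strategy of $\mathcal{D}(\Sigma)$ will demand. The only genuine difference is how the hard inclusion $Pre_{\mathcal{D}(\Sigma_{p})}^{p}(C_{max,co}\times D^{p})\subseteq C_{max,p,co}$ is certified. The paper shows that the set $Pre_{\mathcal{D}(\Sigma_{p})}^{p}(C_{max,co}\times D^{p})$ is itself a CIS of $\mathcal{D}(\Sigma_{p})$ via a one-step closure argument: choose $u(0)$ from the reaching sequence and let $u_{d}(0)$ be the collaborative input that invariance of $C_{max,co}$ prescribes at the future state $x(p)$; the successor $(x(1),d_{2:p}(0),u_{d}(0))$ then again reaches $C_{max,co}\times D^{p}$ in $p$ steps, and maximality gives the inclusion. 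You instead unroll the same idea into an explicit infinite-horizon delay-compensation strategy and conclude from the existence of an infinite safe trajectory; this is valid, but it silently relies on the characterization that, for a disturbance-free system such as $\mathcal{D}(\Sigma_{p})$, the maximal CIS equals the set of states from which some infinite safe trajectory exists (true because that set is itself a CIS by determinism, but worth a sentence in a write-up; the paper's one-step formulation sidesteps it). Your other two steps are trajectory-level versions of the paper's set-level ones: relabeling the consumed buffer head $d_{1}(\tau)$ as the input $u_{d}(\tau)$ of $\mathcal{D}(\Sigma)$ to obtain \eqref{eqn:delay_relation} matches the paper's observation that the projection of the maximal CIS of $\mathcal{D}(\Sigma_{p})$ onto the $x$-coordinates is a CIS of $\mathcal{D}(\Sigma)$, and your argument that the maximal CIS lies in the $p$-step backward reachable set (stay invariant for $p$ steps, then apply \eqref{eqn:delay_relation}) matches the paper's monotonicity step $C_{max,p,co}=Pre_{\mathcal{D}(\Sigma_{p})}^{p}(C_{max,p,co})\subseteq Pre_{\mathcal{D}(\Sigma_{p})}^{p}(C_{max,co}\times D^{p})$. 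The index splice at time $p$ that you flag as the main obstacle (buffer entries serving $\tau<p$, fresh inputs serving $\tau\geq p$) is handled correctly.
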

Theorem \ref{thm:CIS_delay} extends the results in \cite{liu2020scalable} and actually holds for any deterministic system with input delays.   
According to \eqref{eqn:delay_relation}, it can be shown that  for any $p'\leq p$,
 \begin{align}
    C_{max,p,co} \subseteq C_{max,p',co} \times D^{p-p'} \subseteq C_{max,co}\times D^{p}.
 \end{align}
Thus, $C_{max,p,co}$ is the tightest outer bound of $C_{max,p}$ in the set ${\{C_{max,k,co}\times D^{p-k}\}_{k=0}^{p}}$.  Furthermore, this outer bound can be computed by \eqref{eqn:C_max_p_co},  once the maximal RCIS $C_{max,co}$ is known. Given $C_{max,co}$, the formula in \eqref{eqn:C_max_p_co} amounts to computing $Pre$ only $p$ times, whose cost is typically negligible compared with the computation cost of $C_{max,co}$, which involves recursive computations of $Pre$ until convergence. Thus, we improve the outer bound of $C_{max,p}$ with little extra cost.  

Due to the similarity between the system pairs $(\Sigma, \Sigma_{p})$ and $(\mathcal{D}(\Sigma), \mathcal{D}(\Sigma_{p}))$,  one may wonder if the maximal RCIS $C_{max,p}$ of the $p$-augmented system can also be obtained from $C_{max}$ by a formula similar to \eqref{eqn:C_max_p_co}. Unfortunately, we cannot find such a formula. But the duality between the preview and delay, as shown in Fig. \ref{fig:delay_preview}, does allow us to take the advantage of \eqref{eqn:C_max_p_co} to obtain the tighter outer approximations of $C_{max,p}$ easily.   

\section{Quantifying the value of preview} \label{sec:convergence} 
In this section, we show how the value of preview information in safety control varies with the preview horizon $p$. First, we need to find a way to quantify the value of different preview horizons. Ideally, we want to quantify the value of preview by the size of the maximal RCIS $C_{max,p}$, since this set characterizes all the safe controllers. However, since the dimension of  $C_{max,p}$ depends linearly on $p$, it is not possible to compare the size of $C_{max,p}$ over different preview horizon $p$. To resolve this issue, we project $C_{max,p}$ onto the first $n$ coordinates, and study how the size of the projections $\Proj{n}(C_{max,p})$ varies with $p$. 

Here is why the size of the projection $\Proj{n}(C_{max,p})$ indeed reflects the value of $p$-step preview. First,  the projection $\Proj{n}(C_{max,p})$ contains all the initial states $x_0$ of the system $\Sigma$ where a safe controller exists for some initial $p$-step preview. 
Second, by Theorem \ref{thm:bnds}, for any $p'\leq p$, 
\begin{align}
    \Proj{n}(C_{max,p'}) \subseteq \Proj{n}(C_{max,p}).
\end{align}
That is, the projection $\Proj{n}(C_{max,p})$ expands with $p$, which matches our intuition that a longer preview horizon has higher value in safety control.

In the remainder of this section, we show what the limit of the projection $\Proj{n}(C_{max,p})$ is and how fast this projection converges to the limit in Hausdorff distance. For short, the Hausdorff distance between the projection $\Proj{n}(C_{max,p})$ and its limit is denoted by $d_{p}$, that is
\begin{align} \label{eqn:dp} 
    d_p = d(\Proj{n}(C_{max,p}), \lim_{p \to \infty} \Proj{n}(C_{max,p})).
\end{align}
Intuitively, the value of $d_{p}$ reflects the room to improve the safety of the system if we are allowed to further increase the preview horizon. In a sense, $d_{p}$ measures the safety gap between the $p$-step preview and the infinite-step preview. Due to this reason, we also call $d_{p}$ the \emph{safety regret} of the $p$-step preview (similar to the notion of regret in \cite{yu2020power, li2019online}).

\subsection{Assumptions} \label{sec:asp} 
We restrict our analysis to the class of discrete-time linear systems. A system $\Sigma$ is \emph{linear} if its transition function $f$ in \eqref{eqn:sys}  is in form
\begin{align} \label{eqn:sys_linear} 
  f(x,u,d) = Ax + Bu +Ed,
\end{align}
with matrices $A\in \R^{n\times n}$, $B\in \R^{n\times m}$ and $E\in \R^{n\times l}$. The results later in this section are based on the following assumptions. 

\begin{assumption} \label{asp:asymp_stab} 
    The disturbance collaborative system $ \mathcal{D}(\Sigma)$ of the linear system $\Sigma$ is stabilizable. 
\end{assumption}   

Note that the disturbance-collaborative system $\mathcal{D}(\Sigma)$ being stabilizable is a weaker condition than the system $\Sigma$ being stabilizable, since the system $\mathcal{D}(\Sigma)$ has one more control input $u_{d}$ than the system $\Sigma$.

\begin{assumption} \label{asp:cmpt} 
    The safe set $S_{xu}$ and the disturbance set $D$ are convex and compact.  
\end{assumption}

\begin{lemma} \label{lem:forced_eq} 
Suppose that the system $\Sigma$ is linear.   Under Assumption \ref{asp:cmpt}, if the set $\Proj{n}(C_{max,p})$ is nonempty, then 

(i) $\Proj{n}(C_{max,p})$ is a convex compact CIS of the system $\mathcal{D}(\Sigma)$ within $S_{xu}\times D$; 

(ii) there exists a forced equilibrium $(x_{e},u_{e},d_{e})\in S_{xu}\times D$ of the system $\mathcal{D}(\Sigma)$ such that $x_{e}$ is in $\Proj{n}(C_{max,p})$.
\end{lemma}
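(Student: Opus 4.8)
The plan is to treat the two parts separately: part (i) is a structural statement obtained by projecting the augmented RCIS and reinterpreting the first previewed disturbance as the collaborative input, while part (ii) is a fixed-point argument on that projection. For part (i), I would first argue that $C_{max,p}$ itself is convex and compact. Compactness holds because $C_{max,p}$ is closed (the maximal RCIS of continuous dynamics with a closed safe set) and is contained in the bounded set $\Proj{n}(S_{xu})\times D^{p}$. Convexity follows from linearity of $\Sigma_{p}$ together with Assumption~\ref{asp:cmpt}: the operator $Pre_{\Sigma_{p}}(\cdot,S_{xu,p})$ maps convex sets to convex sets, since the robustness-against-$d$ requirement is a Pontryagin difference (convexity-preserving) composed with an affine preimage and a linear projection; hence every iterate is convex and $C_{max,p}$, as their nested limit, is convex. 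Because $\Proj{n}$ is linear, $V:=\Proj{n}(C_{max,p})$ is convex, and as the continuous image of a compact set it is compact.

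The crux of (i) is the CIS property, which I would obtain from the ``$u_{d}=d_{1}$'' device afforded by the duality between preview and the collaborative input. Fix $x\in V$ and lift it to some $(x,d_{1:p})\in C_{max,p}$. Since $C_{max,p}$ is an RCIS of $\Sigma_{p}$, there exists $u$ with $(x,u)\in S_{xu}$ such that, for every $d\in D$, the successor $(Ax+Bu+Ed_{1},d_{2},\ldots,d_{p},d)$ stays in $C_{max,p}$; projecting onto the first $n$ coordinates gives $Ax+Bu+Ed_{1}\in V$. Interpreting the first previewed disturbance as the collaborative input, set $u_{d}:=d_{1}\in D$. Then $(x,u,u_{d})\in S_{xu}\times D$ and $Ax+Bu+Eu_{d}\in V$, which is exactly the defining condition for $V$ being a CIS of $\mathcal{D}(\Sigma)$ within $S_{xu}\times D$.

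For part (ii), I would invoke Kakutani's fixed-point theorem on the one-step successor map of $\mathcal{D}(\Sigma)$ restricted to $V$. Define $G:V\rightrightarrows V$ by
\[
G(x)=\{\,Ax+Bu+Eu_{d}\mid (x,u)\in S_{xu},\ u_{d}\in D,\ Ax+Bu+Eu_{d}\in V\,\}.
\]
By part (i), $G(x)$ is nonempty for every $x\in V$; its convexity follows from convexity of $S_{xu}$, $D$, and $V$ together with the affine dependence on $(u,u_{d})$; and $G$ has closed graph because $S_{xu}$ and $D$ are compact, so the feasible inputs are bounded and admit convergent subsequences whose limits remain feasible by closedness of $S_{xu}$, $D$, and $V$. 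Since $V$ is nonempty, convex, and compact by (i), Kakutani's theorem yields a fixed point $x_{e}\in G(x_{e})$, i.e.\ $x_{e}=Ax_{e}+Bu_{e}+Ed_{e}$ for some $(u_{e},d_{e})$ with $(x_{e},u_{e})\in S_{xu}$ and $d_{e}\in D$. The triple $(x_{e},u_{e},d_{e})$ is then the desired forced equilibrium of $\mathcal{D}(\Sigma)$ with $x_{e}\in V=\Proj{n}(C_{max,p})$.

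I expect the main obstacle to be the fixed-point step of (ii), specifically verifying the closed-graph (upper hemicontinuity) property of $G$ rigorously: this is where compactness of the input and disturbance constraints in Assumption~\ref{asp:cmpt} is indispensable, since without boundedness of the feasible inputs one could fail to retain the limiting successor and the fixed-point hypotheses would break down. The convexity and compactness claims in (i) are comparatively routine, but they likewise rest on the same convexity assumptions and on the standard convexity-preserving behavior of $Pre$ under linear dynamics.
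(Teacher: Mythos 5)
Your proposal is correct, and it splits into a part that mirrors the paper and a part that genuinely departs from it. For (i), your argument is essentially the paper's: the key step, lifting $x\in\Proj{n}(C_{max,p})$ to $(x,d_{1:p})\in C_{max,p}$, using the RCIS property of $C_{max,p}$, and then setting $u_{d}=d_{1}$ to witness the CIS condition for $\mathcal{D}(\Sigma)$, is exactly the device in the paper's proof. The only cosmetic difference is how convexity of $C_{max,p}$ is justified: the paper notes that for linear systems the convex hull of an RCIS is again an RCIS, so maximality forces convexity, whereas you argue that $Pre_{\Sigma_p}(\cdot,S_{xu,p})$ preserves convexity and pass to the nested limit of the iterates; both work, though your route implicitly relies on the standard (compactness-based) fact that the maximal RCIS equals that nested intersection, which deserves a citation or a line of proof. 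For (ii), the paper does not prove anything itself: it invokes the proof of Theorem 3.3 in the cited work of Anevlavis and Tabuada, which establishes that every nonempty convex compact CIS of a linear system contains a forced equilibrium. You instead give a self-contained argument via Kakutani's fixed-point theorem applied to the successor correspondence $G$ on $V=\Proj{n}(C_{max,p})$, and your verification of the hypotheses (nonempty values from the CIS property, convex values from Assumption~\ref{asp:cmpt} and affinity in $(u,u_{d})$, closed graph from compactness of $S_{xu}$ and $D$) is sound. What each approach buys: the paper's citation keeps the proof short and defers the fixed-point machinery to the literature; your version makes the lemma self-contained and makes explicit precisely where compactness of the constraint sets is indispensable, at the cost of invoking (and carefully checking) Kakutani's theorem inline.
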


\begin{assumption} \label{asp:int} 
	For some $p_0\geq 0$, there exists a forced equilibrium $(x_{e},u_{e},d_{e})$ of $\mathcal{D}(\Sigma)$ in the interior of ${S_{xu}\times D}$ with $x_{e}\in \Proj{n}(C_{max,p_0})$.
\end{assumption}
According to Lemma \ref{lem:forced_eq}, Assumption \ref{asp:int} is almost an implication of Assumption \ref{asp:cmpt}, except that we require the forced equilibrium is not only in the safe set $S_{xu}\times D$, but in its interior. Therefore, Assumption \ref{asp:int} is not that restrictive.

\begin{remark} \label{rem:origin} 
For linear systems, we can shift the origin of the state space to any forced equilibrium without changing the system equations. Hence without loss of generality, for the remainder of this section,  we simply assume that the forced equilibrium $(x_{e},u_{e},d_{e})$ in Assumption \ref{asp:int} is the origin of the state-input-disturbance space $\R^{n+m+l}$. 
\end{remark}

\subsection{Convergence of $\Proj{n}(C_{max,p})$} \label{sec:conv_asym} 
By Remark \ref{rem:origin}, the safe set $S_{xu}\times D$, the maximal RCIS $C_{max,co}$ and the projection $\Proj{n}(C_{max,p})$ all contain the origin for any $p\geq p_0$. Thus, there exists a scalar $ \lambda \in (0,1]$ such that
\begin{align} \label{eqn:lambda_0} 
	0 \in \lambda C_{max,co} &\subseteq \Proj{n}(C_{max,p_0}) \subseteq C_{max,co}.
\end{align}
{We call the maximal $ \lambda$ such that \eqref{eqn:lambda_0} holds the \emph{initial factor} $\lambda_0$, which reflects the portion of the set $C_{max,co}$ contained in the projection of the set $C_{max,p_{0}}$.} By definition, the initial factor $ \lambda_0 \geq 0$. To prove the convergence of $\Proj{n}(C_{max,p})$, we need the initial factor $ \lambda_0 > 0$, which is ensured by the following assumption. 
\begin{assumption}\label{asp:int2}
The forced equilibrium $(x_{e},u_{e}, d_{e})$ in Assumption \ref{asp:int} satisfies that $x_{e}$ is in the interior of $\Proj{n}(C_{max,p_0})$.
\end{assumption}

 By Lemma \ref{lem:forced_eq}, whenever the maximal RCIS $C_{max,p}$ is nonempty, Assumption \ref{asp:cmpt} implies that the subspace $\mathcal{X}_{eq}\subseteq \R^{n+m+l}$  of all the forced equilibria intersects with the convex set $\mathcal{C}:= (S_{xu}\times D)\cap(\Proj{n}(C_{max,p})\times \R^{m+l})$. Assumptions \ref{asp:int} and \ref{asp:int2} hold if and only if  $\mathcal{X}_{eq}$ intersects with the interior of  $\mathcal{C}$, hence they only slightly stricten Assumption \ref{asp:cmpt}.
Numerically Assumptions \ref{asp:int} and \ref{asp:int2} can be verified by a linear program, provided in the next subsection.
\begin{lemma} \label{lem:pre_expand} 
For any system $\Sigma$ in \eqref{eqn:sys} with safe set $S_{xu}$ and a preview horizon $p$, the projection $\Proj{n}(C_{max,p})$ satisfies that for any $k>0$,
\begin{align} \label{eqn:k_step_contain} 
	&Pre_{\mathcal{D}(\Sigma)}^{k}(\Proj{n}(C_{max,p}), S_{xu}\times D)\nonumber\\
	&\subseteq \Proj{n}(C_{max,p+k}) \subseteq C_{max,co}.
\end{align}
\end{lemma}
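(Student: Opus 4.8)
The plan is to prove the two inclusions separately, with the left one carrying all the weight. Write $X_p := \Proj{n}(C_{max,p})$ for brevity. The right inclusion $X_{p+k}\subseteq C_{max,co}$ is immediate: the outer bound of Theorem~\ref{thm:bnds} gives $C_{max,p+k}\subseteq C_{max,co}\times D^{p+k}$, and projecting both sides onto the first $n$ coordinates yields $X_{p+k}\subseteq C_{max,co}$, since $C_{max,co}\subseteq\R^{n}$.

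For the left inclusion I would argue constructively, exploiting the preview–delay duality of Section~\ref{sec:struct}: a free control $u_{d}$ in $\mathcal{D}(\Sigma)$ can be ``played back'' as previewed disturbance in $\Sigma_{p+k}$. Take any $x_0\in Pre_{\mathcal{D}(\Sigma)}^{k}(X_p, S_{xu}\times D)$. Unfolding the definition of the $k$-step backward reachable set for the collaborative system, there exist inputs $u(0),\dots,u(k-1)$ and collaborative inputs $u_d(0),\dots,u_d(k-1)\in D$ such that the trajectory $x(0)=x_0$, $x(i+1)=f(x(i),u(i),u_d(i))$ obeys $(x(i),u(i))\in S_{xu}$ for all $i$ and satisfies $x(k)\in X_p$. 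Since $x(k)\in\Proj{n}(C_{max,p})$, I can select a preview $\delta'=(\delta_1',\dots,\delta_p')$ with $(x(k),\delta')\in C_{max,p}$.

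The crux is to certify that the augmented state $(x_0,\delta)\in C_{max,p+k}$ for the concatenated preview $\delta := (u_d(0),\dots,u_d(k-1),\delta_1',\dots,\delta_p')$; projecting then gives $x_0\in X_{p+k}$. To see this I would run $\Sigma_{p+k}$ from $(x_0,\delta)$ under the controls $u(0),\dots,u(k-1)$ for the first $k$ steps. Because the slot $d_1$ that drives each state transition is fixed by the preview and equals $u_d(i)$ at step $i$, the state trajectory coincides with the collaborative one and stays in $S_{xu}$ regardless of the adversary; the disturbances the adversary injects merely fill the tail slots of the shifting preview window. After $k$ shifts the augmented state becomes $(x(k),\delta_1',\dots,\delta_p',\delta^{\mathrm{new}}_1,\dots,\delta^{\mathrm{new}}_k)$, whose leading $p$ preview slots are exactly $\delta'$ and whose last $k$ slots lie in $D^{k}$. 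Hence it belongs to $C_{max,p}\times D^{k}$, which by the final claim of Theorem~\ref{thm:bnds} (applied with horizon $p+k$ and $p'=p$, valid since $k>0$) is an RCIS of $\Sigma_{p+k}$; a safe controller therefore exists from time $k$ onward, completing a safe controller from $(x_0,\delta)$.

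The main obstacle I anticipate is purely bookkeeping: correctly tracking the shift dynamics of the preview window in $\Sigma_{p+k}$ to confirm that (i) the adversarial disturbances entering during the first $k$ steps never influence the state trajectory on those steps, and (ii) after exactly $k$ steps they occupy the ``don't-care'' $D^{k}$ factor of the inner-bound RCIS while $\delta'$ occupies the leading $C_{max,p}$ factor. Once the indexing is pinned down the correctness is immediate, and, notably, the construction uses no linearity, so the lemma holds for the general nonlinear $\Sigma$ of \eqref{eqn:sys}.
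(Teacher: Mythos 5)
Your proof is correct, and its engine is the same as the paper's: the collaboratively chosen disturbances of $\mathcal{D}(\Sigma)$ are packed into the leading preview slots of the augmented state, while the adversary's fresh disturbances land harmlessly in the trailing $D$ factors. The architecture differs, though. The paper proves only the one-step inclusion $Pre_{\mathcal{D}(\Sigma)}(\Proj{n}(C_{max,p}),S_{xu}\times D)\subseteq\Proj{n}(C_{max,p+1})$ via this packing, combined with the inner bound $C_{max,p}\times D\subseteq C_{max,p+1}$ and the fixed-point identity $Pre_{\Sigma_{p+1}}(C_{max,p+1},S_{xu,p+1})=C_{max,p+1}$ of the maximal RCIS, and then obtains the $k$-step statement by induction, applying $Pre_{\mathcal{D}(\Sigma)}$ to both sides and using monotonicity; the whole argument lives at the level of set operations, never mentioning controllers or trajectories. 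You instead unroll all $k$ steps at once --- legitimate because $\mathcal{D}(\Sigma)$ is disturbance-free, so $Pre^{k}_{\mathcal{D}(\Sigma)}$ unfolds into open-loop input sequences --- and finish by landing in $C_{max,p}\times D^{k}$, which is an RCIS of $\Sigma_{p+k}$ by the last claim of Theorem \ref{thm:bnds}, concluding that a safe controller exists from $(x_0,\delta)$. Your version is more transparent (the preview--delay duality is visible in a single construction, with no induction), but it leans on the fact that a concatenated, time-varying safe strategy certifies membership in $C_{max,p+k}$, whereas Section \ref{sec:inv} phrases safe controllers as pure state feedback; this is a standard and harmless point, and you can close it cleanly by noting that your construction in fact shows $(x_0,\delta)\in Pre^{k}_{\Sigma_{p+k}}(C_{max,p}\times D^{k},S_{xu,p+k})$, and the $k$-step backward reachable set of an RCIS is itself an RCIS, hence contained in the maximal one. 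Finally, your explicit derivation of the right inclusion from the outer bound of Theorem \ref{thm:bnds} is correct and, like the rest of your argument, uses no linearity, so the lemma's stated generality for nonlinear $\Sigma$ is preserved.
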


By \eqref{eqn:lambda_0} and Lemma \ref{lem:pre_expand}, we obtain the following inner bound of $\Proj{n}(C_{max,p_0+k})$:
\begin{align} \label{eqn:inner_bound_proj} 
	&Pre_{\mathcal{D}(\Sigma)}^{k}( \lambda_0 C_{max,co}, S_{xu}\times D)\nonumber\\
	&\subseteq \Proj{n}(C_{max,p_0+k}) \subseteq C_{max,co}.
\end{align}
Since $ \lambda_0 C_{max,co}$ is a CIS of $\mathcal{D}(\Sigma)$ in $S_{xu}\times D$, {the $k$-step backward reachable set of $\lambda_0 C_{max,co}$ in $S_{xu}\times D$} is non-shrinking with $k$. If we can show that  this $k$-step backward reachable set converges to the maximal CIS $C_{max,co}$ of $\mathcal{D}(\Sigma)$ in $S_{xu}\times D$ as $k$ goes infinity, then by \eqref{eqn:inner_bound_proj}, the projection $\Proj{n}(C_{max,p})$ converges to the maximal CIS $C_{max,co}$. Furthermore, if we know how fast $Pre_{\mathcal{D}(\Sigma)}^{k}( \lambda_0 C_{max,co}, S_{xu}\times D)$ converges, then we have a lower bound on the convergence rate of $\Proj{n}(C_{max,p})$. The following theorem gives such a lower bound{, inspired by the contraction analysis of set-valued mappings in  \cite{artstein2008feedback, darup2016computation, liu2022convergence}.}

\begin{theorem} \label{thm:conv_suff} 
For a linear system $\Sigma$ with a safe set $S_{xu}$ satisfying Assumption \ref{asp:cmpt}, suppose that there exists a scalar $ {\gamma\in (0,1]}$, a positive integer $N$ and a scalar $ \lambda\in [0,1]$ such that $ \gamma C_{max,co}$ is an $N$-step $ \lambda$-contractive CIS of the system $\mathcal{D}(\Sigma)$ within the safe set $S_{xu}$. Then, the $kN$-step backward reachable set of $ \lambda_0 C_{max,co}$ satisfies that for $k \leq  k_0$,
\begin{align}
	\frac{ \lambda_0}{ \lambda^{k}} C_{max,co} \subseteq Pre_{\mathcal{D}(\Sigma)}^{kN}( \lambda_0 C_{max,co}, S_{xu}\times D);  \label{eqn:C_max_conv_1}
\end{align}
for $k\geq k_0$,
\begin{align}
	&\left( 1 - (1-  \lambda_0/ \lambda^{k_0}) \left( \frac{1- \gamma}{ 1 - \gamma \lambda}  \right)^{k-k_0} \right) C_{max,co} \nonumber \\
	& \subseteq Pre_{\mathcal{D}(\Sigma)}^{kN}( \lambda_0 C_{max,co}, S_{xu}\times D), \label{eqn:C_max_conv_2}
\end{align}
where  
\begin{align}
k_{0} &=  \max\left(0, \bigg\lceil  \frac{ \log \lambda_0 - \log \gamma}{ \log \lambda} -1 \bigg\rceil  \right). \label{eqn:k_0}
\end{align}
\end{theorem}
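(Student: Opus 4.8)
The plan is to treat the set-valued map $F(\cdot):=Pre_{\mathcal{D}(\Sigma)}(\cdot,\,S_{xu}\times D)$ as the object of a contraction analysis and to track how the scalar ``radius'' of $F^{kN}(\lambda_0 C_{max,co})$ grows toward $C_{max,co}$. Writing $C:=C_{max,co}$, the whole argument rests on two elementary properties of $F$ that hold because $\mathcal{D}(\Sigma)$ is linear and $S_{xu}\times D$, $C$ are convex and contain the origin (by Remark~\ref{rem:origin} and \eqref{eqn:lambda_0}): (P1) positive scaling, $\theta F^N(X)\subseteq F^N(\theta X)$ for $\theta\in[0,1]$; and (P2) Minkowski concavity, $\theta F^N(X)+(1-\theta)F^N(Y)\subseteq F^N(\theta X+(1-\theta)Y)$. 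Both follow from their one-step versions—obtained by taking convex combinations of the witnessing inputs $v=(u,u_d)$ and using $0\in S_{xu}\times D$—together with monotonicity of $F$, then lifted to $F^N$ by induction on $N$. I will also record the two facts that drive the recursion: the contractivity hypothesis $\gamma C\subseteq F^N(\lambda\gamma C)$, and the invariance $C\subseteq F^N(C)$ of the maximal CIS.

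For the first regime $k\le k_0$, I would prove \eqref{eqn:C_max_conv_1} by induction on $k$, the base case $k=0$ being trivial. For the step, note that whenever $\alpha:=\lambda_0/\lambda^{k}\le\gamma$, writing $\alpha=\theta\gamma$ with $\theta=\alpha/\gamma\le 1$ and applying (P1) to the contractivity hypothesis yields $\alpha C=\theta\gamma C\subseteq\theta F^N(\lambda\gamma C)\subseteq F^N(\theta\lambda\gamma C)=F^N(\lambda\alpha C)$; combined with monotonicity and the inductive hypothesis $\tfrac{\lambda_0}{\lambda^{k-1}}C\subseteq F^{(k-1)N}(\lambda_0 C)$ this gives $\tfrac{\lambda_0}{\lambda^{k}}C\subseteq F^{kN}(\lambda_0 C)$. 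The admissibility condition $\lambda_0/\lambda^{k}\le\gamma$ is exactly what the definition of $k_0$ in \eqref{eqn:k_0} guarantees up to $k=k_0$: a short computation shows $k_0<\tfrac{\log\lambda_0-\log\gamma}{\log\lambda}$, hence $\mu_0:=\lambda_0/\lambda^{k_0}\le\gamma$, and also $\mu_0\ge\lambda\gamma$ since $k_0+1\ge\tfrac{\log\lambda_0-\log\gamma}{\log\lambda}$. These two inequalities on $\mu_0$ are what glue the two regimes together.

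For the second regime $k\ge k_0$, I would set up a scalar recursion on the radius. Interpolating between contractivity and invariance via (P2) and the identity $aC+bC=(a+b)C$ for convex $C$, I obtain, for every $\theta\in[0,1]$, $F^N\big((1-\theta(1-\lambda\gamma))C\big)\supseteq\theta F^N(\lambda\gamma C)+(1-\theta)F^N(C)\supseteq(1-\theta(1-\gamma))C$. Thus if $\beta C\subseteq F^{jN}(\lambda_0 C)$ with $\beta\in[\lambda\gamma,1]$, choosing $\theta=(1-\beta)/(1-\lambda\gamma)\in[0,1]$ shows $F^{(j+1)N}(\lambda_0 C)\supseteq(1-\theta(1-\gamma))C$, i.e. the gap $e_j:=1-\beta_j$ obeys $e_{j+1}=\tfrac{1-\gamma}{1-\lambda\gamma}\,e_j$. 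Starting from $\beta_{k_0}=\mu_0$ (the endpoint of regime one, which satisfies $\lambda\gamma\le\mu_0\le\gamma\le 1$, so the recursion stays admissible and $\beta_j$ increases toward $1$), solving this linear recursion gives $\beta_{k}=1-(1-\mu_0)\big(\tfrac{1-\gamma}{1-\gamma\lambda}\big)^{k-k_0}$, which is precisely the bound \eqref{eqn:C_max_conv_2}.

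The routine parts are the two one-step set inclusions and their induction to $F^N$. I expect the main obstacle to be the bookkeeping at the interface $k=k_0$: one must verify simultaneously that $\mu_0=\lambda_0/\lambda^{k_0}$ does not exceed $\gamma$ (so regime one's last inductive step is admissible) and is at least $\lambda\gamma$ (so regime two's first choice of $\theta$ lies in $[0,1]$), and confirm that the two formulas agree at $k=k_0$, where \eqref{eqn:C_max_conv_2} collapses to $\mu_0 C$. A secondary point to handle carefully is the degenerate cases $\lambda=0$, $\gamma=1$, and $\lambda=1$ in \eqref{eqn:k_0}, where the logarithmic expression for $k_0$ or the ratio $\tfrac{1-\gamma}{1-\gamma\lambda}$ should be interpreted by continuity.
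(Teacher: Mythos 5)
Your proposal is correct and is essentially the paper's own proof in different notation: your (P1)/(P2) are precisely the homogeneity and Minkowski-superadditivity properties of $Pre$ that the paper uses, your interpolation inequality $F^N\bigl((1-\theta(1-\lambda\gamma))C\bigr)\supseteq(1-\theta(1-\gamma))C$ is exactly the paper's Lemma~\ref{lem:one_step_conv} (whose $g(\xi)$ equals $1-(1-\xi)\tfrac{1-\gamma}{1-\gamma\lambda}$, i.e.\ your $1-\theta(1-\gamma)$ under $\xi=1-\theta(1-\lambda\gamma)$), and your scalar gap recursion $e_{j+1}=\tfrac{1-\gamma}{1-\lambda\gamma}e_j$ reproduces the paper's iteration $g^{k}$, with the same two-regime split at $k_0$. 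The only quibble is your blanket claim $\mu_0\le\gamma$, which can fail when the $\max$ in \eqref{eqn:k_0} forces $k_0=0$ while $\lambda_0>\gamma$; this is harmless, since regime one is then vacuous and regime two needs only $\lambda\gamma\le\mu_0\le 1$, as you yourself observe.
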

It turns out that for stabilizable systems satisfying assumptions in Section \ref{sec:asp}, the $N$-step $ \lambda$-contractive CIS stated in Theorem \ref{thm:conv_suff} always exists, shown by the following lemma. 
\begin{lemma} \label{lem:contraction} 
	Under Assumptions \ref{asp:asymp_stab}, \ref{asp:cmpt} and \ref{asp:int}, there exist a scalar $ \gamma \in (0,1]$, a positive integer $N$ and a scalar $ \lambda\in [0,1)$ such that $\gamma C_{max,co}$ is an $N$-step $ \lambda$-contractive CIS of the system  $\mathcal{D}(\Sigma)$ within the safe set $S_{xu}$, that is
	\begin{align}
	 \gamma C_{max,co} \subseteq
   Pre_{\mathcal{D}(\Sigma)}^{N}( \lambda \gamma C_{max,co},S_{xu}).
	\end{align}
\end{lemma}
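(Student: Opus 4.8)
The plan is to build the triple $(\gamma,N,\lambda)$ from a stabilizing feedback of $\mathcal{D}(\Sigma)$ and to certify the contraction by steering trajectories with this feedback while they stay inside a small, constraint-admissible invariant neighborhood of the origin. By Remark~\ref{rem:origin} the forced equilibrium of Assumption~\ref{asp:int} is the origin, which lies in the interior of $S_{xu}\times D$. Assumption~\ref{asp:asymp_stab} then supplies a feedback $u=Kx$, $u_d=K_dx$ making the closed-loop matrix $A_{cl}=A+BK+EK_d$ Schur stable, so there is a norm $\|\cdot\|_*$ and $\sigma\in(0,1)$ with $\|A_{cl}x\|_*\le\sigma\|x\|_*$; consequently every sublevel set $\mathcal{E}_c=\{x : \|x\|_*\le c\}$ is invariant under $A_{cl}$.

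The first step is to extract a constraint-admissible invariant neighborhood and deduce that the origin is interior to $C_{max,co}$. Because the origin is interior to $S_{xu}\times D$ and $x\mapsto(x,Kx,K_dx)$ is linear, I can choose $c>0$ small enough that $(x,Kx)\in S_{xu}$ and $K_dx\in D$ for every $x\in\mathcal{E}_c$. Then $\mathcal{E}_c$ is a CIS of $\mathcal{D}(\Sigma)$ in $S_{xu}\times D$, so $\mathcal{E}_c\subseteq C_{max,co}$, and since $\mathcal{E}_c$ contains a ball around the origin, the origin is interior to $C_{max,co}$. By Assumption~\ref{asp:cmpt}, $C_{max,co}$ is also compact.

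Next I fix the data. Choosing any $\lambda\in(0,1)$, Schur stability gives $A_{cl}^N\to 0$, so for $N$ large enough the image $A_{cl}^N C_{max,co}$ is absorbed by the fixed neighborhood $\lambda C_{max,co}$ of the origin, i.e.\ $A_{cl}^N C_{max,co}\subseteq\lambda C_{max,co}$. I then pick $\gamma\in(0,1]$ small enough that $\gamma C_{max,co}\subseteq\mathcal{E}_c$, which is possible because $\mathcal{E}_c$ contains a neighborhood of the origin and $C_{max,co}$ is bounded. Note that $\gamma C_{max,co}$ is itself a CIS: for $x=\gamma x'$ with $x'\in C_{max,co}$, applying the scaled controls $\gamma u',\gamma u_d'$ keeps the successor in $\gamma C_{max,co}$ and, by convexity of $S_{xu}\times D$ and $0\in S_{xu}\times D$, keeps $(x,\gamma u',\gamma u_d')$ admissible.

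To finish, take any $x\in\gamma C_{max,co}\subseteq\mathcal{E}_c$ and run $\mathcal{D}(\Sigma)$ under the feedback for $N$ steps. Invariance of $\mathcal{E}_c$ keeps each intermediate state in $\mathcal{E}_c$, hence each $(x(t),u(t))\in S_{xu}$ and $u_d(t)\in D$, while the terminal state satisfies $A_{cl}^Nx\in\gamma A_{cl}^N C_{max,co}\subseteq\lambda\gamma C_{max,co}$. Therefore $x\in Pre_{\mathcal{D}(\Sigma)}^{N}(\lambda\gamma C_{max,co},S_{xu}\times D)$, establishing the claimed $N$-step $\lambda$-contractivity. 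I expect the main obstacle to be the first step — showing the origin is interior to $C_{max,co}$ and extracting a genuinely admissible invariant neighborhood — since this is where both stabilizability and the interior-equilibrium assumption are indispensable; once an admissible contracting feedback is available on a neighborhood, the choices of $\lambda$, $N$, and $\gamma$ (in that order) and the step-by-step feasibility check are routine.
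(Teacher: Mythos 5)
Your proof is correct and follows essentially the same route as the paper's: both arguments construct a contractive, constraint-admissible invariant neighborhood of the origin from stabilizability and the interior-equilibrium assumption (the paper invokes a $\lambda_a$-contractive ellipsoidal CIS from the literature, you build a Lyapunov-norm sublevel set by hand), deduce that the origin is interior to $C_{max,co}$, scale $C_{max,co}$ by a small $\gamma$ so it fits inside that neighborhood, and then take $N$ large enough for the contraction to land inside $\lambda\gamma C_{max,co}$. The remaining differences are cosmetic: your verification runs explicit closed-loop trajectories under the feedback instead of chaining $Pre$-operator inclusions, and you additionally make explicit the (implicitly used) fact that $\gamma C_{max,co}$ is itself a CIS.
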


{Intuitively, Assumptions \ref{asp:asymp_stab} and \ref{asp:int} ensure that the maximal CIS $C_{max,co}$ of $\mathcal{D}(\Sigma)$ contains a $\lambda'$-contractive ellipsoid $\mathcal{E}$ centered at the origin with $\lambda'<1$ \cite{de2004computation}. By Assumption \ref{asp:cmpt}, we can always find a positive scalar $\gamma$ such that $\gamma C_{max,co}$ is contained within $\mathcal{E}$. Since the ellipsoid $\mathcal{E}$ is $\lambda'$-contractive, any states in $\gamma C_{max,co}$ can be steered to $(\lambda')^k \mathcal{E}$ in $k$ steps. Then, for any $\lambda\in (0,1)$, we can find a large enough $N$ such that $(\lambda')^N \mathcal{E} $ is inside $ \lambda \gamma C_{max,co}$, and thus $\gamma C_{max,co}$ is $N$-step $\lambda$-contractive. This intuition is used to compute feasible $\gamma$, $N$ and $\lambda$ in Lemma \ref{lem:contraction} in the next subsection.}
By combining \eqref{eqn:inner_bound_proj} with Lemma \ref{lem:contraction} and Theorem \ref{thm:conv_suff},  the following theorem bounds the convergence of the projection $\Proj{n}(C_{max,p_0+kN})$.
\begin{theorem} \label{thm:conv_asym} 
	Suppose that a system $\Sigma$, a safe set $S_{xu}$ and a preview horizon $p_0$ satisfy Assumptions \ref{asp:asymp_stab}, \ref{asp:cmpt} and \ref{asp:int}. Then, there exists a scalar $ \gamma\in (0,1]$, a positive integer $N$ and a scalar $ \lambda \in [0,1)$ such that the projection $ \Proj{n}(C_{max,p_0+kN})$ satisfies that for $k\leq k_0$,
\begin{align}
	\frac{ \lambda_0}{ \lambda^{k}} C_{max,co}   \subseteq \Proj{n}(C_{max,p_0+kN}) \subseteq    C_{max,co}; \label{eqn:conv_1}
\end{align}
for $k \geq k_0$,
\begin{align}
	\left( 1 - c a^{k} \right) C_{max,co} \subseteq \Proj{n}(C_{max,p_0+kN}) \subseteq C_{max,co},\label{eqn:conv_2}
\end{align}
where $k_0$ is defined by \eqref{eqn:k_0}, $a= (1- \gamma)/(1 - \gamma \lambda)$ and ${c= (1-  \lambda_0/ \lambda^{k_0})a^{k_0}}$.
\end{theorem}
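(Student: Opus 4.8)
The plan is to obtain Theorem~\ref{thm:conv_asym} as a direct composition of the inner-bound inclusion \eqref{eqn:inner_bound_proj} with Lemma~\ref{lem:contraction} and Theorem~\ref{thm:conv_suff}, since the analytic substance has already been established in those statements. First I would invoke Lemma~\ref{lem:contraction}, which under Assumptions~\ref{asp:asymp_stab}, \ref{asp:cmpt} and \ref{asp:int} produces a triple $(\gamma,N,\lambda)$ with $\gamma\in(0,1]$, $N$ a positive integer, and $\lambda\in[0,1)$ such that $\gamma C_{max,co}$ is an $N$-step $\lambda$-contractive CIS of $\mathcal{D}(\Sigma)$ within $S_{xu}$. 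These are exactly the parameters promised in the theorem, so they are fixed once and for all at this point, and no independent existence argument is needed.

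With $(\gamma,N,\lambda)$ in hand, the contractivity hypothesis of Theorem~\ref{thm:conv_suff} is met, so I would apply it to get the two-regime lower bounds \eqref{eqn:C_max_conv_1} and \eqref{eqn:C_max_conv_2} on the backward reachable set $Pre_{\mathcal{D}(\Sigma)}^{kN}(\lambda_0 C_{max,co},S_{xu}\times D)$, with $k_0$ as in \eqref{eqn:k_0}. The next step is to transfer these bounds to the projection $\Proj{n}(C_{max,p_0+kN})$. For this I would apply the inner-bound inclusion \eqref{eqn:inner_bound_proj} with its step count set to $kN$, so that the preview horizon on the right becomes $p_0+kN$; since \eqref{eqn:inner_bound_proj} descends from Lemma~\ref{lem:pre_expand}, which holds for every positive integer, this substitution is legitimate. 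Chaining the lower bounds from Theorem~\ref{thm:conv_suff} with \eqref{eqn:inner_bound_proj} by transitivity of set inclusion then delivers $\frac{\lambda_0}{\lambda^{k}}C_{max,co}\subseteq\Proj{n}(C_{max,p_0+kN})$ for $k\leq k_0$ and the analogous inclusion for $k\geq k_0$, while the outer inclusion $\Proj{n}(C_{max,p_0+kN})\subseteq C_{max,co}$ is read off directly from the right-hand side of \eqref{eqn:inner_bound_proj}.

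The remaining work is essentially cosmetic: the $k\geq k_0$ lower bound from Theorem~\ref{thm:conv_suff} reads $\left(1-(1-\lambda_0/\lambda^{k_0})a^{\,k-k_0}\right)C_{max,co}$ with $a=(1-\gamma)/(1-\gamma\lambda)$, and I would collect the $k$-independent part into the single constant $c$ so that the bracketed scalar takes the compact form $1-ca^{k}$ of \eqref{eqn:conv_2}. Because $\gamma\in(0,1]$ and $\lambda\in[0,1)$ force $a\in[0,1)$, the term $a^{k}$ decays geometrically, which is precisely what makes the two-sided inclusions squeeze $\Proj{n}(C_{max,p_0+kN})$ onto $C_{max,co}$. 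I do not expect a genuine obstacle here, since the heavy lifting lives in the previously established Lemma~\ref{lem:contraction} and Theorem~\ref{thm:conv_suff}; the only places requiring real care are the bookkeeping—matching the $Pre$ exponent $kN$ to the horizon increment $p_0+kN$ and identifying $a$ and $c$ with the constants in the statement—and confirming that the initial factor $\lambda_0$ is strictly positive (via the interiority of the forced equilibrium in $\Proj{n}(C_{max,p_0})$), so that the seed set $\lambda_0 C_{max,co}$ is full-dimensional and the resulting bounds are informative rather than vacuous.
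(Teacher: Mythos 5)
Your proposal is correct and is exactly the paper's own argument: the paper gives no separate proof of Theorem~\ref{thm:conv_asym}, obtaining it precisely by combining \eqref{eqn:inner_bound_proj} (i.e., Lemma~\ref{lem:pre_expand} with step count $kN$ applied to the seed $\lambda_0 C_{max,co}$ from \eqref{eqn:lambda_0}) with Lemma~\ref{lem:contraction} and Theorem~\ref{thm:conv_suff}, and then relabeling constants. One detail worth flagging: collecting the $k$-independent factor as you describe gives $c=(1-\lambda_0/\lambda^{k_0})a^{-k_0}$ rather than the printed $c=(1-\lambda_0/\lambda^{k_0})a^{k_0}$ --- since $a\in[0,1)$ these agree only when $k_0=0$, so the exponent in the theorem's statement appears to be a sign typo, and your constant is the one that Theorem~\ref{thm:conv_suff} actually supports (and the one consistent with the decay rate used in Corollary~\ref{cor:conv_asym}).
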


{In the case of $ \lambda_0 = 0$, Theorem \ref{thm:conv_asym} is trivial as $ k_0 = + \infty$ and the leftmost set in \eqref{eqn:conv_1} becomes $\{0\}$. That is why we need Assumption \ref{asp:int2} to enforce $ \lambda_0 > 0$ and exclude this trivial case.} The results in this subsection is summarized by the following corollary.
\begin{corollary} \label{cor:conv_asym} 
Under Assumptions \ref{asp:asymp_stab}, \ref{asp:cmpt}, \ref{asp:int} and \ref{asp:int2},  the projection of $C_{max,p}$ onto the first $n$-coordinates converges to the maximal RCIS $C_{max,co}$ of the disturbance collaborative system $\mathcal{D}(\Sigma)$ in Hausdorff distance, that is 
\begin{align*}
d_{p}=d(\Proj{n}(C_{max,p}), C_{max,co}) \xrightarrow{p \rightarrow \infty} 0. 
\end{align*}
Furthermore, the Hausdorff distance $d_{p}$ satisfies the following inequality: For ${p_0\leq p < p_0 + N(k_0+1)}$,
\begin{align} \label{eqn:conv_rate_asymp_1} 
	d_{p} \leq (1-\lambda_0 \lambda^{-\lfloor (p-p_0)/N \rfloor}) r_{co};
\end{align}
for $p \geq p_0 + N(k_0+1)$,
\begin{align} \label{eqn:conv_rate_asymp_2}
    d_{p} \leq ca^{ \lfloor (p-p_0)/N \rfloor} r_{co},
\end{align}
with  $r_{co}$ is {the radius of the smallest ball centered at $0$ that contains $C_{max,co}$}. The other constants $ k_0$, $N$, $\lambda_0$, $ \lambda$, $a$ and $c$ are the same as in Theorem \ref{thm:conv_asym}.
\end{corollary}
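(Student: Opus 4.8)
The plan is to turn the set-inclusion bounds of Theorem~\ref{thm:conv_asym}, which hold only at the subsampled horizons $p=p_0+kN$, into Hausdorff-distance bounds, and then propagate them to every $p\ge p_0$ via the monotonicity of the projection. Throughout I use that, by Remark~\ref{rem:origin} and Assumption~\ref{asp:int2}, the origin lies in $C_{max,co}$, and that $C_{max,co}$ is convex and compact (being the maximal CIS of the convex system $\mathcal{D}(\Sigma)$ with convex safe set).

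First I would establish the elementary geometric fact that for a convex compact set $C$ containing the origin with radius $r$ and any $\mu\in[0,1]$, one has $\mu C\subseteq C$ and $d(\mu C, C)\le (1-\mu)r$: indeed every $x\in C$ is matched to $\mu x\in\mu C$ with $\Vert x-\mu x\Vert_2=(1-\mu)\Vert x\Vert_2\le(1-\mu)r$, and since $\mu C\subseteq C$ the reverse one-sided distance vanishes. Combined with the observation that $A\subseteq\Proj{n}(C_{max,p})\subseteq C_{max,co}$ forces the two-sided Hausdorff distance to collapse to the one-sided distance measured from $C_{max,co}$ inward, I obtain the monotonicity $d(\Proj{n}(C_{max,p}),C_{max,co})\le d(A,C_{max,co})$ for any inner bound $A$.

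Next I would substitute the two inner bounds of Theorem~\ref{thm:conv_asym}. For $k\le k_0$ the inner set is $(\lambda_0/\lambda^{k})C_{max,co}$ and for $k\ge k_0$ it is $(1-ca^{k})C_{max,co}$; both have the form $\mu C_{max,co}$ with $\mu\in[0,1]$ (the upper inclusion $\Proj{n}(C_{max,p})\subseteq C_{max,co}$ guarantees $\mu\le1$). The geometric fact then yields $d(\Proj{n}(C_{max,p_0+kN}),C_{max,co})\le(1-\lambda_0\lambda^{-k})r_{co}$ for $k\le k_0$ and $\le ca^{k}r_{co}$ for $k\ge k_0$. Since $a=(1-\gamma)/(1-\gamma\lambda)\in[0,1)$ --- strict because $\lambda<1$ --- the second bound tends to $0$, so $d_{p_0+kN}\to 0$ and the Hausdorff limit of the projections is $C_{max,co}$.

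Finally I would remove the subsampling. For general $p\ge p_0$ set $k=\lfloor(p-p_0)/N\rfloor$, so $p_0+kN\le p$; by the inner-bound part of Theorem~\ref{thm:bnds} (equivalently Lemma~\ref{lem:pre_expand}) the projections are nested, hence $\Proj{n}(C_{max,p_0+kN})\subseteq\Proj{n}(C_{max,p})\subseteq C_{max,co}$ and the inner bound at index $p_0+kN$ also bounds index $p$. Translating the ranges, $p_0\le p<p_0+N(k_0+1)$ gives $k\le k_0$ and \eqref{eqn:conv_rate_asymp_1}, while $p\ge p_0+N(k_0+1)$ gives $k\ge k_0+1$ and \eqref{eqn:conv_rate_asymp_2}; nestedness together with the squeeze also shows the limit in \eqref{eqn:dp} exists and equals $C_{max,co}$. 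The main obstacle is not analytic but careful bookkeeping: getting the one-sided-Hausdorff-to-radial-scaling step right (which is where convexity and $0\in C_{max,co}$ are essential) and aligning the floor index $\lfloor(p-p_0)/N\rfloor$ with the two regimes $k\le k_0$ and $k\ge k_0$ in the range conditions. No new convergence argument is needed beyond Theorem~\ref{thm:conv_asym} and the monotonicity of $\Proj{n}(C_{max,p})$.
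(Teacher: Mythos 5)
Your proof is correct and follows exactly the route the paper intends: the paper states Corollary~\ref{cor:conv_asym} as an immediate summary of Theorem~\ref{thm:conv_asym}, and the steps you supply --- converting the inclusions $\mu C_{max,co} \subseteq \Proj{n}(C_{max,p_0+kN}) \subseteq C_{max,co}$ into the bound $d_{p_0+kN}\leq(1-\mu)r_{co}$ via radial scaling of a convex set containing the origin, and then extending to all $p\geq p_0$ through the monotonicity $\Proj{n}(C_{max,p'})\subseteq\Proj{n}(C_{max,p})$ from Theorem~\ref{thm:bnds} with $k=\lfloor(p-p_0)/N\rfloor$ --- are precisely the implicit argument, with the index bookkeeping matching the two regimes of the corollary.
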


According to \eqref{eqn:conv_rate_asymp_2}, as we increase the preview horizon, the safety regret decays exponentially fast. 
We further define the marginal value $\Delta d_{p}$ of the preview at preview horizon $p$ as the Hausdorff distance between $\Proj{n}(C_{max,p})$ and $\Proj{n}(C_{max,p+N})$. Then, by \eqref{eqn:conv_rate_asymp_2}, for $p\geq p_0+N(k_0+1)$, 
\begin{align}\label{eqn:Dd_conv_rate_asymp_2}
   \Delta d_{p} = d(\Proj{n}(C_{max,p}), \Proj{n}(C_{max,p+N})) \\
   \leq d_{p} + d_{p+N}	\leq c(1+a)a^{ \lfloor (p-p_0)/N \rfloor} r_{co}.
\end{align}
Thus, we show that the marginal value of preview decays exponentially fast as $p$ increases. 
{
\begin{remark}
The upper bounds of $d_p$ and $\Delta d_p$ in \eqref{eqn:conv_rate_asymp_2} and \eqref{eqn:Dd_conv_rate_asymp_2} are functions of the parameters $\gamma$, $N$ and $\lambda$. By the discussion right after  Lemma \ref{lem:contraction}, there are more than one feasible $\gamma$, $N$ and $\lambda$. Conceptually, a tighter upper bound of $d_p$ (or $\Delta d_p$) is obtained by taking the infinium of the upper bounds in \eqref{eqn:conv_rate_asymp_2} (or \eqref{eqn:Dd_conv_rate_asymp_2}) over all the feasible $\gamma$, $N$ and $\lambda$.
\end{remark}
}

\subsection{Estimating the parameters in Theorem \ref{thm:conv_asym}} 
\label{sec:param_asym} 
In this subsection, we show how to numerically obtain an exponentially decaying upper bound of $d_{p}$ predicted by Theorem \ref{thm:conv_asym} for a given system. Specifically, we propose a sequence of optimization programs to estimate the parameters $ \lambda_0$, $ \gamma$, $N$ and $ \lambda$ in Theorem \ref{thm:conv_asym}. To make the computation tractable, we assume that the safe set $S_{xu}$ and the disturbance set $D$ are represented by polytopes. In case where the maximal CIS $C_{max,co}$ and the maximal RCIS $C_{max,p_0}$ are not polytopes, we use a polytopic outer approximation of $C_{max,co}$ and a polytopic inner approximation of $C_{max,p_0}$ instead, which results in a lower estimate of $ \lambda_0$.  Clearly, results in Section \ref{sec:conv_asym} still hold when $ \lambda_0$ is replaced by its lower estimate. There is a rich literature of computing polytopic inner or outer approximations of the maximal RCIS for discrete-time linear systems (see \cite{rakovic2007optimized, rungger2017computing, anevlavis2019computing} for example).   

\textbf{Step 1:}  We check if the origin is a forced equilibrium of $\mathcal{D}(\Sigma)$ that satisfies Assumptions \ref{asp:int} and \ref{asp:int2}. If not, we find a forced equilibrium $(x_{e},u_{e},d_{e})$ satisfying Assumptions \ref{asp:int} and \ref{asp:int2} by the following linear program:
\begin{align} \label{eqn:LP_int} 
	\begin{split}
	 \epsilon^{*} = &\max_{ \epsilon \geq 0, x_{e}, u_{e}, d_{e}}   \epsilon\\
	\text{s.t. }&  Ax_{e} + Bu_{e} + Ed_{e} = x_{e},\\
				&  x_{e}+  \epsilon \mathcal{B}(n)\subseteq \Proj{n}(C_{max,p_0}),\\
				&  (x_e,u_e,d_e) + \epsilon \mathcal{B}(n+m+l) \subseteq S_{xu}\times D.
\end{split}
\end{align}
 Given the $H$-representations of $C_{max,p_0}$, $S_{xu}$ and $D$, the second and third constraints of \eqref{eqn:LP_int} can be easily encoded as linear inequality constraints by iterating vertices of the hypercubes $\mathcal{B}(n)$ and $\mathcal{B}(n+m+l)$.

{By maximizing the cost $\epsilon$ in \eqref{eqn:LP_int}, we push the equilibrium point $(x_e,u_e,d_e)$ more towards the interior of $S_{xu}\times D$ and $\Proj{n}(C_{max,p_0})\times \R^{m+l}$.} When $ \epsilon^{*}>0$, the solution $(x_{e},u_{e},d_{e})$ of \eqref{eqn:LP_int} is a feasible forced equilibrium satisfying Assumptions \ref{asp:int} and \ref{asp:int2}. Then, as highlighted in Remark \ref{rem:origin}, we shift the origin of the state-input-disturbance space to the forced equilibrium computed in \eqref{eqn:LP_int}, that is to shift the sets $S_{xu}$, $D$, $C_{max,co}$ and $C_{max,p_0}$ accordingly. 

\textbf{Step 2:} We want to compute (lower-estimates of) the initial factor $ \lambda_0$.  We first introduce a baseline method, with two steps: 
The first step is to find the maximal $ \lambda$ such that ${ \lambda C_{max,co} \subseteq\mathcal{B}(n)}$. This scalar $ \lambda$ can be computed by a linear program. 
By the construction of \eqref{eqn:LP_int}, we know $ \epsilon^* \mathcal{B}(n) \subseteq \Proj{n}(C_{max,p_0})$, where $ \epsilon^{*}$ is the optimal cost of \eqref{eqn:LP_int}.   Thus, $ \epsilon^{*} \lambda C_{max,co}	\subseteq \epsilon^{*} \mathcal{B}(n) \subseteq \Proj{n}(C_{max,p_0})$. That is, $ \epsilon^{*} \lambda$ provides a lower estimate of $ \lambda_0$. The benefits of this method include (i) whenever \eqref{eqn:LP_int} returns a positive $ \epsilon^{*}$ \footnote{Note that if \eqref{eqn:LP_int} returns $ \epsilon^{*}=0$, there is no need to compute $ \lambda_0$ anymore, as Assumption \ref{asp:int2} cannot be verified. }, the estimated $ \lambda_0$ is guaranteed to be positive, and (ii) the computation is easy. The main drawback is that the estimated $ \lambda_0$ can be very conservative.

Alternatively, according to \eqref{eqn:lambda_0}, the estimation of $ \lambda_0$ can be formulated as a polytope containment problem\cite{sadraddini2019linear}: Let $P_1$ and $P_2$ be two polytopes with $H$-representation $[H_1\ h_1]$ and $[H_2\ h_2]$, where $h_i\in \R^{q_{i}}$ for $i=1$, $2$. {Our goal is to find the maximal $ \lambda$ such that $ \lambda P_1 \subseteq P_2$, which is equivalent to find the minimal $r$ such that $P_1 \subseteq r P_2$.} Then, according to Farkas' Lemma, the minimal $r$ can be obtained by the following linear program:
\begin{align}
	\begin{split} \label{eqn:LP_1} 
	r^{*} = &\min_{ r, \Lambda\in R_{+}^{q_{2}\times q_{1}}} r\\
	\text{s.t. } & \Lambda H_{1} = H_{2}\\
				 &\Lambda h_{1} \leq r h_{2}.
	\end{split}
\end{align}

Thus, by replacing the polytopes $P_1$ and $P_2$ in \eqref{eqn:LP_1} with $C_{max,co}$ and $\Proj{n}(C_{max,p_0})$ (or their polytopic approximations), we obtain an estimate of $\lambda_0$ as the reciprocal of the optimal solution $ r^{*}$ of \eqref{eqn:LP_1}.  

This alternative method returns more accurate $ \lambda_0$ than the baseline. When the $H$-repesentations of $C_{max,co}$ and $\Proj{n}(C_{max,p_0})$ are exact (instead of approximated), $ \lambda_0$ estimated by \eqref{eqn:LP_1} matches the true $ \lambda_0$. But, this method is more time consuming than the baseline. Recall that we only have the $H$-representation of $C_{max,p_0}$, but \eqref{eqn:LP_1} needs the $H$-representation of the projection $\Proj{n}(C_{max,p_0})$ of $C_{max,p_0}$. The projection operation of polytopes in $H$-representation is computationally expensive. 

It is also possible to encode the polytope containment constraint in \eqref{eqn:lambda_0} directly based on the $H$-representations of $ C_{max,co}$ and $C_{max,p_0}$, which enables us to estimate $ \lambda_0$ without the projection step\cite{sadraddini2019linear}: Suppose that the $H$-representations of $C_{max,co}$ and $C_{max,p_0}$ are $[H_1\ h_1]$ and $[H_2\ h_2]$ with $h_i\in \R^{q_i}$, $i=1$, $2$. Then, $ \lambda_0$ can be estimated by the reciprocal of the optimal solution of the following linear program: 
\begin{align}
	\begin{split} \label{eqn:LP_2} 
		\lambda_{0}^{-1} = &\min_{ r, \Gamma\in \R^{n_{p}\times n}, \beta\in \R^{n_{p}}, \Lambda\in R_{+}^{q_{2}\times q_{1}}} r\\
	\text{s.t. } & \Lambda H_{1} = H_{2} \Gamma\\
				 &\Lambda h_{1} \leq r h_{2} + H_2 \beta\\
				 & \mathcal{P} \Gamma = I,\\
				 & \mathcal{P} \beta=0,
	\end{split}
\end{align}
where $n_{p}= n+p_0 l$, $I$ is the $n\times n$ identity matrix, and $\mathcal{P}$ is the projection matrix that maps points in $\R^{n_{p}}$ onto the first $n$ coordinates. The linear program in \eqref{eqn:LP_2} is formulated based on a sufficient condition of polytope containment in \cite{sadraddini2019linear}. Thus, $ \lambda_0$ estimated by  \eqref{eqn:LP_2} is more conservative than $ \lambda_0$ estimated by \eqref{eqn:LP_1}.  

To summarize, we propose three methods to estimate $ \lambda_0$, with different conservativeness and computation cost. As discussed in Section \ref{sec:conv_asym}, the estimate of $ \lambda_0$ is meaningful only when it is positive, which can only be guaranteed by the first two methods. Thus, in practice, one can first obtain a positive baseline estimate of $ \lambda_0$ via the first method, and then select the second or third method, based on the available computation power, for a potentially better estimate of $ \lambda_0$. 

\textbf{Step 3:}  We propose a method to find feasible $ \gamma$, $N$ and $\lambda$, inspired by the proof of Lemma \ref{lem:contraction}. First, compute a $\lambda_{a}$-contractive ellipsoidal CIS of $\mathcal{D}(\Sigma)$ for some $\lambda_{a}$ in $[0,1)$.  Here we formulate a bilinear program\footnote{The constraints in \eqref{eqn:lambda_a} that encode $\sqrt{r}$-contractive ellisoidal CISs can be found in Remark 4.1 of \cite{blanchini1994ultimate} and Section 4.4.2 of \cite{blanchini2008set}} to calculate the minimal possible $ \lambda_{a}$:
\begin{align} \label{eqn:lambda_a} 
		&\lambda_{a}^{2} = \min_{ r\geq 0, Q \succ 0, R_1, R_2} ~ r\\
		&\text{subject to } \nonumber\\
					   &\begin{bmatrix}
	Q & AQ+BR_1+ER_2\\
	(AQ+BR_1+ER_2)^{T} & r Q	
	\end{bmatrix} \succ 0. \nonumber
\end{align}
{If we fix the value of $r$, the nonconvex program in \eqref{eqn:lambda_a} becomes a convex feasibility problem. Also, if the program is feasible at $r=r'$, then the optimal $r \leq r'$. 
Thus, the bilinear program in \eqref{eqn:lambda_a} can be solved by a bisection algorithm, where at each step a convex feasibility problem is solved.} For any stabilizable system,  the optimal value  $ \lambda_{a}^{2}$ of \eqref{eqn:lambda_a} must be smaller than one\footnote{See Proposition 23 of \cite{de2004computation}.}.  Given the optimal solution $ r=\lambda_{a}^{2}$, $Q$, $R_1$ and $R_2$ of \eqref{eqn:lambda_a}, the ellipsoid ${\mathcal{E}(c) = \{ x \mid x^{T}Q^{-1}x \leq c^{2}\}} $ is a $ \lambda_{a}$-contractive CIS of $\mathcal{D}(\Sigma)$ for any $c >0$, and the controller $u=R_1 Q^{-1}x$, $u_{d} = R_2 Q^{-1}x$ is a safe controller for any state $x$ in $\mathcal{E}(c)$.  

Let $c_{0}$ be a scalar such that $(x, R_1Q^{-1}x)$ is in $S_{xu}$ and $R_2 Q^{-1}x $ is in $ D$ for all $x\in \mathcal{E}(c_0)$. Let $c_{out}$ be a scalar such that $C_{max,co} \subseteq \mathcal{E}(c_{out})$. The second step of finding $ \gamma$, $ N$ and $ \lambda$ is to estimate the maximal $ c_0$ and the minimal $c_{out}$. 

Suppose that the Cholesky decomposition of the positive definite matrix $Q$ is ${Q=LL^{T}}$ for some invertible matrix $L\in \R^{n\times n}$.  Then, the ellipsoid $ \mathcal{E}(c)$ can be represented equivalently by $\mathcal{E}(c) =\{c L s \mid s^{T}s \leq 1\}$. By Section 8.4.2 of \cite{boyd2004convex}, the maximal $c_0$ is given by the optimal value of the following linear program:
\begin{align}
	\begin{split} \label{eqn:c_0} 
	&c_0 = \max_{c\geq 0} ~ c\\
	&\text{subject to } \\
					   &c \Vert [L^{T}\ L^{-1}R_1^{T}] H^{T}_{xu,i}\Vert_{2} \leq h_{xu,i}, \quad i\in[q_{xu}]\\ 
					   &c \Vert L^{-1}R_2^{T}H^{T}_{D,j}\Vert_{2} \leq  h_{D,j},\quad j\in [q_{d}],
	\end{split}
\end{align}
where $[H_{xu,i}\ h_{xu,i}]$ is the $i$ th row of the $H$-representation $[H_{xu}\ h_{xu}]$ of $S_{xu}$, and $[H_{D,j}\ h_{D,j}]$ is the $j$ th row of the $H$-repsentation $[H_{D}\ h_{D}]$ of $D$, and $q_{xu}$ and $q_{d}$ are the numbers of rows of $H_{xu}$ and $H_{D}$.  

Suppose that the set of vertices of $C_{max,co}$ is $\mathcal{V}$. Then, the square of the minimal $c_{out}$ is equal to  
   $ c_{out} = \max_{v\in \mathcal{V}} \sqrt{v^{T}Q^{-1}v}$. 
However, it is usually time consuming to compute the vertices of $C_{max,co}$ given its $H$-representation.  An alternative way is to first find the minimal bounding rectangle of $C_{max,co}$ and then use the minimal $c$ such that $\mathcal{E}(c)$ contains the minimal bounding rectangle as a conservative estimate of $c_{out}$. 

Once we have the maximal $c_{0}$ and a feasible $c_{out}$, a set of feasible $ \gamma$, $N$ and $ \lambda$ is given by the following theorem.
\begin{theorem} \label{thm:parameters_asym} 
  Given the $ \lambda_{a}$-contractive ellipsoidal CIS $\mathcal{E}(c)$ of $\mathcal{D}(\Sigma)$, the maximal $c_0$ and a feasible $c_{out}$, let 
  \begin{align}
     \gamma &=  c_0/  c_{out}, \label{eqn:gamma} \\
	 N &= \left\lfloor \frac{\log(\gamma)}{\log ( \lambda_{a})} \right\rfloor+1, \label{eqn:N}\\
	 \lambda &=  \lambda_{a}^{N}/ \gamma. \label{eqn:lambda} 
  \end{align}
Then $ \gamma C_{max,co}$ is an $N$-step $ \lambda$-contractive CIS of $\mathcal{D}(\Sigma)$, with $ \gamma \leq 1$ and $ \lambda < 1$.
\end{theorem}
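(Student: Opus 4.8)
The plan is to use the $\lambda_{a}$-contractive ellipsoid $\mathcal{E}(c)=\{x\mid x^{T}Q^{-1}x\leq c^{2}\}$ as a bridge between $\gamma C_{max,co}$ and its target $\lambda\gamma C_{max,co}$, so that the whole statement reduces to two elementary containments together with some floor-function arithmetic. First I would record that ellipsoids centered at the origin scale as $\mu\mathcal{E}(c)=\mathcal{E}(\mu c)$ for $\mu\geq 0$. Combined with $C_{max,co}\subseteq\mathcal{E}(c_{out})$ and the choice $\gamma=c_{0}/c_{out}$ of \eqref{eqn:gamma}, this gives $\gamma C_{max,co}\subseteq\gamma\mathcal{E}(c_{out})=\mathcal{E}(\gamma c_{out})=\mathcal{E}(c_{0})$. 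Second, since the feedback $u=R_{1}Q^{-1}x$, $u_{d}=R_{2}Q^{-1}x$ is admissible on all of $\mathcal{E}(c_{0})$ by the definition of $c_{0}$, and $\mathcal{E}(c_{0})$ is $\lambda_{a}$-contractive, the ellipsoid $\mathcal{E}(c_{0})$ is itself a CIS of $\mathcal{D}(\Sigma)$; maximality of $C_{max,co}$ then yields $\mathcal{E}(c_{0})\subseteq C_{max,co}$.

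With these two containments in hand, the bound $\gamma\leq 1$ is immediate: from $\mathcal{E}(c_{0})\subseteq C_{max,co}\subseteq\mathcal{E}(c_{out})$ and monotonicity of $\mathcal{E}(\cdot)$ in the radius we get $c_{0}\leq c_{out}$, hence $\gamma=c_{0}/c_{out}\leq 1$. For the claim that $\gamma C_{max,co}$ is a CIS, and not merely $N$-step contractive, I would invoke that $C_{max,co}$ is convex, compact and contains the origin, which is a forced equilibrium after the shift in Remark \ref{rem:origin}, together with convexity of $S_{xu,co}=S_{xu}\times D$: for $x=\gamma z$ with $z\in C_{max,co}$, scaling a feasible input pair at $z$ by $\gamma$ and taking the convex combination with the origin keeps the triple in $S_{xu,co}$ and drives $x$ into $\gamma C_{max,co}$.

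The core of the argument is the $N$-step contraction. Fix $x\in\gamma C_{max,co}\subseteq\mathcal{E}(c_{0})$ and run the ellipsoidal feedback for $N$ steps; writing $M=A+BR_{1}Q^{-1}+ER_{2}Q^{-1}$ for the closed-loop matrix, the LMI \eqref{eqn:lambda_a} defining $\lambda_{a}$ gives $M^{T}Q^{-1}M\preceq\lambda_{a}^{2}Q^{-1}$, so the $j$-th iterate obeys $M^{j}x\in\mathcal{E}(\lambda_{a}^{j}c_{0})\subseteq\mathcal{E}(c_{0})$. Since every iterate stays in $\mathcal{E}(c_{0})$, the feedback is admissible at each step and the whole length-$N$ trajectory lies in the safe set. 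The terminal state lies in $\mathcal{E}(\lambda_{a}^{N}c_{0})=\lambda_{a}^{N}\mathcal{E}(c_{0})\subseteq\lambda_{a}^{N}C_{max,co}$, using $\mathcal{E}(c_{0})\subseteq C_{max,co}$ from the first paragraph. Because the choice $\lambda=\lambda_{a}^{N}/\gamma$ of \eqref{eqn:lambda} forces $\lambda\gamma=\lambda_{a}^{N}$, this terminal set is exactly $\lambda\gamma C_{max,co}$, so every $x\in\gamma C_{max,co}$ reaches $\lambda\gamma C_{max,co}$ in $N$ admissible steps, i.e. $\gamma C_{max,co}\subseteq Pre_{\mathcal{D}(\Sigma)}^{N}(\lambda\gamma C_{max,co},S_{xu,co})$.

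Finally I would check $N\geq 1$ and $\lambda<1$. Taking $\lambda_{a}\in(0,1)$, with the degenerate deadbeat case $\lambda_{a}=0$ being trivial, since $\gamma\leq 1$ both $\log\gamma\leq 0$ and $\log\lambda_{a}<0$, so $\log\gamma/\log\lambda_{a}\geq 0$ and $N=\lfloor\log\gamma/\log\lambda_{a}\rfloor+1$ is a positive integer strictly larger than $\log\gamma/\log\lambda_{a}$; multiplying by $\log\lambda_{a}<0$ reverses the inequality to $N\log\lambda_{a}<\log\gamma$, i.e. $\lambda_{a}^{N}<\gamma$, whence $\lambda=\lambda_{a}^{N}/\gamma<1$. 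I expect the main obstacle to be the bookkeeping in the contraction step, where one must simultaneously verify that the iterates remain in $\mathcal{E}(c_{0})$, so that the feedback stays admissible and the trajectory stays safe, and that the terminal ellipsoid $\mathcal{E}(\lambda_{a}^{N}c_{0})$ nests inside the scaled target $\lambda\gamma C_{max,co}$; the latter rests entirely on the containment $\mathcal{E}(c_{0})\subseteq C_{max,co}$ from maximality, which is the single nontrivial structural ingredient, everything else being monotonicity of ellipsoids and elementary arithmetic.
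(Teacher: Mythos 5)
Your proof is correct and follows essentially the same route as the paper's: both rest on the sandwich $\gamma C_{max,co} \subseteq \mathcal{E}(c_0) \subseteq C_{max,co}$ (the left inclusion from $C_{max,co}\subseteq\mathcal{E}(c_{out})$ and scaling, the right from maximality of $C_{max,co}$ over the contractive CIS $\mathcal{E}(c_0)$), then exploit $\lambda_a$-contractiveness of $\mathcal{E}(c_0)$ to reach $\lambda_a^{N}C_{max,co} = \lambda\gamma C_{max,co}$ in $N$ steps, and close with the same floor-function arithmetic giving $\lambda_a^{N}<\gamma$. The only differences are presentational — you run the explicit closed-loop iterates $M^{j}x$ of the LMI feedback where the paper argues abstractly via monotonicity of $Pre$ — and you additionally supply arguments for $\gamma\leq 1$ and for $\gamma C_{max,co}$ being a CIS, two claims in the theorem statement that the paper's proof asserts but does not spell out.
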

The intuition behind Theorem \ref{thm:parameters_asym} is: By the definition of $c_0$, the ellipsoid $ \mathcal{E}(c_0)$ is a $\lambda_{a}$-contractive CIS of $\mathcal{D}(\Sigma)$ in $S_{xu}\times D$ and thus is contained in the maximal CIS $C_{max,co}$, as shown in Fig. \ref{fig:c_out}.  Then, for $ \gamma = c_{0}/ c_{out} $, we have $ \gamma C_{max,co}$ contained in the $ \lambda_{a}$-contractive CIS $\mathcal{E}(c_0)$. As $\mathcal{E}(c_0)$ can be controlled to reach an arbitrary small neighborhood of the origin over time (by being $\lambda_{a}$-contractive), for an arbitrary small $\lambda$, $ \gamma C_{max,co}$ is $N$-step $ \lambda$-contractive for a large enough $N$.  

\begin{figure}[]
	\centering
	\includegraphics[width=0.35\textwidth]{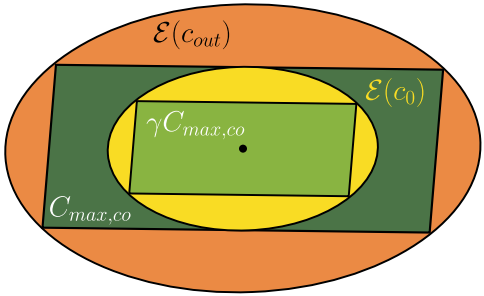}
	\caption{The inclusion relations among $\mathcal{E}(c_{0})$ (yellow), $\mathcal{E}(c_{out})$ (orange), $C_{max,co}$ (dark green) and $ \lambda C_{max,co}$ (light green).}
	\label{fig:c_out}
\end{figure}

\textbf{Step 4 (Refinement):} If the $H$-representation of $C_{max,co}$ is accurate instead of an outer approximation, given the set of feasible parameters $ \gamma$ and $N$ and $ \lambda$ in Theorem \ref{thm:parameters_asym}, we can find another set of feasible parameters which potentially lead to a tighter upper bound of $d_{p}$: Recall that $ \gamma C_{max,co}$ is $N$-step $ \lambda$-contractive.  Let $\gamma^{*}$ be the maximal scalar such that $Pre_{\mathcal{D}(\Sigma)}^{ N} ( \lambda  \gamma C_{max,co}, S_{xu}\times D)$ contains $ \gamma^{*} C_{max,co}$. Here $ \gamma^{*}$ can be solved by a linear program in form of \eqref{eqn:LP_1}. It can be shown that $ \gamma^{*}C_{max,co}$ is $N$-step $ \lambda \gamma/ \gamma^{*}$-contractive.  
Then, we replace the estimated $ \gamma$ and  $ \lambda$ by $ \gamma^{*}$ and $ \lambda \gamma/ \gamma^{*}$.  

The entire procedure (Steps 1-4) for estimating $ \lambda_0$, $\gamma$, $N$ and $ \lambda$ is summarized into Alg. \ref{alg:1}. One can omit the refinement step by removing the last \emph{if} statement in Alg. \ref{alg:1}. In Section \ref{sec:ex2}, we show numerically that the estimated upper bound is much tighter when the refinement is applied. 
\begin{algorithm}
	\caption{Estimating parameters in Theorem \ref{thm:conv_asym}}
\label{alg:1}
\begin{algorithmic}
\State \textbf{input:} $(A,B,E)$, $S_{xu}$, $D$, $C_{max,co}$, $C_{max,p_0}$
\State $(x_{e},u_{e},d_{e}) \gets 0$
\State $ \epsilon^{*}\gets$ solve the LP in \eqref{eqn:LP_int} with $(x_{e},u_{e}, d_{e})$ fixed to $0$
\If{ $ \epsilon^{*} = 0$ \Comment{that is, $(x_{e},u_{e},d_{e})$ is not a forced equilibrium of $\mathcal{D}(\Sigma)$ satisfying Assumptions \ref{asp:int} and  \ref{asp:int2}}
}
\State $(x_{e},u_{e},d_{e}), \epsilon^{*}\gets$ solve the LP in \eqref{eqn:LP_int}
\State $S_{xu}\gets S_{xu}-(x_{e},u_{e})$
\State $D \gets D - d_{e}$ \Comment{shift the origin to $(x_{e},u_{e},d_{e})$}
\EndIf
\State $r^{*}\gets$ solve the LP in \eqref{eqn:LP_1} with $P_1= C_{max,co}$, $ P_2=\mathcal{B}(n)$
\State $ \lambda_0 \gets \epsilon^{*}/r^{*}$ (baseline estimate of $ \lambda_0$)
\If{$H$-representation of $\Proj{n}(C_{max,p_0})$ available} 
\State $r^{*}\gets $ solve the LP in \eqref{eqn:LP_1} with $P_1=C_{max,co}$, $P_2=\Proj{n}(C_{max,p_0})$
\Else 
\State $r^{*}\gets$ solve the LP in \eqref{eqn:LP_2}
\EndIf
\State $ \lambda_0 \gets \max(\lambda_0, 1/r^{*})$ \Comment{Pick the best estimate of $ \lambda_0$} 
\State $\lambda_{a}^{2}, Q, R_1, R_2\gets $ solve the convex program in \eqref{eqn:lambda_a}
\State $c_0\gets$ solve the LP in \eqref{eqn:c_0}
\If{the set $\mathcal{V}$ of vertices of $C_{max,co}$ available}
\State $c_{out} \gets \max_{v\in \mathcal{V}} \sqrt{v^{T}Q^{-1}v}$  
\Else
\State $\mathcal{V}\gets$ vertices of minimal rectangle containing $C_{max,co}$
\State $c_{out} \gets \max_{v\in \mathcal{V}} \sqrt{v^{T}Q^{-1}v}$  
\EndIf
\State $ \gamma, N, \lambda\gets$ RHSs of \eqref{eqn:gamma}, \eqref{eqn:N} and \eqref{eqn:lambda}   
\If {$H$-representation of $C_{max,co}$ is exact} 
\State $C_{N}\gets Pre^{N}_{\mathcal{D}(\Sigma)}( \lambda \gamma C_{max,co}, S_{xu}\times D)$
\State $\gamma^{*}\gets$ solving LP in \eqref{eqn:LP_1} with $P_1=C_{max,co}$, $P_2=C_{N}$
\State $ \gamma \gets \gamma^{*}$, $ \lambda \gets \lambda \gamma/ \gamma^{*}$
\EndIf
\State \Return $ \lambda_0 $, $\gamma $, $ N $, $ \lambda$
\end{algorithmic}
\end{algorithm} 

\subsection{Special case: controllable $\mathcal{D}(\Sigma)$} \label{sec:conv_ctrl} 
According to Corollary \ref{cor:conv_asym}, the convergence of $\Proj{n}(C_{max,p_0+Nk})$ may have two phases, depending on whether $k$ is greater or smaller than $k_0$. In this section, we show a single-phased convergence $\Proj{n}(C_{max,p})$ when the disturbance collaborative system $\mathcal{D}(\Sigma)$ is controllable. Furthermore, we show that Assumption \ref{asp:int2} is no longer required for the convergence of $\Proj{n}(C_{max,p})$.

\begin{assumption} \label{asp:ctrl} 
    The disturbance collaborative system $ \mathcal{D}(\Sigma)$ of the linear system $\Sigma$ is controllable. 
\end{assumption}   

The key observation for controllable disturbance collaborative system is stated by the following lemma.
\begin{lemma} \label{lem:contraction_0} 
   Under Assumptions \ref{asp:cmpt}, \ref{asp:int} and \ref{asp:ctrl}, for any $N \geq n$, there exists a scalar $ \gamma\in (0,1]$ such that $ \gamma C_{max,co}$ is a $N$-step $0$-contractive CIS of the system $\mathcal{D}(\Sigma)$ within the safe set $S_{xu}$, that is    
   \begin{align} \label{eqn:gamma_max} 
	 \gamma C_{max,co} \subseteq
   Pre_{\mathcal{D}(\Sigma)}^{N}(0, S_{xu}). 	
	\end{align}
\end{lemma}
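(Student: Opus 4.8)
The plan is to exploit that the disturbance collaborative system $\mathcal{D}(\Sigma)$ carries no disturbance---both $u$ and $u_{d}$ are controls---so that the backward reachable set $Pre^{N}_{\mathcal{D}(\Sigma)}(0,S_{xu})$ is governed purely by controllability of the pair $(A,[B\ E])$, with no robustness requirement on the intermediate states. Invoking Remark \ref{rem:origin}, I place the forced equilibrium of Assumption \ref{asp:int} at the origin, so that $0\in C_{max,co}$ and the origin of the state-input-disturbance space lies in the \emph{interior} of the safe set $S_{xu}\times D$ of $\mathcal{D}(\Sigma)$; hence there is $\rho>0$ with $\rho\,\mathcal{B}(n+m+l)\subseteq S_{xu}\times D$. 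By Assumption \ref{asp:cmpt}, $C_{max,co}$ is compact and therefore contained in a ball $r_{co}\,\mathcal{B}(n)$ about the origin.

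First I would build a deadbeat steering law. Since $\mathcal{D}(\Sigma)$ is controllable (Assumption \ref{asp:ctrl}) and $N\ge n$, the $N$-step controllability matrix $\mathcal{C}_{N}$ of $(A,[B\ E])$ has full row rank, so the equation $A^{N}x_{0}+\mathcal{C}_{N}V=0$, with $V=(v(0),\dots,v(N-1))$ and $v(k)=(u(k),u_{d}(k))$, is solvable; taking the minimum-norm solution $V=-\mathcal{C}_{N}^{\dagger}A^{N}x_{0}$ yields a control sequence that drives $x_{0}$ to the origin in exactly $N$ steps and depends \emph{linearly} on $x_{0}$. Consequently every intermediate state $x(t)$ and every input $(u(k),u_{d}(k))$ is bounded in norm by $M\Vert x_{0}\Vert$ for a constant $M=M(A,B,E,N)$.

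The heart of the proof is the scaling step. For $x_{0}\in\gamma C_{max,co}$ we have $\Vert x_{0}\Vert\le\gamma r_{co}$, so every point $(x(t),u(t),u_{d}(t))$ of the steered trajectory has norm at most $M\gamma r_{co}$. Choosing $\gamma=\min(1,\rho/(M r_{co}))$ pushes the entire trajectory together with its inputs into $\rho\,\mathcal{B}(n+m+l)\subseteq S_{xu}\times D$, so the steering is feasible and $\gamma C_{max,co}\subseteq Pre^{N}_{\mathcal{D}(\Sigma)}(0,S_{xu})$; this is exactly the $N$-step $0$-contractivity.

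It remains to verify that $\gamma C_{max,co}$ is a CIS for every $\gamma\in(0,1]$. Because $C_{max,co}$ is a convex CIS containing the origin as a forced equilibrium reached with zero input, for any $x\in\gamma C_{max,co}$ I scale the control $v$ witnessing invariance of $x/\gamma\in C_{max,co}$ to $\gamma v$: convexity of $S_{xu}\times D$ together with $(0,0,0)\in S_{xu}\times D$ keeps $(x,\gamma v)$ in the safe set, and linearity gives $Ax+[B\ E](\gamma v)=\gamma(A(x/\gamma)+[B\ E]v)\in\gamma C_{max,co}$. Taking the $\gamma$ from the scaling step then produces a set that is simultaneously a CIS and $N$-step $0$-contractive. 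I expect the main obstacle to be precisely this feasibility/scaling step: controllability alone only guarantees that the origin is reachable in $N$ steps, so it is the interior forced equilibrium (Assumption \ref{asp:int}) and compactness (Assumption \ref{asp:cmpt}) that supply the safe room needed to shrink the whole steering trajectory into $S_{xu}\times D$ by taking $\gamma$ small.
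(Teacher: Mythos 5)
Your proof is correct and follows essentially the same route as the paper's: both exploit controllability of $\mathcal{D}(\Sigma)$ to build a deadbeat law driving any state to the origin in at most $N$ steps (the paper via a nilpotent feedback gain $K$ with $A_c^{n}=0$, you via the minimum-norm open-loop solution $V=-\mathcal{C}_N^{\dagger}A^N x_0$), and then both use the interior forced equilibrium (Assumption 3 after the shift of Remark 1) together with compactness of $C_{max,co}$ to pick $\gamma$ small enough that the whole steering trajectory and its inputs fit inside $S_{xu}\times D$, giving $\gamma C_{max,co}\subseteq Pre^{N}_{\mathcal{D}(\Sigma)}(0,S_{xu}\times D)$. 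Your explicit verification that $\gamma C_{max,co}$ is itself a CIS (by convex scaling toward the origin) is a detail the paper leaves implicit, but it does not change the underlying argument.
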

Combining Lemma \ref{lem:contraction_0} with  Theorem \ref{thm:conv_asym}, we bound the convergence of $\Proj{n}(C_{max,p})$ by the following theorem. 
\begin{theorem} \label{thm:conv_ctrl} 
	Suppose that a system $\Sigma$, a safe set $S_{xu}$ and a preview horizon $p_0$ satisfy Assumptions \ref{asp:cmpt}, \ref{asp:int} and \ref{asp:ctrl}. For any $N\geq n$, let $ \gamma_{max}$ be the maximal $\gamma$ such that \eqref{eqn:gamma_max} holds.
	Then, (i) $ \gamma_{max} > 0$ and (ii) the projection $ \Proj{n}(C_{max,p_0+kN})$ satisfies that for $k\geq 0$,
\begin{align}
& C_{max,co} \supseteq \Proj{n}(C_{max,p_0+kN})\nonumber \\
	& \supseteq \left( 1 - (1-  \lambda_0) \left( 1- \gamma_{max}\right)^{k} \right) C_{max,co}. \label{eqn:conv_3}
\end{align}
\end{theorem}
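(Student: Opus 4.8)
The plan is to treat Theorem~\ref{thm:conv_ctrl} as the single‑phase ($\lambda=0$) specialization of the contraction machinery already assembled in Lemma~\ref{lem:contraction_0}, Theorem~\ref{thm:conv_suff}, and the inner bound \eqref{eqn:inner_bound_proj}. First I would dispose of claim (i). Under Assumption~\ref{asp:cmpt} the set $Pre_{\mathcal{D}(\Sigma)}^{N}(0,S_{xu})$ is convex, compact, and contains the origin, while $C_{max,co}$ is convex, compact, and (by Assumption~\ref{asp:int} and Remark~\ref{rem:origin}) has the origin in its interior; hence $\{\gamma\geq 0 : \gamma C_{max,co}\subseteq Pre_{\mathcal{D}(\Sigma)}^{N}(0,S_{xu})\}$ is a closed bounded interval $[0,\gamma_{max}]$, and its positivity $\gamma_{max}>0$ is exactly Lemma~\ref{lem:contraction_0} for this $N\geq n$. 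I would also record that any state steerable to the interior equilibrium $0$ in $N$ safe steps can then be held at $0$ forever, so $Pre_{\mathcal{D}(\Sigma)}^{N}(0,S_{xu})\subseteq C_{max,co}$, which pins $\gamma_{max}\in(0,1]$.

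For claim (ii) the upper inclusion $\Proj{n}(C_{max,p_0+kN})\subseteq C_{max,co}$ is immediate from \eqref{eqn:inner_bound_proj}. For the lower inclusion I would run a one‑step set recursion directly rather than invoke Theorem~\ref{thm:conv_suff} verbatim, since the formula \eqref{eqn:k_0} for $k_0$ degenerates at $\lambda=0$. The engine is the linear superposition property of $Pre$: because $\mathcal{D}(\Sigma)$ is linear and $S_{xu}\times D$ is convex (Assumption~\ref{asp:cmpt}), convex combinations of safe steering controllers stay safe, so whenever $P\subseteq Pre_{\mathcal{D}(\Sigma)}^{N}(A,\cdot)$ and $Q\subseteq Pre_{\mathcal{D}(\Sigma)}^{N}(B,\cdot)$ one has $\theta P+(1-\theta)Q\subseteq Pre_{\mathcal{D}(\Sigma)}^{N}(\theta A+(1-\theta)B,\cdot)$ for $\theta\in[0,1]$.

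Feeding in the two facts $\gamma_{max}C_{max,co}\subseteq Pre_{\mathcal{D}(\Sigma)}^{N}(0,S_{xu})$ from \eqref{eqn:gamma_max} (target $A=\{0\}$) and $C_{max,co}\subseteq Pre_{\mathcal{D}(\Sigma)}^{N}(C_{max,co},\cdot)$ (target $B=C_{max,co}$, since $C_{max,co}$ is a CIS), with weight $\theta=1-s$ on the first and $s$ on the second, and using $\alpha C_{max,co}+\beta C_{max,co}=(\alpha+\beta)C_{max,co}$ for the convex set $C_{max,co}$, I obtain for every $s\in[0,1]$ the contraction inequality $\bigl(1-(1-s)(1-\gamma_{max})\bigr)C_{max,co}\subseteq Pre_{\mathcal{D}(\Sigma)}^{N}(sC_{max,co},\cdot)$. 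I would then close by induction: set $s_0=\lambda_0$, $s_{k+1}=1-(1-s_k)(1-\gamma_{max})$; monotonicity of $Pre$ gives $Pre_{\mathcal{D}(\Sigma)}^{kN}(\lambda_0 C_{max,co},S_{xu}\times D)\supseteq s_k C_{max,co}$, and since the residual $1-s_k$ is multiplied by $1-\gamma_{max}$ each step, $s_k=1-(1-\lambda_0)(1-\gamma_{max})^k$. Combining with the inner bound \eqref{eqn:inner_bound_proj} delivers \eqref{eqn:conv_3}.

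The main obstacle I anticipate is bookkeeping at the degenerate value $\lambda=0$: I must state the superposition and the CIS‑scaling steps against the \emph{full} set $C_{max,co}$, with the free scalar target $sC_{max,co}$, rather than against the running iterate, so that the recursion collapses to a single geometric factor $(1-\gamma_{max})$ instead of the two‑phase bound of Theorem~\ref{thm:conv_asym}. The supporting facts I would have to nail down precisely are the convexity and origin‑interior property of $C_{max,co}$ (from Lemma~\ref{lem:forced_eq} and Remark~\ref{rem:origin}), which simultaneously underwrite the superposition lemma and the identity $\alpha C+\beta C=(\alpha+\beta)C$ used to collapse the Minkowski sum.
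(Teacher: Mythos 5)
Your proposal is correct, but it takes a different route from the paper's own proof, which is a two-sentence reduction: the paper observes that $\gamma_{max} C_{max,co}$ is an $N$-step $0$-contractive CIS (Lemma~\ref{lem:contraction_0}), that $\lambda=0$ forces $k_0=0$ in \eqref{eqn:k_0}, and then invokes Theorem~\ref{thm:conv_asym} with $k_0=0$, $a=1-\gamma_{max}$, $c=1-\lambda_0$ to get \eqref{eqn:conv_3}. You instead re-derive the needed contraction from scratch: your convex-combination superposition property of $Pre$ for the deterministic linear system $\mathcal{D}(\Sigma)$, applied to the targets $\{0\}$ and $C_{max,co}$, gives $\bigl(1-(1-s)(1-\gamma_{max})\bigr)C_{max,co}\subseteq Pre^{N}_{\mathcal{D}(\Sigma)}(sC_{max,co},S_{xu}\times D)$ uniformly in $s\in[0,1]$, and induction plus \eqref{eqn:inner_bound_proj} closes the argument. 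Mathematically this inequality is exactly the $\lambda=0$ specialization of Lemma~\ref{lem:one_step_conv} (there $g(\xi)=\xi(1-\gamma)+\gamma$ when $\lambda=0$), and your superposition lemma is equivalent to the paper's decomposition \eqref{eqn:pre_ineq} combined with the scaling identity \eqref{eqn:pre_linear}; so you have effectively inlined and specialized the machinery sitting two levels below the paper's official proof. What your route buys: it is self-contained and sidesteps genuine degeneracies that the paper's reduction glosses over --- at $\lambda=0$ the case split ``$\xi\leq\lambda\gamma$'' in Lemma~\ref{lem:one_step_conv} collapses, and when additionally $\lambda_0=0$ (permitted here, since Assumption~\ref{asp:int2} is not assumed) expressions like $\lambda_0/\lambda^{k}$ and the argument of \eqref{eqn:k_0} become $0/0$-type, so checking that the general two-phase theorem really applies requires exactly the kind of careful bookkeeping you avoid. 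What the paper's route buys is brevity and reuse: no new lemma is needed, and the bound with $\gamma_{max}$ follows since $\gamma_{max}$ is itself a feasible contraction factor. Your handling of claim (i) --- positivity via Lemma~\ref{lem:contraction_0}, attainment of the maximum by closedness, and the observation $Pre^{N}_{\mathcal{D}(\Sigma)}(0,S_{xu}\times D)\subseteq C_{max,co}$ pinning $\gamma_{max}\leq 1$ --- matches the paper's intent and is if anything more complete.
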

\begin{proof}(Sketch)
For the $N$-step $0$-contractive CIS $\gamma C_{max,co}$ in Lemma \ref{lem:contraction_0}, $ k_0$ in  \eqref{eqn:k_0} is $0$ due to $ \lambda =0$. Then Theorem \ref{thm:conv_asym} with $k_0=0$ implies Theorem \ref{thm:conv_ctrl}. 
\end{proof}

It is obvious that the right hand of \eqref{eqn:conv_3} converges to $C_{max,co}$ even if $ \lambda_0=0$. Thus, here we do not need Assumption \ref{asp:int2} to make $ \lambda_0>0$.

\begin{corollary} \label{cor:conv_ctrl} 
Under Assumptions \ref{asp:cmpt}, \ref{asp:int} and \ref{asp:ctrl},  the projection of $C_{max,p}$ onto the first $n$-coordinates converges to the maximal RCIS $C_{max,co}$ of the disturbance collaborative system $\mathcal{D}(\Sigma)$ in Hausdorff distance, that is 
\begin{align*}
d_{p}=d(\Proj{n}(C_{max,p}), C_{max,co}) \xrightarrow{p \rightarrow \infty} 0. 
\end{align*}
Furthermore, the Hausdorff distance $d_{p}$ satisfies the following inequality: For $p\geq p_0$,
\begin{align} \label{eqn:conv_rate_asymp_3}
    d_{p} \leq  \left( 1- \lambda_0\right)\left(1- \gamma_{max} \right)^{ \lfloor (p-p_0)/N\rfloor} r_{co},
\end{align}
with $N$ and $ \gamma_{max}$ in Theorem \ref{thm:conv_ctrl}, and $r_{co}$ {the radius of the smallest ball centered at $0$ that contains $C_{max,co}$}.  
\end{corollary}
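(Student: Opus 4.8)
The plan is to obtain this corollary as a direct consequence of Theorem \ref{thm:conv_ctrl}, which already furnishes the essential two-sided set inclusion, and then to translate that inclusion into a Hausdorff-distance estimate. First I would invoke Theorem \ref{thm:conv_ctrl}: under Assumptions \ref{asp:cmpt}, \ref{asp:int} and \ref{asp:ctrl}, for any $N\geq n$ and every $k\geq 0$,
\[
\left(1-(1-\lambda_0)(1-\gamma_{max})^{k}\right)C_{max,co} \subseteq \Proj{n}(C_{max,p_0+kN}) \subseteq C_{max,co},
\]
with $\gamma_{max}>0$. Since the origin lies in $C_{max,co}$ after the shift described in Remark \ref{rem:origin} (cf. \eqref{eqn:lambda_0}), this sandwiches the projection between a scaled copy $\alpha_k C_{max,co}$ and $C_{max,co}$ itself, where $\alpha_k = 1-(1-\lambda_0)(1-\gamma_{max})^{k}\in[0,1]$.

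The key step is an elementary lemma converting such an inclusion into a distance bound: if $\alpha C \subseteq X \subseteq C$ for some $\alpha\in[0,1]$, then $d(X,C)\leq (1-\alpha)\,\psi(C)$, where $\psi(C)=\sup_{z\in C}\Vert z\Vert_2$ is the radius of $C$ with respect to the origin. I would prove this by unpacking the Hausdorff distance. The inclusion $X\subseteq C$ annihilates the one-sided term $\sup_{x\in X}\inf_{y\in C}\Vert x-y\Vert_2=0$. For the reverse term, fix any $y\in C$; then $\alpha y\in \alpha C\subseteq X$, so $\inf_{x\in X}\Vert x-y\Vert_2\leq \Vert \alpha y-y\Vert_2=(1-\alpha)\Vert y\Vert_2$, and taking the supremum over $y\in C$ gives $(1-\alpha)\,\psi(C)$. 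Applying this with $C=C_{max,co}$, $X=\Proj{n}(C_{max,p_0+kN})$ and $\alpha=\alpha_k$ yields $d(\Proj{n}(C_{max,p_0+kN}),C_{max,co})\leq (1-\lambda_0)(1-\gamma_{max})^{k}r_{co}$, since $\psi(C_{max,co})=r_{co}$.

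It remains to pass from the arithmetic subsequence $p=p_0+kN$ to an arbitrary $p\geq p_0$. Writing $k=\lfloor (p-p_0)/N\rfloor$, the monotonicity of the projections recorded after Theorem \ref{thm:bnds} gives $\Proj{n}(C_{max,p_0+kN})\subseteq \Proj{n}(C_{max,p})\subseteq C_{max,co}$, so the same sandwich $\alpha_k C_{max,co}\subseteq \Proj{n}(C_{max,p})\subseteq C_{max,co}$ holds for the intermediate index, and the lemma then delivers \eqref{eqn:conv_rate_asymp_3}. Finally, because $\gamma_{max}>0$ forces $1-\gamma_{max}<1$, the right-hand side of \eqref{eqn:conv_rate_asymp_3} tends to $0$ as $p\to\infty$, which establishes $d_p\to 0$. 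I do not anticipate a serious obstacle: the argument is a repackaging of Theorem \ref{thm:conv_ctrl}, and the only nonroutine ingredient is the inclusion-to-Hausdorff lemma, whose validity rests on scaling toward the origin together with the membership $0\in C_{max,co}$.
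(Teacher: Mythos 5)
Your proposal is correct and follows essentially the same route the paper intends for this corollary: instantiate the sandwich inclusion from Theorem \ref{thm:conv_ctrl}, convert the inclusion $\alpha_k C_{max,co} \subseteq \Proj{n}(C_{max,p}) \subseteq C_{max,co}$ into the Hausdorff bound $(1-\alpha_k)r_{co}$ by scaling toward the origin, and handle intermediate horizons $p$ via the monotonicity $\Proj{n}(C_{max,p'}) \subseteq \Proj{n}(C_{max,p})$, with convergence following from $\gamma_{max}>0$ even when $\lambda_0=0$. No gaps; the inclusion-to-distance lemma you state is exactly the step the paper leaves implicit.
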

Compared with the upper bound on $d_{p}$ in Corollary \ref{cor:conv_asym}, the upper bound in \eqref{eqn:conv_rate_asymp_3} does not have the burn-in time $k_0$ and is easier to compute, which is shown next. 

We propose an algorithm that numerically calculates the parameters $ \lambda_0$ and $ \gamma_{max}$ in Theorem \ref{thm:conv_ctrl} for controllable $\mathcal{D}(\Sigma)$. The same as in Section \ref{sec:param_asym}, we assume that the safe set $S_{xu}$ and the disturbance set $D$ are polytopes, and the sets $C_{max,co}$ and $C_{max,p_0}$ are replaced by their polytopic outer/inner  approximations when they are not polytopes. 

\textbf{Step 1:} We first check if the origin as a forced equilibrium satisfies Assumption \ref{asp:int}. {If not, we find a feasible forced equilibrium via a linear program modified from \eqref{eqn:LP_int}, where we replace the constraint $x_{e} + \epsilon \mathcal{B}(n) \subseteq \Proj{n}(C_{max,p_0})$ in \eqref{eqn:LP_int} by $x_{e}\subseteq \Proj{n}(C_{max,p_0})$. 

When the optimal value $ \epsilon^{*}>0$, the optimal solution $(x_{e},u_{e},d_{e})$ of the modified linear program is a feasible forced equilibrium satisfying Assumptions \ref{asp:int}. Then, by Remark \ref{rem:origin}, we shift the origin of the state-input-disturbance space to this forced equilibrium.}

\textbf{Step 2:} We compute $ \lambda_0$ by the second or the third methods of estimating $ \lambda_{0}$ proposed in Section \ref{sec:param_asym}. Since $ \lambda_0>0$ is no longer necessary, one may prefer the third method for computational efficiency.

\textbf{Step 3:} We select a step size $N$ and estimate the corresponding $ \gamma_{max}$. Since $S_{xu}$ and $D$ are polytopes, the $N$-step backward reachable set $Pre_{\mathcal{D}(\Sigma)}^{N}(\{0\},S_{xu}\times D )$ is a polytope, whose $H$-representation can be easily computed. Then, solving the maximal $ \gamma$ satisfying \eqref{eqn:gamma_max}, that is $\gamma_{max}$,  is again a polytope containment problem. Thus, $ \gamma_{max}$ is equal to the reciprocal of the optimal value $r^{*}$ of \eqref{eqn:LP_1} with $P_1= C_{max,co}$ and $P_2= Pre_{\mathcal{D}(\Sigma)}^{N}(0,S_{xu}\times D)$. 

The procedure (Steps 1-3) for estimating $ \lambda_0$ and $\gamma_{max}$ is summarized into Alg. \ref{alg:2}. Note that different from Alg. \ref{alg:1}, {one can freely select} the step size $N$.  By Lemma \ref{lem:contraction_0}, for any $N\geq n$, Alg. \ref{alg:2} is guaranteed to find a nonzero $\gamma_{max}$. In Section \ref{sec:ex2}, we show numerically how different $N$ affects the estimated upper bound of $d_{p}$. 
\begin{algorithm}
\caption{Estimating parameters in Theorem \ref{thm:conv_ctrl} }
\label{alg:2}
\begin{algorithmic}
\State \textbf{input:} $N$, $(A,B,E)$, $S_{xu}$, $D$, $C_{max,co}$, $C_{max,p_0}$
\State $(x_{e},u_{e},d_{e}) \gets 0$
\State $ \epsilon^{*}\gets$ solve the LP modified from \eqref{eqn:LP_int} with $(x_{e},u_{e}, d_{e})$ fixed to $0$
\If{ $ \epsilon^{*} = 0$ \Comment{that is, $(x_{e},u_{e},d_{e})$ is not a forced equilibrium of $\mathcal{D}(\Sigma)$ satisfying Assumption \ref{asp:int}}
}
\State $(x_{e},u_{e},d_{e}), \epsilon^{*}\gets$ solve the LP modified from \eqref{eqn:LP_int}
\State $S_{xu}\gets S_{xu}-(x_{e},u_{e})$
\State $D \gets D - d_{e}$ \Comment{shift the origin to $(x_{e},u_{e},d_{e})$}
\EndIf
\If{$H$-representation of $\Proj{n}(C_{max,p_0})$ available} 
\State $r^{*}\gets $ solve the LP in \eqref{eqn:LP_1} with $P_1=C_{max,co}$, $P_2=\Proj{n}(C_{max,p_0})$
\Else 
\State $r^{*}\gets$ solve the LP in \eqref{eqn:LP_2}
\EndIf
\State $ \lambda_0 \gets 1/r^{*}$
\State $C_{N} \gets Pre^{N}_{\mathcal{D}(\Sigma)}(\{0\}, S_{xu}\times D)$
\State $r^{*}\gets$ solve LP in \eqref{eqn:LP_1} with $P_1 = C_{max,co}$, $P_2=C_{N}$. 
\State $\gamma_{max} \gets 1/r^{*}$
\State \Return $ \lambda_0 $, $\gamma_{max} $
\end{algorithmic}
\end{algorithm} 

\subsection{On the finite-time convergence of $\pi_{[1,n]}(C_{max,p})$} \label{sec:finite-conv} 
In practice, $\Proj{n}(C_{max,p})$ may converge to $C_{max,co}$ in finite preview horizon $p$, which is not reflected by the exponentially decaying upper bounds in Theorems \ref{thm:conv_asym} and \ref{thm:conv_ctrl}. In this subsection, we propose a simple algorithm to detect the potential finite-time convergence of $\Proj{n}(C_{max,p})$. 

Suppose that the projection $\Proj{n}(C_{max,p_0})$ is known for some $p_0$. According to Lemma \ref{lem:pre_expand}, $\Proj{n}(C_{max,p})$ is equal to $C_{max,co}$ for $p=p_0+k$ if
\begin{align}
   Pre^{k}_{\mathcal{D}(\Sigma)}(\Proj{n}(C_{max,p_0}),S_{xu}\times D)=C_{max,co}. 
\end{align}
Based on the above sufficient condition, we propose Alg. \ref{alg:3} to detect the finite-time convergence of $\Proj{n}(C_{max,p_0})$. 
\begin{algorithm}
\caption{Detecting finite-time convergence}
\label{alg:3}
\begin{algorithmic}
\State {\bf input:} $C_{max,co}$, $\Proj{n}(C_{max,p_0})$, $k_{max}$ 
\State $k \gets 0$, $C_0 \gets \Proj{n}(C_{max,p_0})$, $\overline{p}\gets  \infty$
\If {$C_{0} = C_{max,co}$} $\overline{p} \gets p_0$ 
\EndIf
\While{$k < k_{max}$}
\State  $k \gets	k+1$, $C_{k} \gets  Pre_{\mathcal{D}(\Sigma)}(C_{k-1}, S_{xu}\times D)$, 
\If {$C_{max,co} = C_{k}$} $\overline{p} \gets p_0+k$; break
\EndIf
\EndWhile
\State \Return $\overline{p}$, $\{C_{k}\}_{k=0}^{\min( \overline{p}, k_{max})} $ 
\end{algorithmic}
\end{algorithm} 
Note that we set a maximal iteration number $k_{max}$ to guarantee the termination of Alg. \ref{alg:3}. In case where Alg. \ref{alg:3} returns a finite number $\overline{p}$, we know thta $\Proj{n}(C_{max,\overline{p}})$ must be equal to $C_{max,co}$.  
In terms of the computational cost, given the $H$-representation of $\Proj{n}(C_{max,p_0})$, the $k$-step backward reachable set of  $\Proj{n}(C_{max,p_0})$ with respect to $\mathcal{D}(\Sigma)$ is much cheaper to compute than $\Proj{n}(C_{max,p+k})$, since the dimensions of the systems $\mathcal{D}(\Sigma)$ are independent of the preview horizon $p$.  Thus, Alg. \ref{alg:3} is more tractable than directly checking if $\Proj{n}(C_{max,p})=C_{max,co}$ for $p\geq 0$. 

As a side product, the Hausdorff distance between the backward reachable set $C_{k} $ computed in Alg. \ref{alg:3} and $C_{max,co}$ gives an upper bound of $d_{p_0+k}$ for $k=0, \cdots, \min(\overline{p}-p_0, k_{max})$,  tighter than those obtained by Algs. \ref{alg:1} and \ref{alg:2}. 
However, Alg. \ref{alg:3} is more computationally expensive than Algs. \ref{alg:1} and \ref{alg:2} due to the iterative backward reachable set computation (and the Hausdorff distance computation). Another drawback is that Alg. \ref{alg:3} can only provide the upper bounds of $d_{p}$ for finitely many $p$. 

\section{Model Predictive Control with Preview} \label{sec:mpc} 
In this section, we study the impact of preview on the feasible domain of constrained MPC under different recursive feasibility constraints. We consider the class of discrete-time linear systems $\Sigma$ as in  \eqref{eqn:sys_linear} with $p$-step preview. Let the MPC {prediction horizon} be the preview horizon $p$ and the constraints on state and input be the safe set $S_{xu}$ of $\Sigma$ for simplicity. Given the current state $x_0$ and preview information $d_{0:p-1}$, a standard optimization problem to be solved by MPC at each time step is:
\begin{align}
	\begin{split} \label{eqn:mpc} 
	\min_{x_{1:p}, u_{0:p-1}} & \ell_{F}(x_{p},u_{p-1}) + \sum^{p-1}_{t=1} \ell_{t}(x_{t}, u_{t-1}) \\
	\text{s.t.} &~  x_{t} = A x_{t-1} + Bu_{t-1} + Ed_{t-1},\\
				& (x_{t-1},u_{t-1})\in S_{xu}, \forall t\in [p],\\
				& \text{RFC},
	\end{split}
\end{align}
where $\ell_{t}$ and $\ell_{F}$ are the stage cost at time $t$ and the final cost, and  RFC is a placeholder for constraints that guarantee the recursive feasibility of \eqref{eqn:mpc}. 

\begin{definition}
 The feasible domain $\mathcal{F}$ of the MPC in \eqref{eqn:mpc} is the set of initial conditions $(x_0,d_{0:p-1})$ that satisfy the constraints in \eqref{eqn:mpc}.   
\end{definition}
 To guarantee the recursive feasibility of the MPC, we need to impose certain recursive feasibility constraints RFC in \eqref{eqn:mpc} such that the feasible domain $\mathcal{F}$ is an RCIS of the $p$-augmented system $\Sigma_{p}$ in $S_{xu,p}$. Depending on the choice of the RFC, the size of feasible domain can be very different. Here we study two RFCs.

First, let RFC in  \eqref{eqn:mpc} be 
\begin{align} \label{eqn:RFC1} 
	(x_1, d_{2:p-1},d_{p})\in C_{max,p}, \forall d_{p}\in D.
\end{align}
The constraints in \eqref{eqn:RFC1} is the least conservative RFC, since the corresponding feasible domain is the maximal RCIS $C_{max,p}$ of $\Sigma_{p}$ in $S_{xu,p}$, the largest feasible domain any RFC can produce.  

Since the maximal RCIS $C_{max,p}$ is hard to compute as $p$ increases, a more common RFC in practice is to force the final state $x_{p}$ to be in an RCIS $C$ of the original system $\Sigma$ in $S_{xu}$, that is
\begin{align} \label{eqn:RFC2} 
    x_{p}\in C. 
\end{align}
The following theorem characterizes the relation between the terminal constraint set $C$ and the corresponding feasible domain.
\begin{theorem} \label{thm:fd}  
The feasible domain, denoted by $\mathcal{F}_{p}(C)$, of the MPC corresponding to the RFC as in \eqref{eqn:RFC2} is the $p$-step backward reachable set of $C\times D^{p}$ for $\Sigma_{p}$ in $S_{xu,p}$. That is,   
\begin{align} \label{eqn:fd} 
  \mathcal{F}_{p}(C) = Pre_{\Sigma_{p}}^{p}(C\times D^{p}, S_{xu,p}).
\end{align}
\end{theorem}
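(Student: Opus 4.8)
The plan is to prove the set equality $\mathcal{F}_{p}(C) = Pre_{\Sigma_{p}}^{p}(C\times D^{p}, S_{xu,p})$ by double inclusion, leveraging a structural decoupling property of the augmented system $\Sigma_{p}$ over its first $p$ steps. The key observation I would establish first is the following: if we run $\Sigma_{p}$ from an initial augmented state $z(0)=(x_{0},d_{1:p}(0))$ under any input sequence $u(0),\dots,u(p-1)$, then for $1\le k\le p$ the state component obeys $x(k)=Ax(k-1)+Bu(k-1)+Ed_{k}(0)$; that is, over this horizon $x$ is driven solely by the \emph{initial} preview slots $d_{1}(0),\dots,d_{p}(0)$, while the incoming disturbances $d(0),\dots,d(p-1)$ merely shift into the trailing preview coordinates and do not influence $x(1),\dots,x(p)$. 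Matching indices via $d_{k-1}=d_{k}(0)$, this recursion is exactly the MPC prediction dynamics $x_{t}=Ax_{t-1}+Bu_{t-1}+Ed_{t-1}$, and the initial augmented state $z(0)$ coincides with the MPC initial condition $(x_{0},d_{0:p-1})$, whose preview is understood to range over $D^{p}$.

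For the inclusion $\mathcal{F}_{p}(C)\subseteq Pre_{\Sigma_{p}}^{p}(C\times D^{p},S_{xu,p})$, I would take a feasible MPC point $(x_{0},d_{0:p-1})$ with witness trajectory $x_{1},\dots,x_{p}$ and inputs $u_{0},\dots,u_{p-1}$, and feed the open-loop inputs $u(k)=u_{k}$ into $\Sigma_{p}$. By the decoupling property the set of augmented states reachable at step $k$ under all disturbance realizations is contained in $\{x_{k}\}\times D^{p}$ (its trailing coordinates hold incoming disturbances in $D$, its leading coordinates hold unused initial-preview components, all in $D$). Since $(x_{k},u_{k})\in S_{xu}$, for $k<p$ this set paired with $u_{k}$ lies in $S_{xu,p}$, and at step $p$ it lies in $C\times D^{p}$ because $x_{p}\in C$. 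A backward induction on the recursive definition of $Pre$ then places $z(0)$ in the $p$-step backward reachable set.

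For the reverse inclusion, I would take $z(0)=(x_{0},d_{0:p-1})\in Pre_{\Sigma_{p}}^{p}(C\times D^{p},S_{xu,p})$. Unfolding the recursion gives, for each disturbance realization, inputs $u(0),\dots,u(p-1)$ with $(z(k),u(k))\in S_{xu,p}$ for $0\le k\le p-1$ and $z(p)\in C\times D^{p}$; membership in $S_{xu,p}$ at step $0$ additionally forces $d_{0:p-1}\in D^{p}$. Fixing one realization (any fixed value in the nonempty $D$) extracts a concrete input sequence, and setting $x_{k}:=x(k)$, $u_{k}:=u(k)$ yields a point satisfying the MPC dynamics (by the decoupling recursion), the safety constraints $(x_{k},u_{k})\in S_{xu}$ (from $S_{xu,p}$), and the terminal constraint $x_{p}\in C$ (from $C\times D^{p}$). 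Hence $(x_{0},d_{0:p-1})\in\mathcal{F}_{p}(C)$.

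The main obstacle is reconciling the robust, potentially state-feedback nature of $Pre_{\Sigma_{p}}^{p}$ with the open-loop MPC plan: $Pre$ quantifies over all incoming disturbances and permits the control to react to them, whereas the MPC commits to an open-loop sequence determined only by the preview. The decoupling property is precisely what dissolves this tension—because $x(1),\dots,x(p)$ are independent of the incoming disturbances, open-loop and feedback policies induce the same state trajectory over the horizon, so a single realization suffices to extract a feasible plan in one direction and an open-loop plan suffices in the other. The remaining care is the bookkeeping of the index shift $d_{k-1}=d_{k}(0)$ and verifying that the preview coordinates stay in $D^{p}$ throughout, which holds since the trailing slots are filled by disturbances in $D$ and the leading slots come from an initial preview in $D^{p}$.
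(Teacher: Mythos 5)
Your proof is correct and takes essentially the same route as the paper's: both directions rest on the observation that the state trajectory of $\Sigma_{p}$ over the first $p$ steps is driven solely by the initial preview coordinates, so the robust, feedback-capable $Pre^{p}_{\Sigma_{p}}$ collapses to an open-loop condition and fixing a single realization of the incoming disturbances suffices for the reverse inclusion. The only difference is level of detail---the paper dismisses the inclusion $\mathcal{F}_{p}(C)\subseteq Pre_{\Sigma_{p}}^{p}(C\times D^{p},S_{xu,p})$ as easy to check, whereas you spell out the decoupling property and the backward induction explicitly.
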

\begin{proof}
Combining constraints in \eqref{eqn:mpc} and \eqref{eqn:RFC2}, it is easy to check that $\mathcal{F}_{p}(C) \subseteq Pre_{\Sigma_{p}}^{p}(C\times D^{p}, S_{xu,p})$. It is remained to show the other inclusion direction.

Suppose $(x_0,d_{0:p-1})\in Pre_{\Sigma_{p}}^{p}(C\times D^{p}, S_{xu,p})$. Then, for arbitrary $d_{p:2p-1}\in D^{p}$, there exists $u_{0:p-1}$ such that for all $t=0, \cdots, p-1$, 
\begin{align}
&(x_{t},d_{t:t+p-1},u_{t})\in S_{xu,p}, \\
&(x_{p}, d_{p:2p-1})\in C\times D^{p}.
\end{align}
That is, $(x_{t},u_{t})\in S_{xu}$ for $t=0, \cdots, p-1$ and $x_{p}\in C$, which implies $(x_{0},d_{0:p-1})\in \mathcal{F}_{p}(C)$.
\end{proof}
By \eqref{eqn:fd}, $\mathcal{F}_{p}(C)$ is an RCIS of $\Sigma_{p}$ in $S_{xu,p}$ and thus is a subset of $C_{max,p}$. Intuitively, $\mathcal{F}_{p}(C)$ is the set of states where the system is guaranteed to stay in the safe set indefinitely even if no preview is available after the first $p$ steps, and thus is more conservative than $C_{max,p}$.

We want to compare how the gap between $\mathcal{F}_{p}(C)$ and $C_{max,p}$ changes as $p$ increases.  
Since the dimensions of the two feasible domains $C_{max,p}$ and $\mathcal{F}_{p}(C)$ increase with $p$, a direct comparison is intractable. Similar to Section \ref{sec:convergence}, we instead compare the gap between the projections of $\mathcal{F}_{p}(C)$ and $C_{max,p}$ onto the first $n$ dimensions. 
\begin{lemma}\label{lem:Fp_pre}
   The projection of $\mathcal{F}_{p}(C)$ is equal to the $p$-step backward reachable set of $C$ with respect to $\mathcal{D}(\Sigma)$ in $S_{xu}\times D$. That is,
$\Proj{n}(\mathcal{F}_{p}(C))= Pre_{\mathcal{D}(\Sigma)}^{p}(C, S_{xu}\times D). $
\end{lemma}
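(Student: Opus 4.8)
The plan is to start from the characterization of the feasible domain already proved in Theorem~\ref{thm:fd}, namely $\mathcal{F}_{p}(C)=Pre_{\Sigma_{p}}^{p}(C\times D^{p},S_{xu,p})$, so that the claim reduces to the set identity
\begin{align}
\Proj{n}\big(Pre_{\Sigma_{p}}^{p}(C\times D^{p},S_{xu,p})\big)=Pre_{\mathcal{D}(\Sigma)}^{p}(C,S_{xu}\times D).\nonumber
\end{align}
The whole argument rests on one structural observation about how the disturbances propagate through $\Sigma_{p}$ over a horizon of \emph{exactly} $p$ steps. Writing the augmented trajectory as $(x_{t},d_{1}^{t},\dots,d_{p}^{t})$, the only disturbance that enters the $x$-subdynamics at time $t$ is the ``head'' coordinate $d_{1}^{t}$, and unrolling the shift dynamics of $\Sigma_{p}$ gives $d_{1}^{t}=d_{t+1}^{0}$ for all $0\le t\le p-1$. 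In other words, over the first $p$ steps the active disturbances are precisely the initial preview $d_{1:p}^{0}$, while the uncontrolled injected disturbances $d^{0},\dots,d^{p-1}$ only fill the tail coordinates and do not become active until steps $\ge p$. Hence within the horizon the $x$-trajectory is the deterministic one $x_{t+1}=Ax_{t}+Bu_{t}+Ed_{t+1}^{0}$, and the terminal requirement $(x_{p},d_{1:p}^{p})\in C\times D^{p}$ collapses to $x_{p}\in C$ because $d_{1:p}^{p}\in D^{p}$ is satisfied automatically.

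From this I would extract a forward characterization of the backward reachable set: $(x_{0},d_{1:p}^{0})\in Pre_{\Sigma_{p}}^{p}(C\times D^{p},S_{xu,p})$ if and only if there exists a control sequence $u_{0},\dots,u_{p-1}$ with $(x_{t},u_{t})\in S_{xu}$ along the trajectory driven by $x_{t+1}=Ax_{t}+Bu_{t}+Ed_{t+1}^{0}$ and with $x_{p}\in C$. The two inclusions then follow by identifying the collaborative input of $\mathcal{D}(\Sigma)$ with the preview via $u_{d,t}:=d_{t+1}^{0}\in D$. For $\supseteq$, given a $\mathcal{D}(\Sigma)$-trajectory reaching $C$ with inputs $(u_{t},u_{d,t})$, I set the initial preview to $d_{1:p}^{0}=(u_{d,0},\dots,u_{d,p-1})\in D^{p}$ and apply the open-loop controls $u_{t}$ in $\Sigma_{p}$; since the $x$-trajectory is deterministic and the previewed coordinates stay in $D^{p}$ for every injected disturbance, this certifies $(x_{0},d_{1:p}^{0})\in\mathcal{F}_{p}(C)$, hence $x_{0}\in\Proj{n}(\mathcal{F}_{p}(C))$. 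For $\subseteq$, given $x_{0}\in\Proj{n}(\mathcal{F}_{p}(C))$, I take a witnessing preview $d_{1:p}^{0}$ and a safe policy, fix one arbitrary injected realization $d^{t}\equiv\bar d\in D$ to read off a concrete control sequence $u_{t}$, and use $u_{d,t}=d_{t+1}^{0}$ to exhibit a valid $\mathcal{D}(\Sigma)$-trajectory reaching $C$, giving $x_{0}\in Pre_{\mathcal{D}(\Sigma)}^{p}(C,S_{xu}\times D)$.

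The hard part will be handling the quantifier/feedback structure of the robust backward reachable computation cleanly: the $\Sigma_{p}$ policy is allowed to feed back on the whole augmented state, which at time $t$ already contains some injected disturbances, so a priori the controls $u_{t}$ (and hence the $x$-trajectory) could depend on the adversarial injections. The crux is to argue this dependence is harmless here precisely because the active disturbance $d_{1}^{t}=d_{t+1}^{0}$ never involves an injected term over $[0,p-1]$ and the tail-coordinate constraint $d_{1:p}^{t}\in D^{p}$ is met identically; this is what lets me collapse the ``for all injected disturbances'' condition to a deterministic reachability in $x$ (open-loop in the $\supseteq$ direction, and a single-realization extraction in the $\subseteq$ direction). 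This is exactly the preview/delay duality already exploited in Theorem~\ref{thm:CIS_delay}, so I would keep the bookkeeping of indices explicit to make the collapse rigorous rather than merely intuitive.
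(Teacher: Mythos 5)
Your proposal is correct and follows essentially the same route as the paper: both reduce the claim via Theorem~\ref{thm:fd}, exploit the fact that over the first $p$ steps of $\Sigma_{p}$ only the initial preview $d_{1:p}(0)$ enters the $x$-dynamics, and identify the collaborative input of $\mathcal{D}(\Sigma)$ with the preview via $u_{d}(t)=d_{t+1}(0)$. Your explicit handling of the feedback-versus-open-loop subtlety (extracting a single control sequence from a policy, justified by the injected disturbances being inactive on $[0,p-1]$) is a point the paper's proof passes over tersely, but it is a refinement of the same argument rather than a different one.
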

\begin{proof}
Let $x(0)\in \Proj{n}(\mathcal{F}_{p}(C))$.  There exists $d_{1:p}(0)\in D^{p}$ such that $(x(0),d_{1:p}(0))\in \mathcal{F}_{p}(C)$. Since by Theorem \ref{thm:fd} $\mathcal{F}_{p}(C) = Pre_{\Sigma_{p}}^{p}(C\times D^{p},S_{xu,p})$, for the system $\Sigma_{p}$ as in \eqref{eqn:sys_p}, there exist inputs $\{u(t)\}_{t=0}^{p-1} $ such that $(x(t),u(t))\in S_{xu}$ for $t=0, \cdots, p-1$ and $x(p)\in C$ for all possible $\{d(t)\}_{t=0}^{p} \subseteq D$. That implies $x(0)\in Pre_{\mathcal{D}(\Sigma)}^{p}(C, S_{xu}\times D)$. Thus, $\Proj{n}(\mathcal{F}_{p}(C))$ is a subset of $ Pre_{\mathcal{D}(\Sigma)}^{p}(C, S_{xu}\times D)$.

Next, we want to prove the other direction. Let $x(0)\in  Pre_{\mathcal{D}(\Sigma)}^{p}(C, S_{xu}\times D)$. Then, for the system $\mathcal{D}(\Sigma)$ as in \eqref{eqn:sys_dis} , there exist inputs $\{u(t)\}_{t=0}^{p-1}$ and $\{u_{d}(t)\}_{t=0}^{p-1} \subseteq D$ such that $(x(t),u(t))\in S_{xu}$ for $t=0, \cdots, p-1$ and $x(p)\in C$. That implies $(x(0),u_{d}(0), \cdots, u_{d}(p-1))$ is contained by $Pre_{\Sigma_{p}}^{p}(C\times D, S_{xu,p})= \mathcal{F}_{p}(C)$ by Theorem \ref{thm:fd}.  Thus, $x(0)\in \Proj{n}(\mathcal{F}_{p}(C))$.
\end{proof}

By Lemma \ref{lem:Fp_pre}, it is clear that the projection $\Proj{n}(\mathcal{F}_{p}(C))$ is a CIS of $\mathcal{D}(\Sigma)$ in $S_{xu}\times D$. Recall that $C_{max,co}$ is the maximal CIS of $\mathcal{D}(\Sigma)$ in $S_{xu}\times D$. Thus, we have
\begin{align} \label{eqn:Fp_sdw} 
 \Proj{n}(\mathcal{F}_{p}(C)) \subseteq \Proj{n}(C_{max,p}) \subseteq C_{max,co}.
\end{align}
Following similar steps in Section \ref{sec:convergence}, we can show the convergence rate of the projection $\Proj{n}(\mathcal{F}_{p}(C))$ over the prediction horizon $p$.  First, we  modify Assumptions \ref{asp:int} and \ref{asp:int2} by replacing $\Proj{n}(C_{max,p_0})$ with the set $C$, and call the modified assumptions Assumption \ref{asp:int}' and \ref{asp:int2}'. Let $ \lambda_0$ be the maximal $ \lambda$ such that $\lambda C_{max,co} \subseteq C$, which is greater than $0$ under Assumption \ref{asp:int2}'. Then, the following theorem shows that $\Proj{n}(\mathcal{F}_{p}(C))$ converges to $C_{max,co}$ exponentially fast.
\begin{theorem} \label{thm:Fp}  
Under Assumptions \ref{asp:asymp_stab}, \ref{asp:cmpt}, \ref{asp:int}' and \ref{asp:int2}' (or Assumptions \ref{asp:cmpt}, \ref{asp:int}' and \ref{asp:ctrl}), the projection $\Proj{n}(\mathcal{F}_{p}(C))$ converges to the maximal RCIS $C_{max,co}$ of $\mathcal{D}(\Sigma)$ in Hausdorff distance, that is 
\begin{align}
d(\Proj{n}(\mathcal{F}_{p}(C)), C_{max,co}) \xrightarrow{p \rightarrow \infty}0.	
\end{align}
Furthermore, there exist some constants $c>0$ and $a\in [0,1)$ such that for $p\geq 0$, 
\begin{align} \label{eqn:Fp_conv} 
 d(\Proj{n}(\mathcal{F}_{p}(C)), C_{max,co})  \leq  c a^{p}.  
\end{align}
\end{theorem}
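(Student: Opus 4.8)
The plan is to reduce the claim to the backward-reachability convergence analysis already developed in Section~\ref{sec:convergence}, since the projection of the MPC feasible domain has precisely the same form as the projected maximal RCIS studied there. First I would invoke Lemma~\ref{lem:Fp_pre} to write
\begin{align}
\Proj{n}(\mathcal{F}_{p}(C)) = Pre_{\mathcal{D}(\Sigma)}^{p}(C, S_{xu}\times D). \nonumber
\end{align}
Because $C$ is an RCIS of $\Sigma$ in $S_{xu}$ and $\mathcal{D}(\Sigma)$ has strictly more control authority than $\Sigma$ (it may choose $u_{d}\in D$ rather than face an adversarial $d$), the set $C$ is a CIS of $\mathcal{D}(\Sigma)$ in $S_{xu}\times D$; hence the backward reachable sets $Pre_{\mathcal{D}(\Sigma)}^{p}(C, S_{xu}\times D)$ are non-shrinking in $p$ and, by \eqref{eqn:Fp_sdw}, stay inside $C_{max,co}$.

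The key step is to sandwich $\Proj{n}(\mathcal{F}_{p}(C))$ between a geometrically expanding inner bound and $C_{max,co}$. Under Assumption~\ref{asp:int2}', the initial factor $\lambda_0$ satisfies $\lambda_0 C_{max,co}\subseteq C$ with $\lambda_0>0$, so by the monotonicity of $Pre$,
\begin{align}
Pre_{\mathcal{D}(\Sigma)}^{p}(\lambda_0 C_{max,co}, S_{xu}\times D)\subseteq \Proj{n}(\mathcal{F}_{p}(C))\subseteq C_{max,co}. \nonumber
\end{align}
This is exactly the enclosure \eqref{eqn:inner_bound_proj} that powered Corollaries~\ref{cor:conv_asym} and~\ref{cor:conv_ctrl}, now with the seed $C$ in place of $\Proj{n}(C_{max,p_0})$. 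I would then apply the existing machinery in two cases. Under Assumptions~\ref{asp:asymp_stab}, \ref{asp:cmpt}, \ref{asp:int}' and \ref{asp:int2}', Lemma~\ref{lem:contraction} supplies parameters $\gamma\in(0,1]$, $N$ and $\lambda\in[0,1)$ making $\gamma C_{max,co}$ an $N$-step $\lambda$-contractive CIS, and Theorem~\ref{thm:conv_suff} lower bounds $Pre_{\mathcal{D}(\Sigma)}^{kN}(\lambda_0 C_{max,co}, S_{xu}\times D)$ by a set of the form $(1-c'(a')^{k})C_{max,co}$. Under Assumptions~\ref{asp:cmpt}, \ref{asp:int}' and \ref{asp:ctrl}, Lemma~\ref{lem:contraction_0} and Theorem~\ref{thm:conv_ctrl} give the single-phase inner bound $(1-(1-\lambda_0)(1-\gamma_{max})^{k})C_{max,co}$, which converges to $C_{max,co}$ even when $\lambda_0=0$; this is why Assumption~\ref{asp:int2}' can be dropped in the controllable case.

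Finally I would convert these nested inclusions into Hausdorff distance exactly as in Corollaries~\ref{cor:conv_asym} and~\ref{cor:conv_ctrl}: since the origin is an interior forced equilibrium, an enclosure $(1-\delta_{p})C_{max,co}\subseteq \Proj{n}(\mathcal{F}_{p}(C))\subseteq C_{max,co}$ implies $d(\Proj{n}(\mathcal{F}_{p}(C)), C_{max,co})\le \delta_{p}\, r_{co}$, with $\delta_{p}$ decaying geometrically in $\lfloor p/N\rfloor$. Passing from the step-$N$ rate to a per-step rate through $\lfloor p/N\rfloor\ge p/N-1$ absorbs the step size and the finitely many burn-in indices into the constants, yielding $d(\Proj{n}(\mathcal{F}_{p}(C)), C_{max,co})\le c\,a^{p}$ for some $c>0$ and $a\in[0,1)$, hence convergence to $0$. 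The main obstacle, and really the only nontrivial point, is the very first step: rigorously confirming that an RCIS of $\Sigma$ is a CIS of $\mathcal{D}(\Sigma)$ and that $\lambda_0 C_{max,co}$ is itself a CIS, so that the inner bound is genuinely non-shrinking and the earlier contraction estimates transfer verbatim. Everything downstream is inherited from Section~\ref{sec:convergence}.
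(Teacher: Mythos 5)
Your proposal is correct and follows essentially the same route the paper intends: the paper omits the proof, stating it mirrors Theorems~\ref{thm:conv_asym} and~\ref{thm:conv_ctrl} with $C_{max,p}$ replaced by $\mathcal{F}_{p}(C)$, and your argument fills in exactly that reduction via Lemma~\ref{lem:Fp_pre}, the seed inclusion $\lambda_0 C_{max,co}\subseteq C$, and the contraction machinery of Lemma~\ref{lem:contraction}/Theorem~\ref{thm:conv_suff} (or Lemma~\ref{lem:contraction_0}/Theorem~\ref{thm:conv_ctrl}). Your verification that an RCIS of $\Sigma$ is a CIS of $\mathcal{D}(\Sigma)$ and the absorption of the burn-in phase and step size $N$ into the constants $c$ and $a$ are precisely the details the paper leaves implicit.
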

{The proof of Theorem  \ref{thm:Fp} is similar to the proofs of Theorems \ref{thm:conv_asym} and \ref{thm:conv_ctrl} (by replacing $C_{max,p}$ with $\mathcal{F}_{p}(C)$) and thus is omitted.}
\begin{remark}
{Due to the similarity between Theorem \ref{thm:Fp} and Theorems \ref{thm:conv_asym}, \ref{thm:conv_ctrl}}, one may expect that the upper bound on $d(\Proj{n}(\mathcal{F}_{p}(C)), C_{max,co})$ is in form of \eqref{eqn:conv_rate_asymp_1}, \eqref{eqn:conv_rate_asymp_2} or \eqref{eqn:conv_rate_asymp_3}. It is indeed the case, but we relax those tighter bounds to the RHS of \eqref{eqn:Fp_conv} for simplicity. The tighter upper bounds can be retrieved from Section \ref{sec:convergence} by replacing $\Proj{n}(C_{max,p})$ there with $C$.
\end{remark}

According to \eqref{eqn:Fp_sdw} and Theorem \ref{thm:Fp}, we have
	\begin{align*}
	    & d(\Proj{n}(\mathcal{F}_{p}(C)), \Proj{n}(C_{max,p}))\\
	 & \leq d(\Proj{n}(\mathcal{F}_{p}(C)), C_{max,co}) \leq  ca^{p}.
	\end{align*}
That is, the gap between the projections of $\mathcal{F}_{p}(C)$ and $C_{max,p}$ decays exponentially fast under the conditions in Theorem \ref{thm:Fp}. Thus, even if $\mathcal{F}_{p}(C)$ is more conservative than $C_{max,p}$, as the prediction horizon $p$ is increased, this conservativeness decays fast. Thus, in practice, the RFC in \eqref{eqn:RFC2} with a long enough prediction horizon $p$ is usually a good choice. Also, the constants  $c$ and $a$ in Theorem \ref{thm:Fp} can be estimated by some modified Algs. \ref{alg:1} and \ref{alg:2} that replace $\Proj{n}(C_{max,p_0})$ in both algorithms by $C$. Once $c$ and $a$ are obtained, one can simply compute $ca^{p}$ to quantitatively evaluate the conservativeness of  the RFC in \eqref{eqn:RFC2} with respect to the maximal feasible domain $C_{max,p}$ for any given prediction horizon $p$. Besides, similar to Section \ref{sec:finite-conv}, the potential finite-time convergence of the projections of  $\mathcal{F}_{p}(C)$ can be detected by a modified Alg. \ref{alg:3} that replaces $\Proj{n}(C_{max,p_0})$ with $C$. 

\section{Examples} \label{sec:ex} 
\subsection{One-dimensional systems}
Consider the $1$-dimensional system\cite{liu2021value} 
\begin{align} \label{eqn:sys_1d} 
    \Sigma: x(t+1) = a x(t) + u(t) + d(t),
\end{align}
with $x(t)$, $u(t)\in \R$ and $d(t)\in [-\overline{d},\overline{d}]$. The safe set $S_{xu}= [-\overline{x},\overline{x}]\times [-\overline{u}, \overline{u}]$. 

Suppose that the parameters $a$, $\overline{x}$, $\overline{d}$, $\overline{u}$ and $p$ satisfy $a>1$, $ \overline{x} \geq ( \overline{u}+ \overline{d})/ (a-1)$ and $a^{p-1} \overline{u} \geq \overline{d}$. Then, it is shown in \cite{liu2021value} that the maximal RCIS $C_{max,p}$ of the $p$-augmented system of $\Sigma$ within the augmented safe set $[-\overline{x},\overline{x}]\times [-\overline{d},\overline{d}]^{p}\times [-\overline{u},\overline{u}]$ is
\begin{align}
	C_{max,p} = \bigg\{(x,d_{1:p}) ~\bigg\vert ~ \vert d_{i} \vert\leq  \overline{d},\forall i\in \{1, \cdots, p\}, \nonumber \\
	\vert x+ \sum_{i=1}^{p} \frac{d_{i}}{a^{i}}\vert \leq \frac{ \overline{u}-\overline{d}/a^{p}}{a-1} \bigg\}.
\end{align}
The projection of the maximal controlled invariant set onto the first coordinate is 
\begin{align} \label{eqn:proj_1d} 
\Proj{1}(C_{max,p}) = \left[- \frac{\overline{u}+\overline{d}-2 \overline{d}/a^{p}}{a-1}, \frac{\overline{u}+\overline{d}-2 \overline{d}/a^{p}}{a-1}\right].
\end{align}
The disturbance-collaborative system with respect to \eqref{eqn:sys_1d} is
\begin{align}
    \mathcal{D}(\Sigma): x(t+1) = ax(t) + u,
\end{align}
with the safe set $S_{xu,co}=[-\overline{x},\overline{x}]\times [-\overline{u}-\overline{d}, \overline{u}+\overline{d}]$. The maximal controlled invariant set $C_{max,co}$ of $\mathcal{D}(\Sigma)$ in the safe set is ${[-(\overline{u}+\overline{d})/(a-1), (\overline{u}+\overline{d})/(a-1)]}$.  Thus, for any $p$ such that $a^{p-1} \overline{u} \geq \overline{d}$, the Hausdorff distance between $\Proj{1}(C_{max,p})$ and $C_{max,co}$ satisfies 
\begin{align} \label{eqn:upper_bnd_true} 
    d(\Proj{1}(C_{max,p}), C_{max,co}) = \frac{2\overline{d}}{(a-1) a^{p}}, 
\end{align}
which decays to $0$ with exponential rate $1/a$ as $p$ goes to infinity. 

Note that this one-dimensional system is simple enough for us to solve Alg. \ref{alg:2} by hand: Since $Pre_{\mathcal{D}(\Sigma)}(0,S_{xu,co}) = [-(\overline{u}+\overline{d})/a, (\overline{u}+\overline{d})/a]$, the value of $ \gamma_{max}$ in Theorem \ref{thm:conv_ctrl} is $(1-1/a)$. We pick any $p_0$ satisfying $a^{p_0-1}\overline{u} \geq  \overline{d}$. Then, by \eqref{eqn:proj_1d}, $ \lambda_0$ is equal to
\begin{align}
    \lambda_0 = 1 - \frac{2 \overline{d}/a^{p_0}}{(\overline{u}+\overline{d})}.
\end{align}
The radius {$r_{co}$ of the smallest ball centered at $0$ containing $C_{max,co}$} is $(\overline{u}+\overline{d})/(a-1)$. Plugging $ \gamma_{max}$, $ \lambda_0$ into \eqref{eqn:conv_rate_asymp_3}, for all $p\geq p_0$, the Hausdorff distance between $\Proj{1}(C_{max,p})$ and $C_{max,co}$ satisfies 
\begin{align} \label{eqn:upper_bnd_est} 
   d(\Proj{1}(C_{max,p}), C_{max,co})\leq  \frac{2\overline{d}}{(a-1)a^{p}}.
\end{align}
Comparing the right hand sides of \eqref{eqn:upper_bnd_true} and \eqref{eqn:upper_bnd_est}, the upper bound of the Hausdorff distance $d_{p}$ obtained by Alg. \ref{alg:2}  is equal to the actual value in this toy example.

\subsection{Two-dimensional system with random safe set} \label{sec:ex2} 
We consider the following $2$-dimensional system $\Sigma$:
\begin{align}  \label{eqn:ex2} 
   \Sigma: x^{+} = \begin{bmatrix}
   1.5 & 1\\ 0 & 1.1	
   \end{bmatrix} x + 
   \begin{bmatrix}
   0 \\ 1	
   \end{bmatrix}u +
   \begin{bmatrix}
   1 \\ 1	
   \end{bmatrix}d,
\end{align}
with $x\in \R^{2}$, $u\in \R$ and $d\in [-0.3,0.3]$.  The safe set $S_{xu}$ is randomly generated, shown by the dark blue polytope in Fig. \ref{fig:ex2_visual}. Assume that the preview on $d$ is available.  We select $p_0=1$. The projection of  $C_{max,p_0}$ onto $\R^{2}$ is shown by the yellow polytope in Fig. \ref{fig:ex2_visual}. The projections of $C_{max,p_0+k}$ for $k=1, \cdots, 8$ are shown by the nested cyan polytopes in Fig. \ref{fig:ex2_visual}. The set $C_{max,co}$ is equal to the projection of $C_{max,p}$ with $p=9$, shown by the largest cyan polytope in Fig. \ref{fig:ex2_visual}. 
\begin{figure}[]
	\centering
	\includegraphics[width=0.4\textwidth]{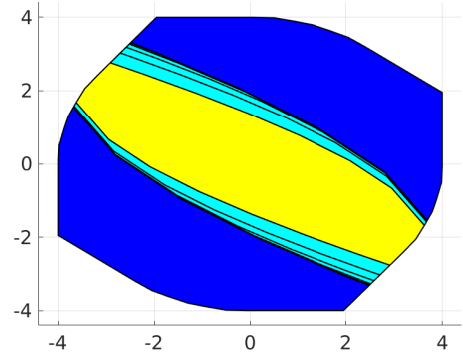}
	\caption{The projections of $C_{max,p_0}$ (yellow polytope), $C_{max,p_0+k}$ (cyan polytopes) with $k=1, \cdots, 8$ and the safe set (dark blue polytope).}
	\label{fig:ex2_visual}
\end{figure}

Since $\Sigma$ is controllable, we apply both Algs. \ref{alg:1} and \ref{alg:2} to this example. 
The parameters estimated by Alg. 1 (with or without Step 4) and Alg. 2 ($N=1$ or $8$) are listed in Table \ref{tab:ex2}. Plugging the parameters in Table \ref{tab:ex2} into \eqref{eqn:conv_rate_asymp_1}, \eqref{eqn:conv_rate_asymp_2} and \eqref{eqn:conv_rate_asymp_3}, we obtain four upper bounds of the Hausdorff distance $d_{p}$ in \eqref{eqn:dp}, as depicted in Fig. \ref{fig:ex2_curve}. 
Note that the red curve in Fig. \ref{fig:ex2_curve} lies below the blue curve, showing that the refinement step in Alg. \ref{alg:1} indeed helps us find a tighter upper bound on $d_{p}$. 
 The purple curve obtained by Alg. \ref{alg:2} is coarser than the other curves due to a larger step size $(N=8)$, but is also decaying faster than the others as $p$ increases. This observation implies that selecting a larger $N$ in Alg. \ref{alg:2} may lead to a coarser but faster decaying upper bound. 
Among all the upper bounds in Fig. \ref{fig:ex2_curve},  Alg. \ref{alg:1} with Step 4 and Alg. \ref{alg:2} ($N=1$) find the tightest upper bounds (red and yellow curves) when $p$ is small; Alg. \ref{alg:2} ($N =8$) finds the tightest bound (purple curve in Fig. \ref{fig:ex2_curve}) when $p$ is large.  

We also run Alg. \ref{alg:3} with $k_{max}=50$, which terminates with $\overline{p}= \infty$. That indicates $\Proj{n}(C_{max,p})$ may not converge to $C_{max,co}$ for a finite $p$. The upper bound of $d_{p}$ obtained as the side product of Alg. \ref{alg:3} is the tightest, coinciding with the actual $d_{p}$ shown by the green curve in Fig. \ref{fig:ex2_curve}. 
The computation time of the three algorithms is listed in the first column of Table \ref{tab:time}. 
\begin{table}[]
	\centering
	\caption{Parameters estimated by Algs. \ref{alg:1} and \ref{alg:2}}
	\label{tab:ex2}
	\begin{tabular}{c|cccc}
		\hline
		& $ \lambda_0$ & $ \gamma$ or $\gamma_{max}$ & $N$ & $ \lambda$\\
		\hline
		Alg. \ref{alg:1} w.o. Step 4 & $0.6550$ & $0.0402$ & $1$ & $0.0011$\\	
		Alg. \ref{alg:1} w. Step 4 &$0.6550$ & $0.0752$ & $1$ & $5.737\times 10^{-4}$\\	
		Alg. \ref{alg:2} ($N=1$) & $0.6550$& $0.0752$ & $1$ & $0$\\	
		Alg. \ref{alg:2} ($N=8$) &$0.6550$ & $0.9794$ & $8$ & $0$\\	
		\hline
	\end{tabular}
\end{table}

\begin{table}[]
	\centering
	\caption{Computation time of Algs. \ref{alg:1}, \ref{alg:2} and \ref{alg:3}}
	\label{tab:time}
	\begin{tabular}{c|ccc}
		\hline
		 Time ($s$) & $2$D system & Lane-keeping & Biped\\ 
		\hline
		Alg. \ref{alg:1} w.o. Step 4 & $3.58$ & $3.86$ & $4.42$\\ 
		Alg. \ref{alg:1} w. Step 4 & $4.13$ & $727.09$ & $8.68$\\ 	
		Alg. \ref{alg:2} ($N=n$) & $0.58$ & $13.22$ & $0.84$\\ 	
		Alg. \ref{alg:2} ($N=8$) & $3.66$ & $227.37$ & $2.41$\\ 	
		Alg. \ref{alg:3} ($k_{max}=50$) & $14.29$ & $573.33$ & $188.55$\\
		\hline
	\end{tabular}
\end{table}
\begin{figure*}[]
	\centering
     \begin{subfigure}[b]{0.32\textwidth}
		\includegraphics[width=\textwidth]{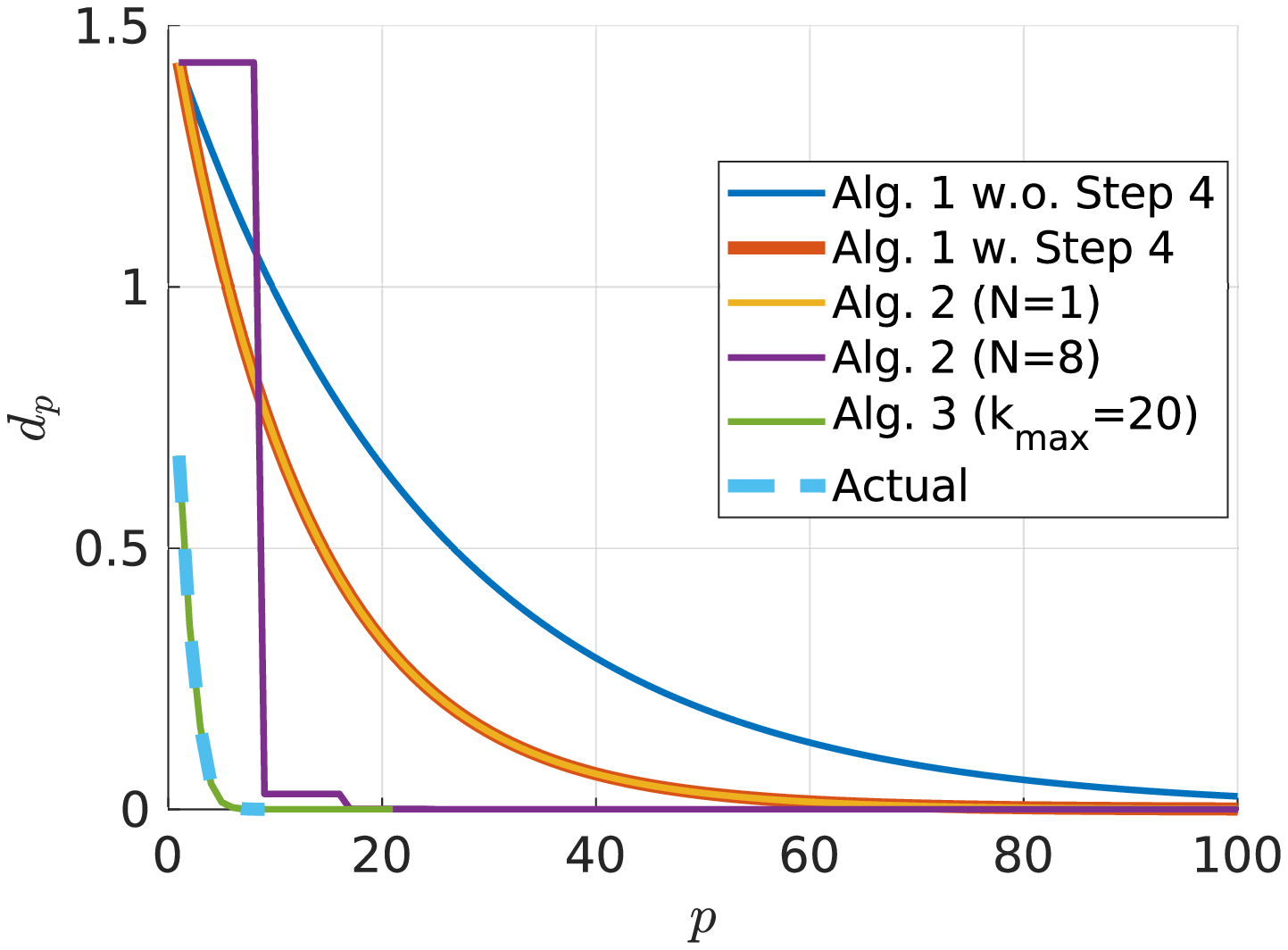}
		\caption{Two-dimensional example}
		\label{fig:ex2_curve}
     \end{subfigure}
     \begin{subfigure}[b]{0.32\textwidth}
		\includegraphics[width=\textwidth]{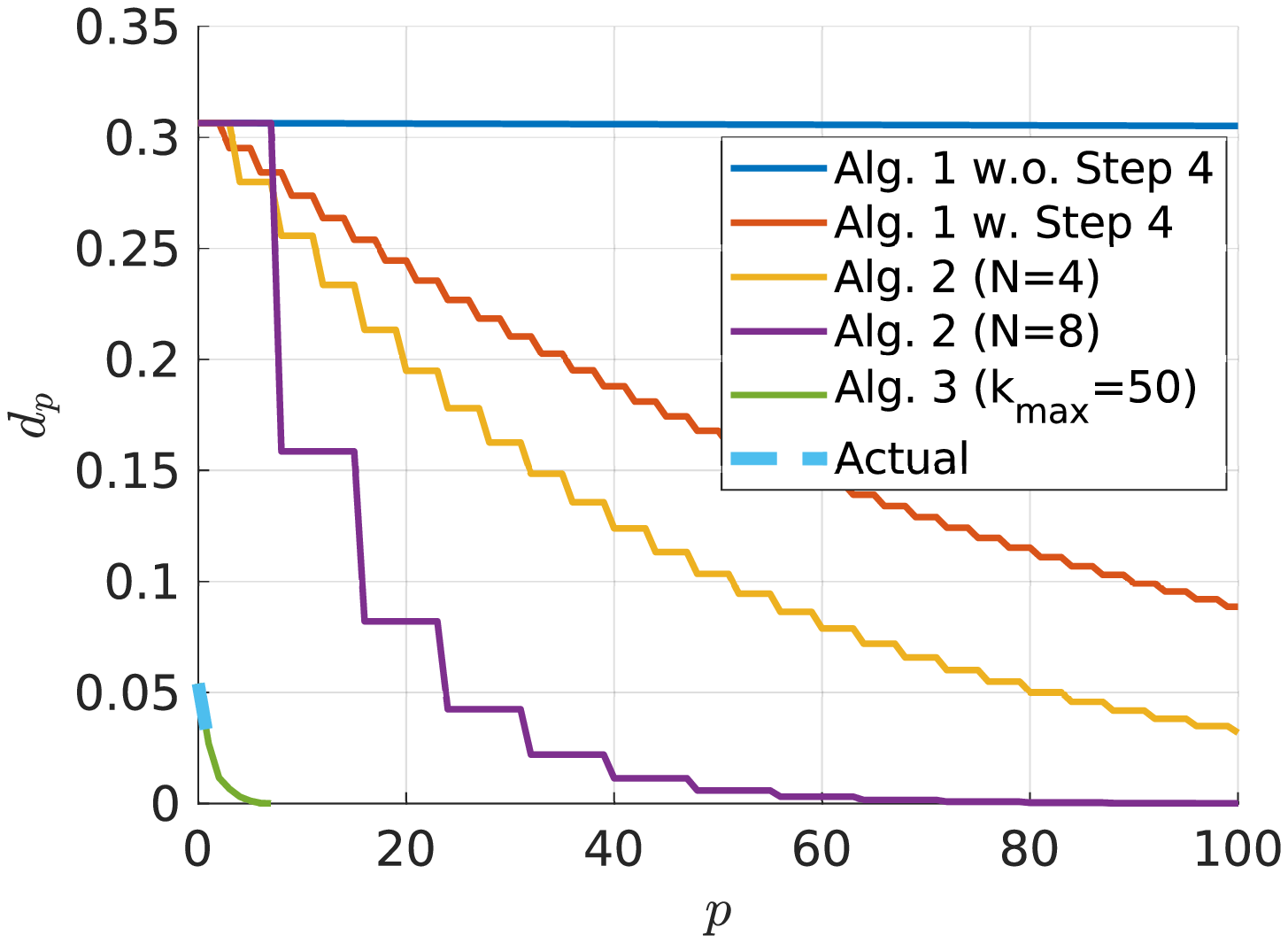}
		\caption{Lane-keeping example}
		\label{fig:lk_curve}
     \end{subfigure}
     \begin{subfigure}[b]{0.32\textwidth}
		\includegraphics[width=\textwidth]{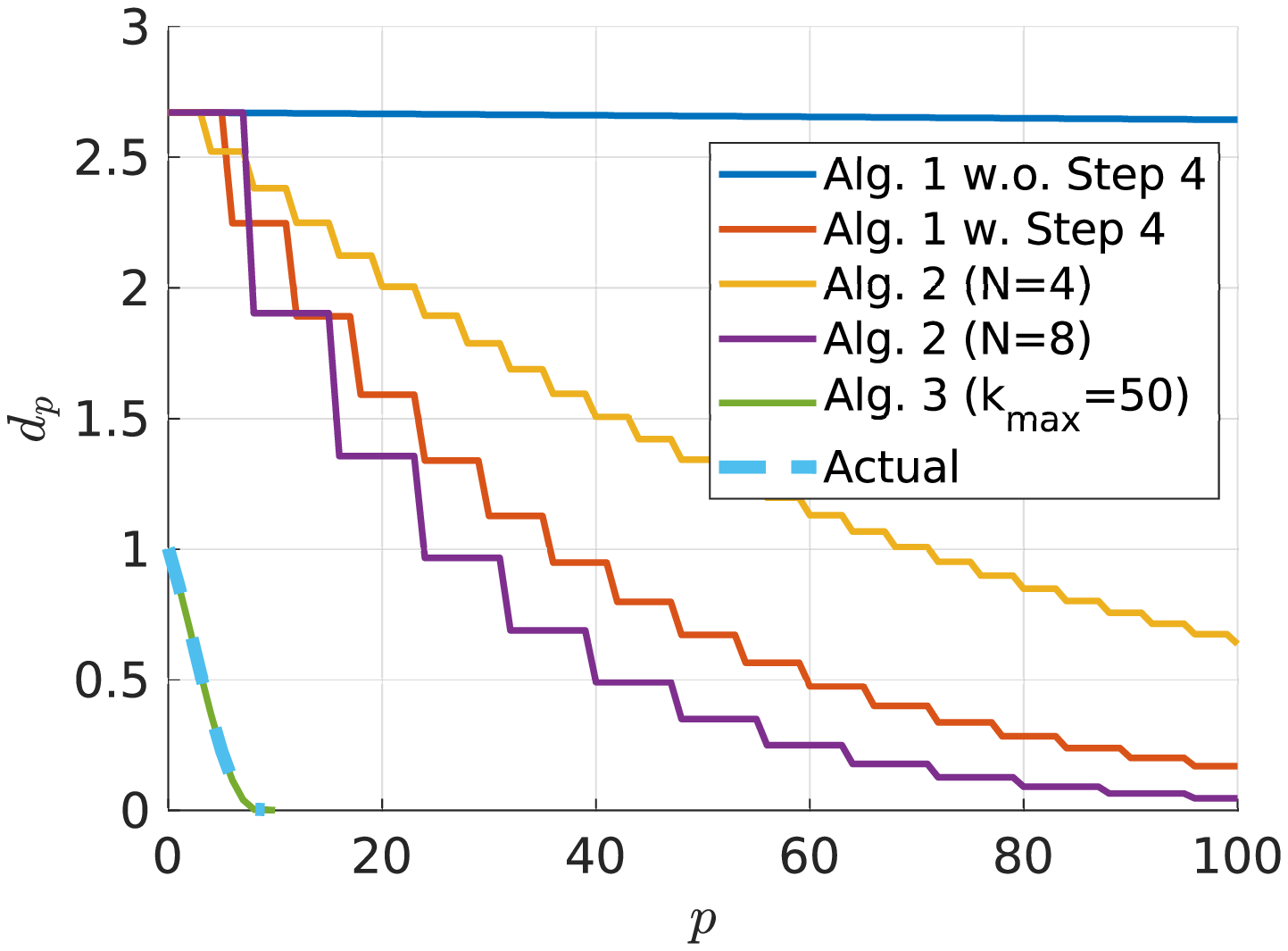}
		\caption{Biped example}
		\label{fig:biped_curve}
     \end{subfigure}
	\caption{The Hausdorff distance $d_{p}$ (light blue dash curve) in \eqref{eqn:dp}  and its upper bounds estimated by Alg. \ref{alg:1} (blue and red curves), Alg. \ref{alg:2} (yellow and purple curves), and Alg. \ref{alg:3} (green curve) versus the preview horizon $p$. }
	\label{fig:curves} 
\end{figure*}

\subsection{Lane-keeping control} \label{sec:lk} 
We consider the linearized bicycle model with respect to the longitudinal velocity $30m/s$ in \cite{smith2016interdependence} with the same parameters. The states $x=(y, v, \Delta \Psi, r)$ are the lateral displacement $y$, lateral velocity $v$, yaw  angle $\Delta \Psi$ and yaw rate $r$; the input is the steering angle $\delta_{f}$; the disturbance $r_{d}$ is the road curvature, with $ \vert r_{d} \vert \leq 0.05$. The safe set of the system is the set of state-input pairs satisfying $ \vert y \vert \leq 0.9$, $ \vert v \vert \leq 1.2$, $ \vert \Delta \Psi \vert \leq 0.05$, $ \vert r \vert \leq 0.3$ and $ \vert \delta_{f} \vert \leq \pi/2$. 

The preview on the road curvature $r_{d}$ is assumed to be available. We select $p_0=0$ and run Algs. \ref{alg:1}, \ref{alg:2} and  \ref{alg:3}. Alg. \ref{alg:3} obtains $ \overline{p} = 7$, implying that $\Proj{n}(C_{max,p})$ converges to $C_{max,co}$ at $p=7$. This observation suggests that the lane-keeping controller gains most from the first $7$-step preview. The upper bounds  on $d_{p}$ from the three algorithms are shown in Fig. \ref{fig:lk_curve}. The upper bound obtained by Alg. \ref{alg:3} is the tightest. The computation time of the three algorithms is listed in the second column of Table \ref{tab:time}.

\subsection{Biped walking pattern generation} \label{sec:biped} 
In \cite{kajita2003biped, wieber2006trajectory}, preview of the zero-moment point (ZMP) reference is used to control the center of mass (CoM) of bipedal robots. We consider the discrete-time lateral dynamics of the biped in \cite{wieber2006trajectory}, with the same parameters. The states $(x, \dot{x}, \ddot{x})$ are the lateral position $x$ of the CoM and its first and second derivatives; the input is the jerk $\dddot{x}$ of $x$. The target is to control the lateral position $x$ such that the ZMP $z = x + (h_{CoM}/g)\ddot{x}$ tracks a given reference closely, where $h_{CoM}$ and $g$ are the robot altitude and the acceleration due to gravity. The ZMP reference is typically a periodic square signal (see \cite{kajita2003biped, wieber2006trajectory}). Here we assume that the reference signal is a trajectory of the uncertain system
\begin{align} \label{eqn:sys_ref} 
   \overline{z} (t+1) = 0.15 \overline{z}(t) + d(t), 
\end{align}
with disturbance $ \vert d \vert \leq 0.085$. This system can generate periodic signals with maximal magnitude up to $0.1$. We couple the biped lateral dynamics and the reference dynamics in \eqref{eqn:sys_ref}, which leads to a four-dimensional system with state-input pairs satisfying $(x, \dot{x}, \ddot{x}, \overline{z})$. Based on the control target, we select the safe set $S_{xu}$ to be the set of state-input pairs satisfying $ \vert x+ (h_{CoM}/g)\ddot{x}-\overline{z} \vert \leq 0.1$, $ \vert x \vert \leq 0.1$, $ \vert \dot{x} \vert \leq 10$, $ \vert \ddot{x} \vert \leq 10$, $ \vert \overline{z} \vert \leq 0.1 $ and $ \vert \dddot{x}\vert \leq 100$.

The preview on the disturbance $d$ in \eqref{eqn:sys_ref} is available, induced from the preview of ZMP reference\cite{wieber2006trajectory}.  We select $p_0=0$. Alg. \ref{alg:3} returns $\overline{p}= 10$, which implies that $\Proj{n}(C_{max,p}) $ converges to $ C_{max,co}$ at $p=10$. This result suggests that the first $10$-step preview on the ZMP reference is most useful for maintaining the lateral position of the robot in the safe set. We also depict the upper bounds of $d_{p}$ obtained by Algs. \ref{alg:1}, \ref{alg:2} and \ref{alg:3} in Fig. \ref{fig:biped_curve}. The conservativeness of those upper bounds are similar to the 2D example in Section \ref{sec:ex2}.  The computation time of the three algorithms is listed in the last column of Table \ref{tab:time}.  

\subsection{Blade pitch control of wind turbine} \label{sec:wind_turbine} 
Preview of incoming wind events can be measured by Lidar scanners, whose usage has been widely studied in the literature of wind turbine control\cite{menezes2018review,sinner2021experimental}. In this subsection, we demonstrate the use of our methods in analyzing the feasible domain of the preview-based constrained MPC in \cite{sinner2021experimental} for blade pitch control of a wind turbine. We consider the discrete-time linear wind turbine dynamics in \cite{sinner2021experimental} with states $x=(\delta \Omega, \int \delta \Omega, \delta \beta)$, input $\Delta \beta$ and disturbance $\delta v$. The states $\delta \Omega$,  $\int \delta \Omega$ and $\delta \beta$ are the rotor speed ($rad/s$) relative to a nominal speed, the integral of $\delta \Omega$ and  the pitch angle  ($\deg$) relative to a nominal angle; the input $\Delta \beta$ is the increment of $\delta \beta$ in one step; the disturbance $\delta v$ is the wind speed relative to a nominal wind speed.  We use the same parameters in \cite{sinner2021experimental}. The safe set $S_{xu}$ is the set of state-input pairs satisfying the state-input constraints $J x+ E\Delta \beta \leq l$ of the MPC proposed in \cite{sinner2021experimental} (with $J$, $E$ and $l$ defined in \cite{sinner2021experimental}), and some large enough state bounds $ \vert \delta \Omega \vert \leq 5$, $ \vert \int \delta \Omega \vert \leq 100$, and $ -4.53 \leq \delta \beta \leq 10.47$ to ensure the compactness of $S_{xu}$. 

The disturbance $\delta v$  on the wind speed can be previewed\cite{sinner2021experimental}. We select $C$ to be the maximal RCIS of the system without preview. Alg. \ref{alg:3} terminates with $\overline{p} = 4$. That is, $\Proj{n}(\mathcal{F}_{p}(C))$ converges to $C_{max,co}$ at $p=4$, which suggests that the MPC benefits most from the first $4$-step preview in terms of feasible domain size.
We also run Algs. \ref{alg:1}  and \ref{alg:2} as in the previous examples. But since Alg. \ref{alg:3} terminates quickly and provides the tightest bound, the  results from the other algorithms are omitted.

\section{Conclusion} \label{sec:conclude} 
In this work, we study the impact of different preview horizons on the safety of discrete-time systems. For general nonlinear system, we derive a novel outer bound on the maximal RCIS $C_{max,p}$. For linear systems, under mild conditions, we prove that the safety regret $d_{p}$ decays to zero exponentially fast, which indicates that the marginal value of the preview information decays to zero exponentially fast with the preview horizon. We further develop algorithms to compute upper bounds on the safety regret $d_{p}$. 
We also adapt the established theoretical and algorithmic tools to analyze the convergence of the feasible domain of a preview-based robust MPC.
The efficiency of the proposed algorithms are verified with four examples. 

In this work, we assume that we have accurate preview of future disturbances for a finite time horizon. For future work, we want to relax this assumption and study the case where the preview is uncertain (that is, the measurements of future disturbances are noisy). This is equivalent to having imperfect state information on the preview part of the states of the augmented system $\Sigma_{p}$ and safety control with imperfect state information is in general hard \cite{yang2020efficient,artstein2011set}. Besides, our current work deals with a single disturbance with a fixed preview horizon. As part of the future work, we want to extend our results to systems with multiple disturbances, where each disturbance may have a different preview horizons. 

\appendix
\subsection{Proof of Theorem \ref{thm:CIS_delay}}
\begin{proof}
First, we want to show the set $C_{max,p,co}$ in \eqref{eqn:C_max_p_co} is a CIS of the system $\mathcal{D}(\Sigma_{p})$ within the safe set  $S_{xu,p}\times D$. Let $(x(0), d_{1:p}(0))$ be any point in $ C_{max,p,co}$. By \eqref{eqn:C_max_p_co} , there exists $u(0)$, ..., $u(p-1)$ such that  $(x(t),u(t))$ is in $S_{xu}$ for $t$ from  $0$ to $p-1$ and $x(p)$ is in $ C_{max,co}$. Since $C_{max,co}$ is controlled invariant for $\mathcal{D}(\Sigma)$, there exist $u(p)$ and $u_d(0)\in D$ such that $(x(p), u(p))$ is in $ S_{xu}$ and $x(p+1) = f(x(p),u(p),u_d(0))$ is in $ C_{max,co}$. Thus, for control inputs $u(0)$ and $u_d(0)$, {it can be checked} that  $(x(0), d_{1:p}(0), u(0), u_d(0))\in  S_{xu,p}\times D$ and the next state $$(x(1), d_{2:p}(0),u_d(0))\in C_{max,p,co},$$ where $x(1)=f(x(0),u(0),d_1(0))$.  Thus, $C_{max,p,co}$ is a CIS of $\mathcal{D}(\Sigma_{p})$ within $S_{xu,p,co}$.

{
Next, denote the maximal CIS of $\mathcal{D}(\Sigma_{p})$ in $S_{xu,p}\times D$ by $\overline{C}_{max,p,co}$. It can be checked that $\Proj{n}(\overline{C}_{max,p,co})$ is a CIS of $\mathcal{D}(\Sigma)$ in $S_{xu}\times D$, and thus is contained in $C_{max,co}$. Thus,
\begin{align}\label{eqn:C_max_co_contain}
\overline{C}_{max,p,co} \subseteq C_{max,co}\times D^{p},
\end{align}
which proves \eqref{eqn:delay_relation}.
Since $\overline{C}_{max,p,co}$ is the maximal CIS of $\mathcal{D}(\Sigma_{p})$ in $S_{xu,p}\times D$, $\overline{C}_{max,p,co} = Pre_{\mathcal{D}(\Sigma_{p})}^{p}(C_{max,p,co})$ and thus by \eqref{eqn:C_max_co_contain},
\begin{align}
\overline{C}_{max,p,co} \subseteq Pre_{\mathcal{D}(\Sigma_{p})}^{p}(C_{max,co}\times D^{p}) = C_{max,p,co}.
\end{align}
Therefore, we prove that $\overline{C}_{max,p,co} = C_{max,p,co}$.}
\end{proof}

\subsection{Proof of Lemma \ref{lem:forced_eq}}
\begin{proof} 
We first show that $\Proj{n}(C_{max,p})$ is convex and compact. Note that
Assumption \ref{asp:cmpt} implies that the safe set $S_{xu,p}$ is convex and compact. Thus, the maximal RCIS $C_{max,p}$ is compact. Also, for linear systems, the convex hull of any RCIS is also a RCIS. Thus, the maximal RCIS of $\Sigma_{p}$ in $S_{xu,p}$ must be convex (otherwise we can take the convex hull of the maximal RCIS and obtain a larger RCIS). Since $C_{max,p}$ is convex and compact, the projection $\Proj{n}(C_{max,p})$ is convex and compact.

Next, we show that $\Proj{n}(C_{max,p})$ is a CIS of $\mathcal{D}(\Sigma)$ within $S_{xu}\times D$ by definition. Let $x $ be an arbitrary point in $\Proj{n}(C_{max,p})$. There exists $d_{1:p}\in D^{p}$ such that $(x,d_{1:p})\in C_{max,p}$. As $C_{max,p}$ is an RCIS of $\Sigma_{p}$, there exists $u_0$ such that $(x,u_0)\in S_{xu}$ such that $(Ax+Bu_0+Ed_1,d_{2:p},d)\in C_{max,p}$ for any $d\in D$. Thus, there exists $u$ and $u_{d}$ such that $(x,u_,u_{d})\in S_{xu}\times D$ and $Ax+Bu+Eu_{d}\in \Proj{n}(C_{max,p})$ (one feasible choice is $u=u_0$, $u_{d}=d_1$). Thus, by definition, $\Proj{n}(C_{max,p})$ is a CIS of $\mathcal{D}(\Sigma)$ in $S_{xu}\times D$.

Finally, \cite[Proof of Theorem 3.3]{anevlavis2019computing} shows\footnote{\cite[Proof of Theorem 3.3]{anevlavis2019computing} only proves the case without input constraints, which can be easily extended to the case with state-input constraints according to \cite[Remark 1]{anevlavis2019computing}. Also, though \cite{anevlavis2019computing} mainly considers controllable systems, the part used to prove (ii) holds for any linear system.} that for any nonempty convex compact CIS $C$ of a linear system $\Sigma$ in safe set $S_{xu}$, there always exists a forced equilibrium $(x_{e},u_{e})\in S_{xu}$ of the system such that $x_{e}\in C$. Thus, (i) implies (ii). 
\end{proof}

\subsection{Proof of Lemma \ref{lem:pre_expand}}
\begin{proof}
We first show that  the $1$-step backward reachable set of the projection $\Proj{n}(C_{max,p})$ is contained in the projection $\Proj{n}(C_{max,p+1})$, that is 
\begin{align} \label{eqn:1_step_contain} 
&Pre_{\mathcal{D}(\Sigma)}(\Proj{n}(C_{max,p}), S_{xu}\times D)
\subseteq \Proj{n}(C_{max,p+1}).    
\end{align}

By Theorem \ref{thm:bnds}, $C_{max,p}\times D \subseteq C_{max,p+1}$, which implies 
\begin{align}
	&Pre_{\Sigma_{p+1}}(C_{max,p}\times D, S_{xu,p+1}) \nonumber\\
	&\subseteq Pre_{\Sigma_{p+1}}(C_{max,p+1},S_{xu,p+1})=  C_{max,p+1}.
\end{align}
Thus, to prove \eqref{eqn:1_step_contain}, it is sufficient to show 
\begin{align}
	&Pre_{\mathcal{D}(\Sigma)}(\Proj{n}(C_{max,p}), S_{xu}\times D) \nonumber\\
	&\subseteq \Proj{n}(Pre_{\Sigma_{p+1}}(C_{max,p}\times D, S_{xu,p+1})).
\end{align}
	We select an arbitrary point $x_0 $ such that $$x_0\in Pre_{\mathcal{D}(\Sigma)}(\Proj{n}(C_{max,p}),S_{xu}\times D).$$ By the definition of $Pre$, there exists $u_0$, $d_0$ such that the point $(x_0,u_0,d_0)$ is in $ S_{xu}\times D$ and the point ${x_1=f(x_0,u_0,d_0)}$ is in $ \Proj{n}(C_{max,p})$. Also, there exists $d_{1:p}\in D^{p}$ such that $(x_1, d_{1;p})\in C_{max,p}$. That is, $(f(x_0,u_0,d_0), d_{1:p},D) \subseteq C_{max,p}\times D$. Thus, $(x_0, d_0, d_{1:p})\in Pre_{\Sigma_{p+1}}(C_{max,p}\times D)$, which implies $x_0\in \Proj{n}(Pre_{\Sigma_{p+1}}(C_{max,p}\times D)).$ Thus, \eqref{eqn:1_step_contain} is proven.

Next, by taking $Pre$ on both sides of \eqref{eqn:1_step_contain}, we have
\begin{align} \label{eqn:2_step_contain} 
&Pre^{2}_{\mathcal{D}(\Sigma)}(\Proj{n}(C_{max,p}), S_{xu}\times D) \nonumber\\
&\subseteq Pre_{\mathcal{D}(\Sigma)}(\Proj{n}(C_{max,p+1}), S_{xu}\times D)\nonumber\\
& \subseteq  \Proj{n}(C_{max,p+2}),
\end{align}
where the second inclusion is due to \eqref{eqn:1_step_contain}. Following the pattern in \eqref{eqn:1_step_contain} and \ref{eqn:2_step_contain}, \eqref{eqn:k_step_contain} follows by induction. 
\end{proof}

\subsection{Proof of Theorem \ref{thm:conv_suff}}
{Recall that by Remark \ref{rem:origin}, Assumptions \ref{asp:int} and \ref{asp:int2}, the sets $S_{xu}\times D$ and $\Proj{n}(C_{max,p_0})$ all contain the origin in their interioirs.}
To prove Theorem \ref{thm:conv_suff}, we first need the following lemma. 
\begin{lemma} \label{lem:one_step_conv} 
Under the same conditions of Theorem \ref{thm:conv_suff}, for any $ \xi\in (0,1]$, the $N$-step backward reachable set of $ \xi C_{max,co}$ satisfies that for $ \xi \leq  \lambda \gamma$, 
\begin{align} \label{eqn:one_step_conv_1} 
	Pre_{\mathcal{D}(\Sigma)}^{N}( \xi C_{max,co}, S_{xu}\times D) \supseteq\frac{\xi}{ \lambda} C_{max,co}; 
\end{align}
for $\xi > \lambda\gamma$,
\begin{align}
	&Pre_{\mathcal{D}(\Sigma)}^{N}( \xi C_{max,co}, S_{xu}\times D)  \supseteq g(\xi) C_{max,co}, \label{eqn:one_step_conv_2} 
\end{align}
where $g(\xi)= \xi(1-\gamma)/(1-\gamma \lambda) + \gamma(1-\lambda)/(1-\gamma \lambda)$.
\end{lemma}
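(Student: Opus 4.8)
The plan is to reduce the whole statement to a single \emph{superadditivity} property of the $N$-step backward reachable operator for linear systems, and then recover the two cases by scaling and convex interpolation of the two facts we are handed: the invariance of $C_{max,co}$ (so that $C_{max,co} \subseteq Pre_{\mathcal{D}(\Sigma)}^{N}(C_{max,co}, S_{xu}\times D)$, since $C_{max,co}$ is a CIS), and the $N$-step $\lambda$-contractivity of $\gamma C_{max,co}$ (so that $\gamma C_{max,co} \subseteq Pre_{\mathcal{D}(\Sigma)}^{N}(\lambda\gamma C_{max,co}, S_{xu}\times D)$). Throughout I write $S := S_{xu}\times D$, which by Assumption \ref{asp:cmpt} is convex and compact and, after the shift in Remark \ref{rem:origin}, contains the origin; and I use that for linear systems $C_{max,co}$ is convex and contains the origin (the origin being a forced equilibrium in $C_{max,co}$).

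First I would establish the key lemma: for any sets $X_1,X_2$ and any $\theta\in[0,1]$,
\[
\theta\, Pre_{\mathcal{D}(\Sigma)}^{N}(X_1,S) + (1-\theta)\, Pre_{\mathcal{D}(\Sigma)}^{N}(X_2,S) \subseteq Pre_{\mathcal{D}(\Sigma)}^{N}\big(\theta X_1 + (1-\theta) X_2,\, S\big).
\]
The argument takes a point from each backward reachable set together with a witnessing control-and-state trajectory, forms the $\theta$-convex combination of the two control sequences $(u_t,u_{d,t})$, and invokes linearity of $f$ in \eqref{eqn:sys_linear} to conclude that the resulting trajectory is exactly the convex combination of the two trajectories. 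Convexity of $S$ then keeps every intermediate triple $(x_t,u_t,u_{d,t})$ in $S$, and the terminal state lands in $\theta X_1 + (1-\theta)X_2$.

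The two cases then follow quickly. Taking $X_2=\{0\}$ and using $0\in Pre_{\mathcal{D}(\Sigma)}^{N}(\{0\},S)$ (the origin is a forced equilibrium) yields the \emph{scaling-down} inclusion $\mu\, Pre_{\mathcal{D}(\Sigma)}^{N}(X,S)\subseteq Pre_{\mathcal{D}(\Sigma)}^{N}(\mu X,S)$ for $\mu\in[0,1]$. For $\xi\le\lambda\gamma$, I set $\mu=\xi/(\lambda\gamma)\in(0,1]$, write $\xi C_{max,co}=\mu\cdot\lambda\gamma C_{max,co}$, and combine scaling-down with the contractivity fact to get $Pre_{\mathcal{D}(\Sigma)}^{N}(\xi C_{max,co},S)\supseteq \mu\, Pre_{\mathcal{D}(\Sigma)}^{N}(\lambda\gamma C_{max,co},S)\supseteq \mu\gamma C_{max,co}=(\xi/\lambda)C_{max,co}$, which is \eqref{eqn:one_step_conv_1}. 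For $\xi>\lambda\gamma$, I choose $\theta=(1-\xi)/(1-\lambda\gamma)\in[0,1)$ so that $\theta\cdot\lambda\gamma+(1-\theta)\cdot 1=\xi$, apply the key lemma with $X_1=\lambda\gamma C_{max,co}$ and $X_2=C_{max,co}$, and bound the right-hand side below using both given facts. Convexity of $C_{max,co}$ (so $\alpha C_{max,co}+\beta C_{max,co}=(\alpha+\beta)C_{max,co}$ for $\alpha,\beta\ge0$) collapses the combinations to give $Pre_{\mathcal{D}(\Sigma)}^{N}(\xi C_{max,co},S)\supseteq(\theta\gamma+1-\theta)C_{max,co}$, and a one-line simplification verifies $\theta\gamma+1-\theta=g(\xi)$, establishing \eqref{eqn:one_step_conv_2}.

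The main obstacle is getting the key lemma exactly right, including the hypotheses it silently relies on: linearity of $\mathcal{D}(\Sigma)$ (so convex combinations of trajectories are again trajectories), convexity of $S$ together with $0\in S$ (so combined trajectories stay feasible and $\{0\}$ is a legitimate degenerate target), and convexity of $C_{max,co}$ with $0\in C_{max,co}$ (to merge scalar multiples and to run the scaling-down step). These are precisely what Assumption \ref{asp:cmpt} and Remark \ref{rem:origin} supply, so I would spell them out explicitly. Everything after the key lemma is elementary scalar algebra; the only delicate point is checking that the interpolation weights $\mu$ and $\theta$ land in $[0,1]$ across the full range of $\xi$, which is exactly what forces the case split at $\xi=\lambda\gamma$.
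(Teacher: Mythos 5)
Your proposal is correct and follows essentially the same route as the paper's own proof: both arguments rest on superadditivity of the $N$-step $Pre$ operator for the linear, disturbance-free system $\mathcal{D}(\Sigma)$ under convex constraints containing the origin, and your Case-2 interpolation weights (mass $\theta\lambda\gamma$ on the contractive piece and $1-\theta$ on the invariant piece, with $\theta=(1-\xi)/(1-\lambda\gamma)$) coincide exactly with the paper's decomposition $\xi=\xi_1+\lambda\xi_2$ and its safe-set split $(\xi_1,\,\xi_2/\gamma)$. The only difference is packaging: the paper works with two separate facts, the scaling identity \eqref{eqn:pre_linear} and the Minkowski-sum inequality \eqref{eqn:pre_ineq} with split safe sets, whose composition is precisely your single convex-combination key lemma, so the two proofs are mathematically interchangeable.
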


\begin{proof}
First, it can be checked that for any $ \gamma' \leq  \gamma$, $\gamma' C_{max,co}$ is an $N$-step $ \lambda$-contractive CIS of $\mathcal{D}(\Sigma)$ in $\gamma'/\gamma (S_{xu}\times D)$, which is a subset of $S_{xu}\times D$. 

Case 1: $ \xi \leq  \lambda \gamma$. Since $\xi/ \lambda \leq  \gamma$, $ (\xi/ \lambda) C_{max,co}$ is an $N$-step $ \lambda$-contractive CIS of $\mathcal{D}(\Sigma)$ in $S_{xu}\times D$, which implies \eqref{eqn:one_step_conv_1}. 

Case 2: $\xi > \lambda \gamma$. We first separate $\xi $ into two parts, that is
\begin{align} \label{eqn:C_co_decomp} 
    \xi = \xi_1 +\lambda \xi_2 
\end{align}
where   $\xi_1=  {(\xi - \lambda \gamma)}/{(1- \lambda \gamma)}$, $\xi_2= \gamma( {(1 - \xi)}/{(1- \lambda \gamma)} )$.

Since $C_{max,co}$ is convex, we can separate $\xi C_{max,co}$ into two parts, that is
  $ \xi C_{max,co} = \xi_1 C_{max,co} + \lambda \xi_2 C_{max,co}.$
It can be shown that for any convex set $C$, and any $a$, $b\in [0,1]$, 
{
	\begin{align} \label{eqn:pre_ineq} 
	&Pre_{\mathcal{D}(\Sigma)}^{N}(C, S_{xu}\times D) \supseteq Pre_{\mathcal{D}(\Sigma)}^{N}(aC, b(S_{xu}\times D))+\nonumber \\
	& \ \ \ \ Pre_{\mathcal{D}(\Sigma)}^{N}((1-a)C, (1-b)(S_{xu}\times D)).
\end{align}
By \eqref{eqn:C_co_decomp} and \eqref{eqn:pre_ineq}, we have
\begin{align} \label{eqn:lemma_3_0} 
 &Pre_{\mathcal{D}(\Sigma)}^{N}(\xi C_{max,co}, S_{xu}\times D) \supseteq \nonumber\\
 &\ \ Pre_{\mathcal{D}(\Sigma)}^{N}( \xi_1 C_{max,co}, \xi_1 (S_{xu}\times D)) +\nonumber\\
 &\ \ \ \ Pre_{\mathcal{D}(\Sigma)}^{N}(  \lambda\xi_2 C_{max,co}, (1- \xi_1)(S_{xu}\times D)).
\end{align}
}
Since $ \xi_1 C_{max,co}$ is a CIS of $\mathcal{D}(\Sigma)$ in $ \xi_1 (S_{xu}\times D)$, we have
\begin{align} \label{eqn:lemma_3_1}
    Pre_{\mathcal{D}(\Sigma)}^{N}( \xi_1 C_{max,co}, \xi_1 (S_{xu}\times D)) \supseteq \xi_1 C_{max,co}.
\end{align}
Note that $1- \xi_1 = \xi_2/ \gamma$. Also, for a linear $\mathcal{D}(\Sigma)$, we have that for any set $C$ and $a>0$, 
\begin{align} \label{eqn:pre_linear} 
  Pre_{\mathcal{D}(\Sigma)}^{N}(C, S_{xu}\times D) = \frac{1}{a}  Pre_{\mathcal{D}(\Sigma)}^{N}(aC, a(S_{xu}\times D)).
\end{align}
Thus, by \eqref{eqn:pre_linear},  
\begin{align} \label{eqn:lemma_3_2} 
   &Pre_{\mathcal{D}(\Sigma)}^{N}(  \lambda\xi_2 C_{max,co}, (1- \xi_1)(S_{xu}\times D))\nonumber\\ 
   =&Pre_{\mathcal{D}(\Sigma)}^{N}(  \lambda\xi_2 C_{max,co}, \xi_2/ \gamma(S_{xu}\times D))\nonumber\\
   =& (\xi_2/ \gamma) Pre_{\mathcal{D}(\Sigma)}^{N}(  \lambda \gamma C_{max,co}, S_{xu}\times D)\nonumber\\
   \supseteq & ( \xi_2/ \gamma) \gamma C_{max,co} = \xi_2 C_{max,co}. 
\end{align}
By \eqref{eqn:lemma_3_0}, \eqref{eqn:lemma_3_1} and \eqref{eqn:lemma_3_2}, we have 
\begin{align}
   Pre_{\mathcal{D}(\Sigma)}^{N}(\xi C_{max,co}, S_{xu}\times D) \supseteq 
   ( \xi_1 + \xi_2) C_{max,co}
\end{align}
{One can see that} $ \xi_1 + \xi_2 = 1- (1-\xi){(1- \gamma)}/{ (1- \gamma \lambda)}, $
which proves \eqref{eqn:one_step_conv_2}. 
\end{proof}

Now we are ready to prove Theorem \ref{thm:conv_suff}.

\emph{Proof of Theorem \ref{thm:conv_suff}:} We only prove the case of $ \lambda_0 \leq \lambda \gamma$, since the case of $\lambda_0 > \lambda \gamma$ can be easily extended from this proof. Given $ \lambda_0 \leq  \lambda \gamma$, by \eqref{eqn:one_step_conv_1}, we have 
\begin{align} \label{eqn:thm_4_1} 
	Pre_{\mathcal{D}(\Sigma)}^{N}( \lambda_0 C_{max,co}, S_{xu}\times D) \supseteq\frac{\lambda_0}{ \lambda} C_{max,co}.
\end{align}
If $ \lambda_0/ \lambda \leq  \lambda\gamma$, we can compute the $N$-step backward reachable sets of both sides of \eqref{eqn:thm_4_1} and apply \eqref{eqn:one_step_conv_1} again, which leads to
\begin{align} \label{eqn:thm_4_1_2} 
	Pre_{\mathcal{D}(\Sigma)}^{2N}( \lambda_0 C_{max,co}, S_{xu}\times D)	&\supseteq Pre_{\mathcal{D}(\Sigma)}^{N}(\frac{\lambda_0}{ \lambda} C_{max,co})\nonumber\\
	&\supseteq \frac{\lambda_0}{ \lambda^{2}} C_{max,co}.
\end{align}
Following the pattern of \eqref{eqn:thm_4_1} and \eqref{eqn:thm_4_1_2}, as long as $ \lambda_0/ \lambda^{k-1} \leq  \lambda \gamma$, we can prove by induction that 
\begin{align} \label{eqn:thm_4_2} 
	Pre_{\mathcal{D}(\Sigma)}^{kN}( \lambda_0 C_{max,co}, S_{xu}\times D)\supseteq \frac{\lambda_0}{ \lambda^{k}} C_{max,co}.
\end{align}
We denote the minimal $k$ such that $ \lambda_0/ \lambda^{k}> \lambda \gamma$ by $k_0$. {The expression of $k_0$ is}
\begin{align} \label{eqn:thm_4_3}
    k_0 = \bigg\lceil  \frac{ \log \lambda_0 - \log \gamma}{ \log \lambda} -1 \bigg\rceil.
\end{align}
By \eqref{eqn:thm_4_2} and \eqref{eqn:thm_4_3}, we have proven \eqref{eqn:C_max_conv_1} for $k\leq  k_0$. 

Next, since $ \lambda_0/ \lambda^{k_0} > \lambda \gamma$, by taking $N$-step backward reachable sets on both sides of \eqref{eqn:thm_4_2} with $k=k_0$ and applying \eqref{eqn:one_step_conv_2}, we have
\begin{align} \label{eqn:thm_4_5} 
	&Pre_{\mathcal{D}(\Sigma)}^{(k_0+1)N}( \lambda_0 C_{max,co}, S_{xu}\times D)\nonumber\\
	&\supseteq Pre_{\mathcal{D}(\Sigma)}^{N}\left(\frac{\lambda_0}{ \lambda^{k_0}} C_{max,co}, S_{xu}\times D\right)\nonumber\\
	& \supseteq g( \lambda_0/ \lambda^{k_0}) C_{max,co},
\end{align}
where the function $g(\xi)$ is defined in Lemma \ref{lem:one_step_conv}. We define recursively
$g^{1}(\xi) =  g(\xi)$, and $
g^{k}(\xi) =  g(g^{k-1}(\xi))$.
It can be checked that if $\xi > \lambda \gamma$, then $g^{k}(\xi) > \lambda \gamma$ for all $k\geq 1$. Thus, $g^{k}( \lambda_0/ \lambda^{k_0}) > \lambda \gamma$ for all $k\geq 1$. Following the pattern of \eqref{eqn:thm_4_5}, we can prove by induction that 
\begin{align} \label{eqn:thm_4_6}  
	&Pre_{\mathcal{D}(\Sigma)}^{(k_0+k)N}( \lambda_0 C_{max,co}, S_{xu}\times D) 
	\supseteq  g^k( \lambda_0/ \lambda^{k_0}) C_{max,co}
\end{align}
Finally, it can be checked that 
\begin{align} \label{eqn:thm_4_7}
    g^{k}(\xi) = 1- (1-\xi)\left( \frac{1- \gamma}{1- \gamma \lambda}\right)^{k-k_0} .
\end{align}
By \eqref{eqn:thm_4_6} and \eqref{eqn:thm_4_7}, we have proven \eqref{eqn:C_max_conv_2} for $k\geq k_0$. 

The proof for the case of $ \lambda_0 > \lambda \gamma$ follows exactly the same arguments from \eqref{eqn:thm_4_5} to \eqref{eqn:thm_4_7}, with $k_0=0$.\quad \quad \quad \quad\quad\quad\quad\ \QED

\subsection{Proof of Lemma \ref{lem:contraction} }
\begin{proof}
By the proof of Proposition 23 of \cite{de2004computation}, since the system $\mathcal{D}(\Sigma)$ is stabilizable and the safe set $S_{xu}\times D$ contains the origin in the interior, there exists a $\lambda_{a}$-contractive ellipsoidal CIS $ \mathcal{E}$ within the safe set $S_{xu}\times D$ for some $ \lambda_{a} \in [0,1)$. Clearly, $ \mathcal{E} \subseteq C_{max,co}$, and thereby $C_{max,co}$ contains the origin in the interior.  

Let $ \lambda_{in}$ be the maximal $ \lambda$ such that $ \lambda C_{max,co} \subseteq \mathcal{E}$. As $\mathcal{E}$ contains the origin in the interior, $ \lambda_{in} > 0$. Since $ \mathcal{E}$ is $ \lambda$-contractive, we have 
\begin{align}
\lambda_{in} C_{max,co}	 \subseteq \mathcal{E} \subseteq Pre^{k}_{ \mathcal{D}(\Sigma)}( \lambda_{a}^{k}\mathcal{E},S_{xu}\times D).
\end{align}
Let $ \lambda$ be an arbitrary number in $(0,1)$. Since $ \lambda_{a} \in [0,1)$ and $\mathcal{E} \subseteq C_{max,co}$, there exists $N>0$ such that $ \lambda_{a}^{N} \mathcal{E} \subseteq \lambda \lambda_{in} C_{max,co} $. Thus,
\begin{align}
	\lambda_{in} C_{max,co} &\subseteq   Pre_{\mathcal{D}(\Sigma)}^{N}( \lambda_{a}^{N} \mathcal{E}, S_{xu}\times D)\\
							&\subseteq Pre_{\mathcal{D}(\Sigma)}^{N}( \lambda \lambda_{in} C_{max,co}, S_{xu}\times D).
\end{align}
Thus, $ \lambda_{in} C_{max,co}$  is an $N$-step $ \lambda$-contractive CIS of  $ \mathcal{D}(\Sigma)$ in $S_{xu}\times D$.
\end{proof}

\subsection{Proof of Theorem \ref{thm:parameters_asym}}
\begin{proof}
By the definition of $c_0$, $\mathcal{E}(c_0)$ is a $ \lambda_{a}$-contractive CIS of $\mathcal{D}(\Sigma)$ in $S_{xu}\times D$. Therefore, $\mathcal{E}(c_0)$ is a subset of $C_{max,co}$. Also, $ \gamma C_{max,co} \subseteq \mathcal{E}( \gamma c_{out})= \mathcal{E}(c_0)$. Thus, we have
\begin{align}
    \gamma C_{max,co} \subseteq \mathcal{E}(c_0) \subseteq C_{max,co}.
\end{align}
Since $ \mathcal{E}(c_0)$ is $\lambda_{a}$-contractive, we have for any $k  \geq 1$
\begin{align} \label{eqn:thm_6_1} 
   Pre_{\mathcal{D}(\Sigma)}^{k}( \lambda_{a}^{k}\mathcal{E}(c_0), S_{xu}\times D) \supseteq \mathcal{E}(c_0) \supseteq \gamma C_{max,co}.
\end{align}
Since $C_{max,co} \supseteq \mathcal{E}(c_0)$, 
\begin{align} \label{eqn:thm_6_2} 
   Pre_{\mathcal{D}(\Sigma)}^{k}( \lambda_{a}^{k}C_{max,co},S_{xu}\times D) \nonumber\\
   \supseteq  Pre_{\mathcal{D}(\Sigma)}^{k}( \lambda_{a}^{k} \mathcal{E}(c_0),S_{xu}\times D) 
\end{align}
By \eqref{eqn:thm_6_1} and \eqref{eqn:thm_6_2},
\begin{align}
   Pre_{\mathcal{D}(\Sigma)}^{k}( \lambda_{a}^{k}C_{max,co},S_{xu}\times D) \supseteq \gamma C_{max,co}.
\end{align}
By definition of $N$, $\lambda_{a}^{N} <\gamma$. Thus, $ \lambda  = \lambda_{a}^{N}/ \gamma<1$.
\begin{align}
   &Pre_{\mathcal{D}(\Sigma)}^{N}(  \lambda \gamma C_{max,co},S_{xu}\times D)\nonumber \\
	=&Pre_{\mathcal{D}(\Sigma)}^{N}( \lambda_{a}^{N}C_{max,co},S_{xu}\times D) \supseteq \gamma C_{max,co}.
\end{align}
That is, $ \gamma C_{max,co}$ is $N$-step $\lambda$-contractive.
\end{proof}

\subsection{Proof of Lemma \ref{lem:contraction_0}}
\begin{proof}
To prove the lemma, it suffices to show that $Pre_{\mathcal{D}(\Sigma)}^{n}(0,S_{xu}\times D)$ contains the origin in the interior.   Since the system is controllable, there exists a feedback gain  $K = [K_1\ K_2]$ such that the closed-loop system matrix $A_{c} = A+K_1B+ K_2E$ has all eigenvalues equal to zero. That implies $A_{c}^{n}=0$. In other words, under the control of $u=K_1x$, $u_{d}=K_2x$, the closed-loop system reaches the origin $0$ at step $n$ for all initial states $x_0\in \R^{n}$. Now let us consider the unit ball $\mathcal{B}$ in $\R^{n}$. Since $S_{xu}\times D$ contains the origin in the interior, there exists a positive scalar $\xi >0$ such that  $ \xi (A_{c}^{k}\mathcal{B}\times K A_{c}^{k}\mathcal{B}) $ is contained in $S_{xu}\times D$ for $k$ from $0$ to $n-1$. That implies that $ \xi \mathcal{B} $ is a subset of $Pre_{\mathcal{D}(\Sigma)}^{n}(\{0\}, S_{xu}\times D )$. Thus, $Pre_{\mathcal{D}(\Sigma)}^{n}(0,S_{xu}\times D)$ contains the origin in the interior. 

Since $Pre_{\mathcal{D}(\Sigma)}^{n}(0,S_{xu}\times D)$ is a CIS of  $\mathcal{D}(\Sigma)$ in $S_{xu}\times D$, the maximal CIS $C_{max,co}$ must contain $Pre_{\mathcal{D}(\Sigma)}^{n}(0,S_{xu}\times D)$ and thus contain the origin in the interior. Thus, there exists a positive scalar $ \gamma$ such that $ \gamma C_{max,co}$ is contained in $Pre_{\mathcal{D}(\Sigma)}^{n}(0,S_{xu}\times D)$.
\end{proof}

\bibliography{ref}{}
\bibliographystyle{plain}

\begin{IEEEbiography}[{\includegraphics[width=1in,height=1.25in,clip,keepaspectratio]{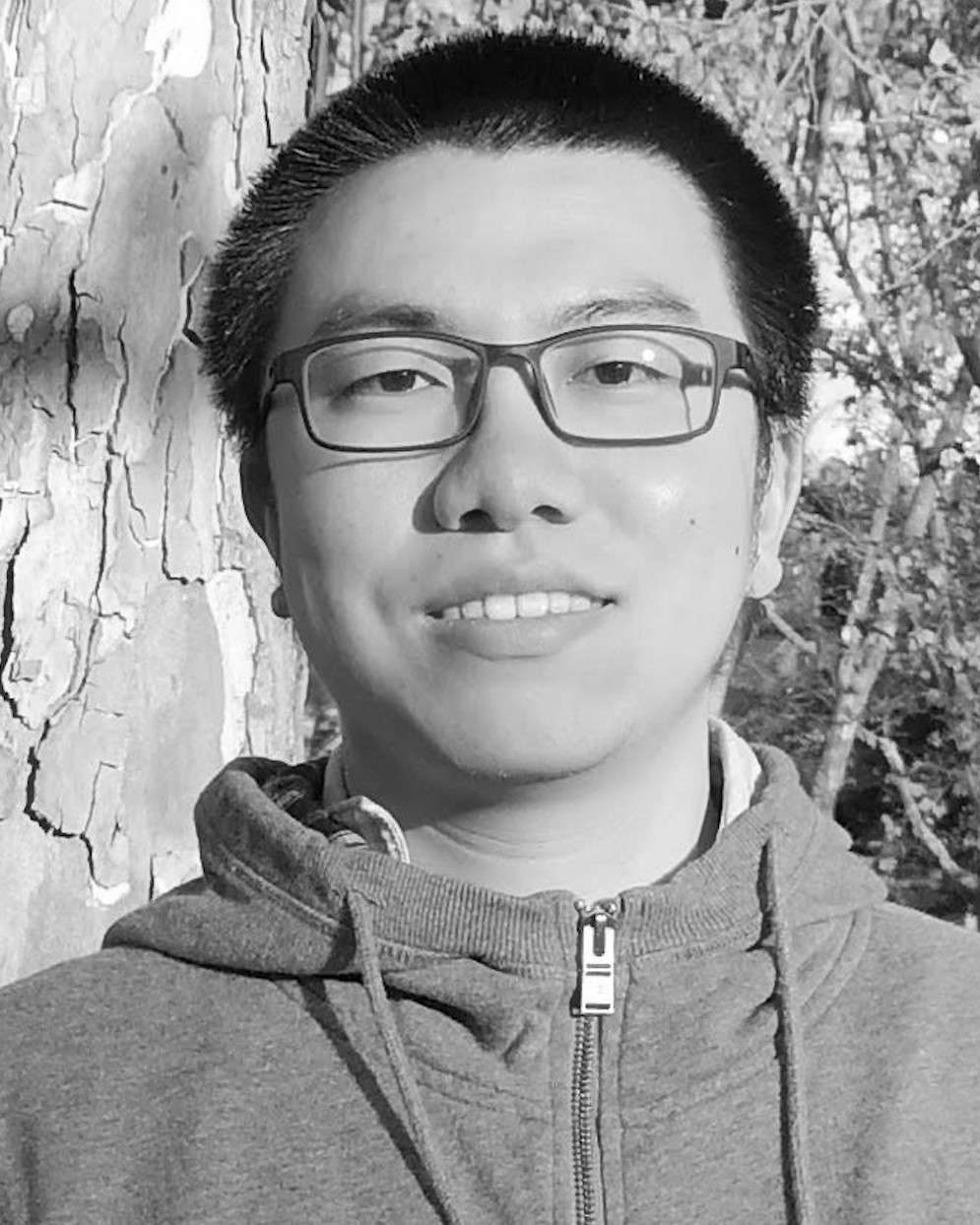}}]
	{Zexiang Liu} (Student Member, IEEE) received the B.S. degree in Engineering from Shanghai Jiao Tong University, Shanghai, China, in 2016, and the M.S. degree in Electrical and Computer Engineering from University of Michigan, Ann Arbor, MI, USA, in 2018. He is currently pursuing the Ph.D. degree in Electrical and Computer Engineering at the University of Michigan, Ann Arbor, MI, USA. 
	
	His current research interests lie in formal synthesis and verification for safety-critical systems, safe autonomy and system identification.
\end{IEEEbiography}

\begin{IEEEbiography}[{\includegraphics[width=1in,height=1.25in,clip,keepaspectratio]{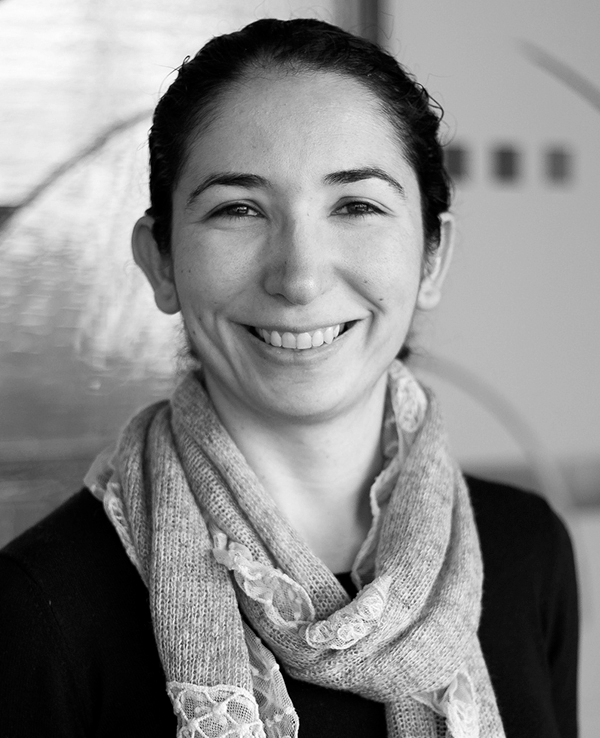}}]
	{Necmiye Ozay} (Senior Member, IEEE) received the B.S. degree from Bogazici University, Istanbul in 2004, the M.S. degree from the Pennsylvania State University, University Park in 2006 and the Ph.D. degree from Northeastern University, Boston in 2010, all in electrical engineering. Between 2010 and 2013, she was a postdoctoral scholar at California Institute of Technology, Pasadena. She joined the University of Michigan, Ann Arbor in 2013, where she is currently an associate professor of Electrical Engineering and Computer Science and Robotics. Her research interests include dynamical
	systems, control, optimization, formal methods with applications in cyber-physical systems, system identification, verification and validation, and safe autonomy.
\end{IEEEbiography}

\end{document}